\def\dOi{11(3:7)2015}
\keywords{nested sequents, cut-elimination, modal logic,
  intuitionistic logic}
\newcommand{\hide}[1]{} 
\newcommand{\RInt}{R}
\theoremstyle{theorem}
\newtheorem{theorem}{Theorem}[section]
\newtheorem{lemma}[theorem]{Lemma} \newtheorem{proposition}[theorem]{Proposition}
\newtheorem{conjecture}[theorem]{Conjecture}
\theoremstyle{definition}
\newtheorem{example}[theorem]{Example}
\newtheorem{definition}[theorem]{Definition}
\newtheorem{observation}[theorem]{Observation}
\newtheorem{remark}[theorem]{Remark}
\newtheorem{fakt}[theorem]{Fact}
\newcommand{\refereedone}[1]{}
\newcommand{\ryutadone}[1]{}
\newcommand{\anupamdone}[1]{}
\newcommand{\lutzdone}[1]{}
\def\cM{{\mathcal M}}
\def\Nat{\mathbb{N}}
\def\K{\mathsf{K}}
\def\set#1{\{#1\}}
\def\cons#1{\{#1\}}
\def\conhole      {\cons{\enspace}}%
\def\tuple#1{\langle#1\rangle}
\def\grammareq {\mathrel{\raise.4pt\hbox{::}{=}}}%
\newcommand{\dotto}[1][]{\mathrel{\!\xy\ar@{.>}^-{#1}(5,0)\endxy\!}}
\newcommand{\solto}[1][]{\mathrel{\!\xy\ar@{->}^-{#1}(5,0)\endxy\!}}
\newcommand{\longsolto}[1][]{\mathrel{\!\xy\ar@{->}^-{#1}(11,0)\endxy\!}}
\newcommand{\longdotto}[1][]{\mathrel{\!\xy\ar@{.>}^-{#1}(11,0)\endxy\!}}
\newcommand{\xldotto}[2][]{\mathrel{\!\xy\ar@{.>}^-{#1}(#2,0)\endxy\!}}
\def\cutr{\mathsf{cut}}
\def\idr{\mathsf{id}}
\def\weakr{\mathsf{weak}}
\def\conr{\mathsf{cont}}
\def\rr{\mathsf{r}}
\def\cand{\wedge}
\def\cor{\vee}
\newbox\cutbox
\newdimen\cutwd
\newdimen\cutht
\newdimen\cutdp
\def\ccut{%
  \setbox\cutbox\hbox{$\lozenge$}
  \cutwd=\wd\cutbox
  \cutht=\ht\cutbox
  \cutdp=\dp\cutbox
  \setbox\cutbox\hbox to\cutwd{\hss\vrule width.3pt height\cutht depth\cutdp\hss}
  %%%%%\mathbin{\rlap{$\lozenge$}\copy\cutbox}}
  \mathbin{\lozenge\hskip-\cutwd\copy\cutbox}}
\def\scriptcut{%
  \setbox\cutbox\hbox{$\scriptstyle\lozenge$}
  \cutwd=\wd\cutbox
  \cutht=\ht\cutbox
  \cutdp=\dp\cutbox
  \setbox\cutbox\hbox to\cutwd{\hss\vrule width.3pt height\cutht depth\cutdp\hss}
  \mathord{\lozenge\hskip-\cutwd\copy\cutbox}}
\def\vccut{%
  \setbox\cutbox\hbox{$\lozenge$}
  \cutwd=\wd\cutbox
  \cutht=\ht\cutbox
  \cutdp=\dp\cutbox
  \setbox\cutbox\hbox to\cutwd{\hss\hskip.3pt\vrule width.3pt height\cutht depth\cutdp\hss}
  %%%%%\mathbin{\rlap{$\lozenge$}\copy\cutbox}}
  \mathbin{\lozenge\hskip-\cutwd\copy\cutbox}}
\def\conldel {\{}%
\def\conrdel {\}}%
\def\lrgldel {\mathchoice{(}{(}{\langle}{\langle}}%
\def\lrgrdel {\mathchoice{)}{)}{\rangle}{\rangle}}%
\def\aprldel {\mathchoice
    {\mathopen {\setbox0=\hbox{$\displaystyle     \lrgldel$}\hbox to\wd0
                         {\hfil$\displaystyle     (       $\hfil}}}%
    {\mathopen {\setbox0=\hbox{$\textstyle        \lrgldel$}\hbox to\wd0
                         {\hfil$\textstyle        (        $\hfil}}}%
    {\mathopen {\setbox0=\hbox{$\scriptstyle      \lrgldel$}\hbox to\wd0
                         {\hfil$\scriptstyle      (        $\hfil}}}%
    {\mathopen {\setbox0=\hbox{$\scriptscriptstyle\lrgldel$}\hbox to\wd0
                         {\hfil$\scriptscriptstyle(        $\hfil}}}}%
\def\aprrdel {\mathchoice
    {\mathclose{\setbox0=\hbox{$\displaystyle     \lrgrdel$}\hbox to\wd0
                         {\hfil$\displaystyle     )       $\hfil}}}%
    {\mathclose{\setbox0=\hbox{$\textstyle        \lrgrdel$}\hbox to\wd0
                         {\hfil$\textstyle        )        $\hfil}}}%
    {\mathclose{\setbox0=\hbox{$\scriptstyle      \lrgrdel$}\hbox to\wd0
                         {\hfil$\scriptstyle      )        $\hfil}}}%
    {\mathclose{\setbox0=\hbox{$\scriptscriptstyle\lrgrdel$}\hbox to\wd0
                         {\hfil$\scriptscriptstyle)        $\hfil}}}}%
\def\seqldel {\mathchoice
    {\mathopen {\setbox0=\hbox{$\displaystyle     \lrgldel$}\hbox to\wd0
                         {\hfil$\displaystyle     \langle  $\hfil}}}%
    {\mathopen {\setbox0=\hbox{$\textstyle        \lrgldel$}\hbox to\wd0
                         {\hfil$\textstyle        \langle  $\hfil}}}%
    {\mathopen {\setbox0=\hbox{$\scriptstyle      \lrgldel$}\hbox to\wd0
                         {\hfil$\scriptstyle      \langle  $\hfil}}}%
    {\mathopen {\setbox0=\hbox{$\scriptscriptstyle\lrgldel$}\hbox to\wd0
                         {\hfil$\scriptscriptstyle\langle  $\hfil}}}}%
\def\seqrdel {\mathchoice
    {\mathclose{\setbox0=\hbox{$\displaystyle     \lrgrdel$}\hbox to\wd0
                         {\hfil$\displaystyle     \rangle  $\hfil}}}%
    {\mathclose{\setbox0=\hbox{$\textstyle        \lrgrdel$}\hbox to\wd0
                         {\hfil$\textstyle        \rangle  $\hfil}}}%
    {\mathclose{\setbox0=\hbox{$\scriptstyle      \lrgrdel$}\hbox to\wd0
                         {\hfil$\scriptstyle      \rangle  $\hfil}}}%
    {\mathclose{\setbox0=\hbox{$\scriptscriptstyle\lrgrdel$}\hbox to\wd0
                         {\hfil$\scriptscriptstyle\rangle  $\hfil}}}}%
\def\parldel {\mathchoice
    {\mathopen {\setbox0=\hbox{$\displaystyle     \lrgldel$}\hbox to\wd0
                         {\hfil$\displaystyle     [       $\hfil}}}%
    {\mathopen {\setbox0=\hbox{$\textstyle        \lrgldel$}\hbox to\wd0
                         {\hfil$\textstyle        [        $\hfil}}}%
    {\mathopen {\setbox0=\hbox{$\scriptstyle      \lrgldel$}\hbox to\wd0
                         {\hfil$\scriptstyle      [        $\hfil}}}%
    {\mathopen {\setbox0=\hbox{$\scriptscriptstyle\lrgldel$}\hbox to\wd0
                         {\hfil$\scriptscriptstyle[        $\hfil}}}}%
\def\parrdel {\mathchoice
    {\mathclose{\setbox0=\hbox{$\displaystyle     \lrgrdel$}\hbox to\wd0
                         {\hfil$\displaystyle     ]       $\hfil}}}%
    {\mathclose{\setbox0=\hbox{$\textstyle        \lrgrdel$}\hbox to\wd0
                         {\hfil$\textstyle        ]        $\hfil}}}%
    {\mathclose{\setbox0=\hbox{$\scriptstyle      \lrgrdel$}\hbox to\wd0
                         {\hfil$\scriptstyle      ]        $\hfil}}}%
    {\mathclose{\setbox0=\hbox{$\scriptscriptstyle\lrgrdel$}\hbox to\wd0
                         {\hfil$\scriptscriptstyle]        $\hfil}}}}%
\def\eightpoint{\small}                         
\def\pluldel {\mathchoice
   {\mathopen {\setbox0=\hbox{$\displaystyle     \lrgldel$}\hbox to\wd0
                        {\hfil$\displaystyle     [       $\hfil}%
                        \kern-\wd0\hbox to\wd0
                        {\hss$\vcenter{\hbox{\eightpoint$\scriptscriptstyle\bullet$}}$\hss}}}%
   {\mathopen {\setbox0=\hbox{$\textstyle        \lrgldel$}\hbox to\wd0
                        {\hfil$\textstyle        [       $\hfil}%
                        \kern-\wd0\hbox to\wd0
                        {\hss$\vcenter{\hbox{\eightpoint$\scriptscriptstyle\bullet$}}$\hss}}}%
   {\mathopen {\setbox0=\hbox{$\scriptstyle      \lrgldel$}\hbox to\wd0
                        {\hfil$\scriptstyle      [       $\hfil}%
                        \kern-\wd0\hbox to\wd0
                        {\hss$\raise.1ex\hbox{\eightpoint$\scriptscriptstyle\bullet$}$\hss}}}%
   {\mathopen {\setbox0=\hbox{$\scriptscriptstyle\lrgldel$}\hbox to\wd0
                        {\hfil$\scriptscriptstyle[       $\hfil}%
                        \kern-\wd0\hbox to\wd0
                        {\hss$\raise.03ex\hbox{\eightpoint$\scriptscriptstyle\bullet$}$\hss}}}}%
\def\plurdel {\mathchoice
   {\mathclose{\setbox0=\hbox{$\displaystyle     \lrgldel$}\hbox to\wd0
                        {\hfil$\displaystyle     ]       $\hfil}%
                        \kern-\wd0\hbox to\wd0
                        {\hss$\vcenter{\hbox{\eightpoint$\scriptscriptstyle\bullet$}}$\hss}}}%
   {\mathclose{\setbox0=\hbox{$\textstyle        \lrgldel$}\hbox to\wd0
                        {\hfil$\textstyle        ]       $\hfil}%
                        \kern-\wd0\hbox to\wd0
                        {\hss$\vcenter{\hbox{\eightpoint$\scriptscriptstyle\bullet$}}$\hss}}}%
   {\mathclose{\setbox0=\hbox{$\scriptstyle      \lrgldel$}\hbox to\wd0
                        {\hfil$\scriptstyle      ]       $\hfil}%
                        \kern-\wd0\hbox to\wd0
                        {\hss$\raise.1ex\hbox{\eightpoint$\scriptscriptstyle\bullet$}$\hss}}}%
   {\mathclose{\setbox0=\hbox{$\scriptscriptstyle\lrgldel$}\hbox to\wd0
                        {\hfil$\scriptscriptstyle]       $\hfil}%
                        \kern-\wd0\hbox to\wd0
                        {\hss$\raise.03ex\hbox{\eightpoint$\scriptscriptstyle\bullet$}$\hss}}}}%
\def\witldel {\mathchoice
   {\mathopen {\setbox0=\hbox{$\displaystyle     \lrgldel$}\hbox to\wd0
                        {\hfil$\displaystyle     (       $\hfil}%
                        \kern-\wd0\hbox to\wd0
                        {\hss$\vcenter{\hbox{\eightpoint$\scriptscriptstyle\bullet\mkern3.2mu$}}$\hss}}}%
   {\mathopen {\setbox0=\hbox{$\textstyle        \lrgldel$}\hbox to\wd0
                        {\hfil$\textstyle        (       $\hfil}%
                        \kern-\wd0\hbox to\wd0
                        {\hss$\vcenter{\hbox{\eightpoint$\scriptscriptstyle\bullet\mkern3.2mu$}}$\hss}}}%
   {\mathopen {\setbox0=\hbox{$\scriptstyle      \lrgldel$}\hbox to\wd0
                        {\hfil$\scriptstyle      (       $\hfil}%
                        \kern-\wd0\hbox to\wd0
                        {\hss$\raise.1ex\hbox{\eightpoint$\scriptscriptstyle\bullet\mkern3.2mu$}$\hss}}}%
   {\mathopen {\setbox0=\hbox{$\scriptscriptstyle\lrgldel$}\hbox to\wd0
                        {\hfil$\scriptscriptstyle(       $\hfil}%
                        \kern-\wd0\hbox to\wd0
                        {\hss$\raise.03ex\hbox{\eightpoint$\scriptscriptstyle\bullet\mkern3.2mu$}$\hss}}}}%
\def\witrdel {\mathchoice
   {\mathclose{\setbox0=\hbox{$\displaystyle     \lrgldel$}\hbox to\wd0
                        {\hfil$\displaystyle     )       $\hfil}%
                        \kern-\wd0\hbox to\wd0
                        {\hss$\vcenter{\hbox{\eightpoint$\scriptscriptstyle\mkern3.2mu\bullet$}}$\hss}}}%
   {\mathclose{\setbox0=\hbox{$\textstyle        \lrgldel$}\hbox to\wd0
                        {\hfil$\textstyle        )       $\hfil}%
                        \kern-\wd0\hbox to\wd0
                        {\hss$\vcenter{\hbox{\eightpoint$\scriptscriptstyle\mkern3.2mu\bullet$}}$\hss}}}%
   {\mathclose{\setbox0=\hbox{$\scriptstyle      \lrgldel$}\hbox to\wd0
                        {\hfil$\scriptstyle      )       $\hfil}%
                        \kern-\wd0\hbox to\wd0
                        {\hss$\raise.1ex\hbox{\eightpoint$\scriptscriptstyle\mkern3.2mu\bullet$}$\hss}}}%
   {\mathclose{\setbox0=\hbox{$\scriptscriptstyle\lrgldel$}\hbox to\wd0
                        {\hfil$\scriptscriptstyle)       $\hfil}%
                        \kern-\wd0\hbox to\wd0
                        {\hss$\raise.03ex\hbox{\eightpoint$\scriptscriptstyle\mkern3.2mu\bullet$}$\hss}}}}%
\newbox\ldelbox
\newbox\rdelbox
\def\pars #1{\parldel #1\parrdel}%
\def\cons #1{\conldel #1\conrdel}%
\def\quadfs {\rlap{\rm\quad.}}%
\def\qquato {\qquad\to\qquad}%
\def\qualto {\quad\leadsto\quad}%
\def\quand {\quad\mbox{and}\quad}%
\def\qquand {\qquad\mbox{and}\qquad}%
\def\qqquand {\quad\qquad\mbox{and}\qquad\quad}%
\def\qquor {\qquad\mbox{or}\qquad}%
\def\qquiff {\qquad\mbox{iff}\qquad}%
\def\clap#1{\hbox to 0pt{\hss#1\hss}}
\def\sqlap#1{\hbox to .5em{\hss#1\hss}}
\def\qlap#1{\hbox to 1em{\hss#1\hss}}
\def\qqlap#1{\hbox to 2em{\hss#1\hss}}
\def\qqqlap#1{\hbox to 3em{\hss#1\hss}}
\def\qqqqlap#1{\hbox to 4em{\hss#1\hss}}
\def\qqqqqlap#1{\hbox to 5em{\hss#1\hss}}
\def\qqqqqqlap#1{\hbox to 6em{\hss#1\hss}}
\def\qqqqqqqlap#1{\hbox to 7em{\hss#1\hss}}
\def\qqqqqqqqlap#1{\hbox to 8em{\hss#1\hss}}
\def\qqqqqqqqqlap#1{\hbox to 9em{\hss#1\hss}}
\newcommand{\wlap}[2][10ex]{\hbox to#1{\hss#2\hss}}
\def\clapm#1{\clap{$#1$}}
\newcommand{\wlapm}[2][10ex]{\hbox to#1{\hss$#2$\hss}}
\def\rlapm#1{\hbox to 0pt{$#1$\hss}}
\def\llapm#1{\hbox to 0pt{\hss$#1$}}
\def\qqquad{\quad\qquad}
\def\qqqquad{\qquad\qquad}
\def\qqqqquad{\qqquad\qquad}
\newcommand{\vclap}[2][0pt]{\hbox to #1{\hss#2\hss}}
\newcommand{\vclapm}[2][0pt]{\hbox to #1{\hss$#2$\hss}}
\def\proofadjust{\vadjust{\nobreak\vskip-2.7ex\nobreak}}
\def\interdisplayskip{.5ex}
\newskip\mydisplaywidth
\newcommand{\twolinedisplay}[3][10pt]{%
  \mydisplaywidth=\displaywidth
  \advance\mydisplaywidth-#1
  \begin{array}{c}
    \clap{\hbox to\mydisplaywidth{$\displaystyle#2$\hss}}\\[\interdisplayskip]
    \clap{\hbox to\mydisplaywidth{\hss$\displaystyle#3$}}
  \end{array}
}
\newdimen\mqdim
\newcommand{\vlcin}[5]{\vliq{{\scriptstyle #1*}#2}{#3}{#4}{#5}}
\newcommand{\vlstr}[3]{\vltr{#1}{#2}{\vlshy{}}{#3}{\vlshy{}}}
\newcommand{\vlhtr}[2]{\vlstr{#1}{#2}{\vlshy{\hskip1.5em}}}
\newcommand{\vlshy}[1]{\vlhyaux{$#1$}}
\def\wbox{\square}
\def\wbox{\boxempty}
\def\wdia{\lozenge}
\def\implies{\mathbin{\supset}}
\def\cand{\mathbin{\wedge}}
\def\cor{\mathbin{\vee}}
\newcommand{\wbr}[1]{\pars{#1}}
\newcommand{\wbrn}[2]{\llbracket#1\rrbracket^{#2}}
\def\mpr{\mathsf{mp}}
\def\necr{\mathsf{nec}}
\def\boxnecr{\dotrule{nec}}
\newcommand{\Invr}{\mathsf{Inv}}
\renewcommand{\weakr}{\mathsf{w}}
\renewcommand{\conr}{\mathsf{c}}
\def\sysS{\mathsf{S}}
\def\xax{\mathsf{x}}
\def\yax{\mathsf{y}}
\def\Xax{\mathsf{X}}
\def\Yax{\mathsf{Y}}
\def\kax#1{\mathsf{k_{#1}}}
\def\kaxx{\mathsf{k}}
\def\dax{\mathsf{d}}
\def\tax{\mathsf{t}}
\def\bax{\mathsf{b}}
\def\vax{\mathsf{4}}
\def\svax{\mathsf{s4}}
\def\fax{\mathsf{5}}
\def\minibr{\hbox{$\scriptscriptstyle[\mkern2mu]$}}
\def\dotrule#1{\mathsf{#1}^{\minibr}}
\def\Xdot{\dotrule{X}}
\def\Ydot{\dotrule{Y}}
\def\sYdot{\dotrule{Y}_{\mathsf{s}}}
\def\ydot{\dotrule{y}}
\def\ddot{\dotrule{d}}
\def\tdot{\dotrule{t}}
\def\bdot{\dotrule{b}}
\def\sbdot{\dotrule{sb}}
\def\vdot{\dotrule{4}}
\def\fdot{\dotrule{5}}
\def\sfdot{\dotrule{s5}}
\def\sbfdot{\dotrule{sb5}}
\def\sfbdot{\dotrule{s5b}}
\def\boxmedr{\dotrule{m}}
\def\intrule#1{\mathsf{#1}\strut}
\def\dint{\intrule{d}}
\def\tint{\intrule{t}}
\def\bint{\intrule{b}}
\def\vint{\intrule{4}}
\newlength{\hhatheight}
\newlength{\hhhatheight}
\def\lmark{\bullet}
\def\rmark{\circ}
\def\lrmark{\raise.2ex\hbox{\tiny\LEFTcircle}}
\def\lef#1{#1^\lmark}
\def\rig#1{#1^\rmark}
\def\plef#1{#1^\bullet}
\def\prig#1{#1^\circ}
\def\rigrule#1{\mathsf{#1}^{\rmark_{\rlap{\phantom{x}}}}}
\def\lefrule#1{\mathsf{#1}^{\lmark_{\rlap{\phantom{x}}}}}
\def\Xlefrig{\mathsf{X}^{\lrmark}}
\def\sXlefrig{\mathsf{X}^{\lrmark}_{\mathsf{s}}}
\def\drig{\rigrule{d}}
\def\trig{\rigrule{t}}
\def\brig{\rigrule{b}}
\def\vrig{\rigrule{4}}
\def\vrigp{\rigrule{4'}}
\def\svrig{\rigrule{s4}}
\def\svdrig{\rigrule{s4}_\wdia}
\def\frig{\rigrule{5}}
\def\dlef{\lefrule{d}}
\def\tlef{\lefrule{t}}
\def\blef{\lefrule{b}}
\def\vlef{\lefrule{4}}
\def\vlefp{\lefrule{4'}}
\def\svlef{\lefrule{s4}}
\def\svblef{\lefrule{s4}_\wbox}
\def\flef{\lefrule{5}}
\def\wdiacutr{\wdia\mathsf{cut}}
\def\wboxcutr{\wbox\mathsf{cut}}
\def\Cutr{\mathsf{Cut}}
\newcommand{\Sfive}{\mathsf{S5}}
\newcommand{\Sfour}{\mathsf{S4}}
\newcommand{\Kfour}{\mathsf{K4}}
\newcommand{\CK}{\mathsf{CK}}
\newcommand{\CT}{\mathsf{CT}}
\newcommand{\NCK}{\mathsf{NCK}}
\newcommand{\NCKp}{\mathsf{NCK'}}
\newcommand{\HCK}{\mathsf{HCK}}
\newcommand{\CSfive}{\mathsf{CS5}}
\newcommand{\CSfour}{\mathsf{CS4}}
\newcommand{\CKfour}{\mathsf{CK4}}
\newcommand{\CKfourfive}{\mathsf{CK45}}
\newcommand{\CD}{\mathsf{CD}}
\newcommand{\CDfour}{\mathsf{CD4}}
\newcommand{\CDfourfive}{\mathsf{CD45}}
\newcommand{\CKfive}{\mathsf{CK5}} 
\newcommand{\CKBfive}{\mathsf{CKB5}} 
\newcommand{\CDfive}{\mathsf{CD5}} 
\newcommand{\CDB}{\mathsf{CDB}} 
\newcommand{\CTB}{\mathsf{CTB}}  
\newcommand{\CKB}{\mathsf{CKB}}
\newcommand{\Gamcon}[1]{\Gamma\cons{#1}}
\newcommand{\ddGamcon}[1]{\ddown\Gamma\cons{#1}}
\newcommand{\dGamcon}[1]{\down\Gamma\cons{#1}}
\newcommand{\conempty}{\cons{\emptyset}}
\newcommand{\ddGamma}{\ddown\Gamma}
\newcommand{\ddDelta}{\ddown\Delta}
\def\lhs#1{\ddown{#1}}
\newcommand{\thecons}[1]{#1}
\newcommand{\nocons}[1]{#1}
\newcommand{\Gcons}[1]{\Gamma\cons{#1}}
\newcommand{\Dcons}[1]{\Delta\cons{#1}}
\newcommand{\Scons}[1]{\Sigma\cons{#1}}
\newcommand{\Tcons}[1]{\Theta\cons{#1}}
\newcommand{\ddGcons}[1]{\ddown\Gamma\cons{#1}}
\newcommand{\ddDcons}[1]{\ddown\Delta\cons{#1}}
\newcommand{\ddScons}[1]{\ddown\Sigma\cons{#1}}
\newcommand{\ddTcons}[1]{\ddown\Theta\cons{#1}}
\newcommand{\lefGcons}[1]{\lef\Gamma\cons{#1}}
\newcommand{\lefDcons}[1]{\lef\Delta\cons{#1}}
\newcommand{\lefTcons}[1]{\lef\Theta\cons{#1}}
\newcommand{\colGcons}[1]{\thecons{\Gamma\cons{\nocons{#1}}}}
\newcommand{\colcons}[2]{\thecons{#1\cons{\nocons{#2}}}}
\newcommand{\vlinfG}[4]{%
  \vlinf{#1}{#2}{\colGcons{#3}}{\colGcons{#4}}}
\def\DD{\mathcal{D}}
\def\depth#1{\mathit{depth}(#1)}
\def\height#1{\mathit{h}(#1)}
\def\cv#1{\boldsymbol{v}(#1)}
\def\rcv#1{\boldsymbol{v}_{\mathbf{r}}(#1)}
\def\formula#1{\mathit{fm}(#1)}
\def\down#1{{#1^{\Downarrow}}}
\def\ddown#1{#1^{\Downarrow}}
\newbox\botbox
\botbox\hbox{$\bot$}
\newdimen\botwd
\newcommand\downsymbol{\rlap{\copy\botbox}\hbox to\botwd{\hss$\mathord{\Downarrow}$\hss}}
\newbox\vdotbox
\newcommand{\cutredcase}[3]{{#2\!\!\stackrel{(#1)}{\qualto}\!\!#3}}
\newcommand{\cutredcasea}[4][2ex]{%
\clapm{\begin{array}{c}
   #3\stackrel{(#2)}{\qualto}\hskip13em\\[#1]     
   \hskip13em #4\hss
\end{array}}}
\newcommand{\cutredcaseb}[4][1ex]{%
\clapm{\begin{array}{c}
   #3\stackrel{(#2)}{\qualto}\hskip3em\\[#1]     
   #4\hss
\end{array}}}
\newcommand{\cutredcasec}[4][2ex]{%
\begin{array}{c}
   #3\stackrel{(#2)}{\qualto}\hfill\\[#1]     
   \hfill #4
\end{array}}
\newcommand{\svlderivation}[1]{\hbox{\small$\vlderivation{#1}$}}
\begin{document}
\title[On Nested Sequents for Constructive Modal Logics]
        {On Nested Sequents for Constructive Modal Logics}

\author[R.~Arisaka]{Ryuta Arisaka}	%required
\address{INRIA, 1 rue Honor\'e d'Estienne d'Orves, \\
Campus de l'\'Ecole Polytechnique, B\^atiment Alan Turing\\
91120 Palaiseau, France}	%required
%\email{author1@email1}  %optional
%\thanks{Supported by ANR grant STRUCTURAL}	%optional

\author{Anupam Das}	%required
%\address{INRIA, 1 rue Honor\'e d'Estienne d'Orves, \\
%Campus de l'\'Ecole Polytechnique, B\^atiment Alan Turing\\
%91120 Palaiseau, France}	%required
%\email{author1@email1}  %optional

\author{Lutz Stra\ss burger}	%required
%\address{INRIA, 1 rue Honor\'e d'Estienne d'Orves, \\
%Campus de l'\'Ecole Polytechnique, B\^atiment Alan Turing\\
%91120 Palaiseau, France}	%required
%\email{http://www.lix.polytechnique.fr/Labo/Lutz.Strassburger/}  %optional

%% \def\specialheadingsodd{\it\today\quad---\quad Submitted \hfil}
%% \thispagestyle{specialheadings}
  
\begin{abstract}
  We present deductive systems for various modal logics that can be
  obtained from the constructive variant of the normal modal logic~$\CK$
  by adding combinations of the axioms $\dax$, $\tax$, $\bax$, $\vax$, and~$\fax$. This includes
  the constructive variants of the standard modal logics $\Kfour$, $\Sfour$,
  and~$\Sfive$. We use for our presentation the formalism of nested sequents
  and give a syntactic proof of cut elimination.
\end{abstract}

  %% \begin{keyword}
  %% \end{keyword}
%% \end{frontmatter}

\maketitle
%%%%%%%%%%%%%%%%%%%%%%%%%%%%%%%%%%%%%%%%%%%%%%%%%%%%%%%%%%%%%%%%%%%
%%%%%%%%%%%%%%%%%%%%%%%%%%%%%%%%%%%%%%%%%%%%%%%%%%%%%%%%%%%%%%%%%%%
%%%%%%%%%%%%%%%%%%%%%%%%%%%%%%%%%%%%%%%%%%%%%%%%%%%%%%%%%%%%%%%%%%%
%\notodo

\section{Introduction}        
The modal logic $\K$ is obtained from classical propositional logic by
incorporating two unary operators, or \emph{modalities}, $\wbox$ and $\wdia$,
and adding the $\kaxx$-axiom, $\wbox (A\supset B) \supset (\wbox A \supset \wbox B)$, to dictate the interaction
between the modalities and propositional connectives. The behavior of the $\wdia$ modality is then determined by enforcing that it is the De Morgan dual of $\wbox$. Along with this axiom there is the 
    \emph{necessitation} rule, saying that if $A$ is a theorem of~$\K$ then so is $\wbox A$. 
Informally, $\wbox$ is often interpreted as ``necessarily'' and $\wdia$ as ``possibly''. Notice that interaction with other propositional connectives is determined by the adequacy of $\{\supset, \bot \}$ in classical logic.

In the intuitionistic setting, however, one must define the behavior of $\wbox$ and $\wdia$ independently, in the absence of De Morgan duality. Consequently, it is not
enough to just add the standard $\kaxx$-axiom, which makes no mention of the $\wdia$-modality, and so some classical consequences of $\kaxx$ must be added to formulate an intuitionistic version of $\K$. To this end there seems to be no canonical choice, and many different intuitionistic versions of $\K$ have been proposed,
e.g.,~\cite{fitch:48,prawitz:65,fischer-servi:84,plotkin:stirling:86,simpson:phd,bierman:paiva:00,pfenning:davies:01}
(for a survey see~\cite{simpson:phd}). However, in the current literature, two variants prevail; the first, known as \emph{intuitionistic} $\K$, adds the
following five axioms, along with the necessitation rule, to intuitionistic propositional logic:
\begin{equation}
  \label{eq:ik}
    \begin{array}{r@{\;}l}
      \kax1\colon&\wbox(A\implies B)\implies(\wbox A\implies\wbox B)\\%[1ex]
      \kax2\colon&\wbox(A\implies B)\implies(\wdia A\implies\wdia B)\\%[1ex]
    \end{array}
    \qqquad
    \begin{array}{r@{\;}l}
      \kax3\colon&\wdia(A\cor B)\implies(\wdia A\cor\wdia B)\\%[1ex]
      \kax4\colon&(\wdia A\implies \wbox B)\implies\wbox(A\implies B)\\%x[1ex]
      \kax5\colon&\wdia\bot\implies\bot
    \end{array}
\end{equation}
It was originally proposed in~\cite{fischer-servi:84,plotkin:stirling:86} and studied in detail in~\cite{simpson:phd}; more recent work can be found in~\cite{galmiche:salhi:10,str:fossacs13,marin:str:aiml14}. 

The second variant, known as \emph{constructive} $\K$, includes only $\kax1$ and $\kax2$, not $\kax3, \kax4 , \kax5$. This choice of axioms dates back to \cite{prawitz:65}\footnote{We point out that some versions of $\K$ intermediate to these two variants have also been considered, for example the variant with  $\kax1,\kax2$, and $\kax5$ in \cite{wijesekera:90}.}, and its proof theory was investigated, for example, in
\cite{bierman:paiva:00,heilala:pientka:07,Mendler11}, while the semantics of it and some extensions was studied in \cite{Fairtlough97} and \cite{Kojima12}.

To gain intuition about the difference between the two variants, %let $\mathcal{P}$ denote propositional letters.
let us have a look at their standard Kripke semantics.
A model of intuitionistic modal logic is
described by a 4-tuple $(W, \le, \RInt, \mathsf{I})$
with
\begin{itemize}
    \item
a non-empty set of \emph{possible worlds} $W$, preordered by $\le$.
\item an \emph{accessibility relation} $\RInt \subseteq W \times W$ satisfying:
  \begin{enumerate}[label=(\roman{*}), leftmargin=*]
  \item\label{l:F1} For any  $w, v, v' \in W$, if $w\RInt v$ and
$v \le v'$, there exists a $w' \in W$ such that $w \le w'$ and
$w'\RInt v'$.
  \item\label{l:F2} For any $w, w', v \in W$, if
$w \le w'$ and $w\RInt v$, there exists a $v' \in W$ such that
$w'\RInt v'$ and $v \le v'$.
  \end{enumerate}
\item a function  $\mathsf{I}\colon W \rightarrow
    2^{\mathcal{A}}$, where $\mathcal{A}=\set{a,b,c,\ldots}$ denotes the set of propositional letters, such that for any
    $w, w' \in W$, if $w \le w'$ then $\mathsf{I}(w) \subseteq
    \mathsf{I}(w')$. 
\end{itemize}
Note that \ref{l:F1} and \ref{l:F2} ensure a form of monotonicity of $\RInt$
over $\le$. In contrast, a model of constructive modal
logic decouples the accessibility relation $R$ from $\le$. It assumes
a set of `fallible' worlds $\dot{\bot}$  as a subset of $W$; such that
$\dot{\bot}$ is closed under $\le$ and $R$, \emph{i.e.} whenever $w
\in \dot{\bot}$ and $wRw_1$ or $w \le w_1$ we also have $w_1 \in
\dot{\bot}$. This is much weaker a condition on $R$ than \ref{l:F1}
and \ref{l:F2}. Also the definition of the forcing relation $\models$
shows subtle differences. For the atoms, the binary connectives and
the $\wbox$-modality, the intuitionistic and constructive semantics definition coincide:
\begin{itemize}
\item $w \models a$ iff 
         $a \in \mathsf{I}(w)$. 
\item $w \models A \wedge B$ iff 
         $w \models A$ and $w \models B$. 
\item $w \models  A \vee B$ iff 
  $w \models A$ or $w \models B$. 
\item
  $w \models A \supset B$ iff
  $\forall w' \in W.\; w \le w'$ and $w' \models A$ imply
  $w' \models B$.
\item $w \models \Box A$ iff
  $\forall w',v' \in W.\;w \le w'$ and $w'Rv'$ imply
    $v' \models A$.
\end{itemize}
In the intuitionistic case, forcing for $\wdia$ and $\bot$ is defined as follows:
\begin{itemize}
\item $w \models \Diamond A$ iff
  $\exists v\in W.\;wRv$ and $v \models A$.
\item $w \not\models \bot$.  
\end{itemize}
Whereas in the constructive case we have:
\begin{itemize}
\item $w \models \Diamond A$ iff
  $\forall w'\in W.\;$ if $w \le w'$ then $\exists v'\in W.\; w'Rv'$ and $v' \models A$.
 \item $w \models \bot$ iff
     $w \in \dot{\bot}$.
 \item $w \models A$ if $w \in \dot{\bot}$.
\end{itemize}
We can now see that there is a countermodel to each
     of $\kax3,\kax4,\kax 5$ in the constructive setting:
     \begin{itemize}
         \item $\kax3$: $W = \{w_0, w_1, u_0, v_1\}, w_0 \le w_1,
             w_0Ru_0, w_1Rv_1, \mathsf{I}(w_0) =
                 \mathsf{I}(w_1) = \emptyset,\\
                 \mathsf{I}(u_0) = a_1,
                 \mathsf{I}(v_1) = a_2$. 
                 We have 
                 $w_0 \models \wdia 
                     (a_1 \vee a_2)$ and
                   $w_0 \not\models 
                      \wdia a_1 \vee \wdia a_2$.  
                  \item $\kax4$: $W = \{w_0, w_1, u_0, u_1\},
                      w_0 \le w_1, u_0 \le u_1, w_0Ru_0, 
                      \mathsf{I}(w_0) = \mathsf{I}(w_1) = \emptyset,
                      \mathsf{I}(u_0) = \{a_2\}, 
                      \mathsf{I}(u_1) = \{a_1, a_2\}$. 
                      We have $w_0 \models 
                      \wdia a_1 \supset \wbox a_2$ 
                      and $w_0 \not\models 
                      \wbox(a_1 \supset a_2)$.  
                 \item $\kax5$: $W = \{w_0,  u_0\},
                     w_0Ru_0, \mathsf{I}(w_0) = \emptyset,
                     u_0 \in \dot{\bot}$. 
                     We have 
                     $w_0 \models \wdia \bot$ 
                     and $w_0 \not\models \bot$. 
     \end{itemize}  
Notice that the countermodels for $\kax3$ and $\kax4$ could not exist
in the presence of  \ref{l:F1} and \ref{l:F2} above, and the
countermodel for $\kax5$ relies on the availability of the set
$\dot{\bot}$ of fallible worlds. 

We refer the reader
to~\cite{Mendler11} for a more thorough semantic analysis of the
differences between intuitionistic~$\K$ and constructive~$\K$.

This work is concerned with the proof theory of constructive $\K$, denoted $\CK$ and its various extensions with other common modal axioms. As for the classical and intuitionistic variants, we consider the five axioms below:
\begin{equation}
  \label{eq:ax}
  \begin{array}{r@{\;}l}
    \dax\colon&\wbox A\implies \wdia A\\%[\arrskip]
    \tax\colon&
    (A\implies\wdia A)\cand(\wbox A\implies A)\\%[\arrskip]
    \bax\colon&
    (A\implies\wbox\wdia A)\cand(\wdia\wbox A\implies A)\\%[\arrskip]
    \end{array}
    \qquad
    \begin{array}{r@{\;}l}
    \vax\colon&
    (\wdia\wdia A\implies\wdia A)\cand(\wbox A\implies\wbox\wbox A)\\%[\arrskip]
    \fax\colon&
    (\wdia A\implies\wbox\wdia A)\cand(\wdia\wbox A\implies\wbox A)
  \end{array}\quad
\end{equation}
\emph{A priori}, this gives us 32 different
logics, but as in classical (or intuitionistic) modal logic some of them  coincide, so that
we obtain only~15 distinct logics.\footnote{That there are at least 15 is inherited from the classical setting (or similarly the intuitionistic setting), and verifying that the classical equivalences hold is by inspection of the classical proofs.}  These are depicted in Figure~\ref{fig:cube-cons}, where we use the same names as those standard in the classical setting~\cite{garson:stanford}, prefixed by `$\mathsf{C}$'.

\begin{figure}[!t]
  %%\myfigskips
\[
  \xymatrix@R-3ex{ {} & *{\circ} \save
    []+<-2ex,+1.5ex>*\txt{\scriptsize \sf CS4}\restore \ar@{-}[rrrr]
    \ar@{-}[dd] \ar@{-}[ld] & {} & {} & {} & *{\circ}
    \save []+<1.5ex,+1.5ex>*\txt{\scriptsize \sf CS5}\restore \ar@{-}[ld] \ar@{-}[ddddd] \\
    *{\circ} \save []+<-1.5ex,+1.5ex>*\txt{\scriptsize \sf CT}\restore
    \ar@{-}[rrrr] \ar@{-}[ddd] & {} & {} & {} &
    *{\circ} \save []+<-1.5ex,+1.5ex>*\txt{\scriptsize \sf CTB}\restore \ar@{-}[ddd] & {} \\
    {} & *{\circ} \save []+<-2ex,+1.4ex>*\txt{\scriptsize \sf
      CD4}\restore \ar@{-}[ddd] \ar@{-}[rr] & {} & *{\circ} \save
    []+<2.5ex,-1.5ex>*\txt{\scriptsize \sf CD45}\restore
    \ar@{-}@/_/[uurr] & {} & {} \\
    {} & {} & *{\circ} \save []+<2ex,-1.5ex>*\txt{\scriptsize \sf
      CD5}\restore
    \ar@{-}[ur] & {} & {} & {} \\
    *{\circ} \save []+<-2ex,0ex>*\txt{\scriptsize \sf CD}\restore
    \ar@{-}[ddd] \ar@{-}[ruu] \ar@{-}[urr] \ar@{-}[rrrr]& {} & {} & {}
    & *{\circ} \save []+<2.5ex,0ex>*\txt{\scriptsize \sf CDB}\restore
    \ar@{-}[ddd] & {} \\
    {} & *{\circ} \save []+<-2ex,+1.5ex>*\txt{\scriptsize \sf
      CK4}\restore \ar@{-}[ldd] \ar@{-}[rr]& {} & *{\circ} \save
    []+<1.5ex,-1.5ex>*\txt{\scriptsize \sf CK45}\restore \ar@{-}[uuu]
    \ar@{-}[rr] & {} &
    *{\circ} \save []+<2.5ex,-1ex>*\txt{\scriptsize \sf CKB5}\restore \ar@{-}[ldd] \\
    {} & {} & *{\circ} \save []+<+1.5ex,-1.5ex>*\txt{\scriptsize \sf
      CK5}\restore \ar@{-}[ru] \ar@{-}[uuu] & {} & {} & {} \\
    *{\circ} \save []+<-1.5ex,-1.5ex>*\txt{\scriptsize \sf CK}\restore
    \ar@{-}[rrrr] \ar@{-}[rru] & {} & {} & {}
    & *{\circ} \save []+<1.5ex,-1.5ex>*\txt{\scriptsize \sf CKB}\restore & {} \\
  }
  \]
  \caption{The constructive ``modal cube'' %\todo{we should get CK, CK4, CK45, CD, CD4, CD45, CT, CS4, CS5}
  }
  \label{fig:cube-cons}
\end{figure}

While the proof theory of the intuitionistic version of this cube
has been well-studied in labeled systems~\cite{simpson:phd} and
non-labeled
systems~\cite{galmiche:salhi:10,str:fossacs13,marin:str:aiml14}, there
is surprisingly little work on the constructive ``modal cube''. In
fact, to our knowledge, only the logics $\CK$, $\CT$, $\CKfour$, and
$\CSfour$ have received proof theoretic treatment so far, e.g.\ in
\cite{bierman:paiva:00,heilala:pientka:07,Mendler11}.

In this work we attempt to give a unified cut-elimination procedure for all
logics in Figure~\ref{fig:cube-cons}, using the framework of
\emph{nested sequents}
\cite{kashima:nested,GorePT09,brunnler:deepseq,str:fossacs13,Fitting14},
a generalization of Gentzen's sequent calculus which allows sequents to
occur within sequents. This approach has previously been successful
for the classical modal cube in \cite{brunnler:deepseq} and the
intuitionistic modal cube in \cite{str:fossacs13} but, perhaps
surprisingly, the step from intuitionistic to constructive appears
more involved than the one from classical to intuitionistic. 

This is also the reason why, in this paper, we consider only the logics in
the `cube'. We would like to compare the intuitionistic and
constructive cases from the point of view of cut-elimination. Whenever possible,
we aim to point out the differences to the arguments for intuitionistic systems
presented in~\cite{str:fossacs13}.

While the cut-elimination proofs in \cite{brunnler:deepseq} and
\cite{str:fossacs13} are markedly similar, we seem to require a
different method in the constructive setting. The reasons for this are that
certain formulations of some logical rules are no longer sound, and
that we need an explicit contraction rule, along with other structural
rules that further complicate the process of cut-elimination.

Nonetheless we manage to obtain cut-elimination for the logics $\CK$,
$\CKfour$, $\CKfourfive$, $\CD$, $\CDfour$, $\CDfourfive$, $\CT$,
$\CSfour$, and $\CSfive$, and we conjecture that our systems admit cut
for all logics in the cube. 

We are not aware of a similar uniform treatment of constructive modal
logics within other formalisms. However, in hindsight it is
straightforward to translate our results into prefixed tableaux,
using~\cite{Fitting12}, or into a tree-labeled sequent calculus.

\medskip

We point out an interesting observation that the $\bax$-axiom entails $\kax3$ and $\kax5$. 
While this is likely already known to many in the community we could not find this result stated in the literature, and so it is pertinent to raise it here. This arguably questions the ``constructiveness'' of logics including $\bax$, and so the inclusion of such logics in the cube itself, but such considerations are beyond the scope of this work.\footnote{One might argue that this observation is the reason behind Prawitz' statement~\cite{prawitz:65} on $\Sfive$ being inherently non-constructive. However, Prawitz does not explicitly mention $\kax3$ and $\kax5$.}

Several attempts to deal with the proof theory of constructive modal
logic have appeared previously. However, the fundamental data
structures of such calculi all seem to be special cases of nested
sequents.  For example, the 2-sequents of
\cite{Masini92:2sequents:classical,Masini93:2sequents:intuitionistic}
are a form of nested sequent where no tree-branching is allowed.  It
is not clear how the 2-sequent approach, while successful for deontic
logic, could be adapted for the various constructive logics, or even
$\CK$, as pointed out by Wansing in \cite{Wansing94}.  Also the
sequents of \cite{Mendler11,Mendler14} can be seen as a special case
of nested sequents, also where no tree-branching is allowed, but constituting a
richer data structure than 2-sequents because of the inclusion of a `focus'.

    Regarding applications of the family 
of the constructive modal 
logics, the extended Curry-Howard correspondence 
(which, for modal logics, is a relatively 
recent investigation) has been studied for $\CSfour$ 
\cite{alechina:etal:01,Mendler14}. %% with categorical semantics.
The constructive {\small $\Box$} operator here captures staged computation 
\cite{Davies00,Zine99}, and such
logics are also used for the study of contexts 
\cite{Mendler05,Mendler14}. We also point out that there are many logics of interest that are proper extensions of $\CK$ but not of intuitionistic $\K$, e.g.\ $\CSfour$ and $\mathsf{PLL}$; a more detailed discussion of such logics can be found in~\cite{Fairtlough97}.

%%%%%%%%%%%%%%%%%%%%%%%%%%%%%%%%%%%%%%%%%%%%%%%%%%%%%%%%%%%%%%%%%%%%%%%%%%%%%
%%%%%%%%%%%%%%%%%%%%%%%%%%%%%%%%%%%%%%%%%%%%%%%%%%%%%%%%%%%%%%%%%%%%%%%%%%%%%
%%%%%%%%%%%%%%%%%%%%%%%%%%%%%%%%%%%%%%%%%%%%%%%%%%%%%%%%%%%%%%%%%%%%%%%%%%%%%

%%%%%%%%%%%%%%%%%%%%%%%%%%%%%%%%%%%%%%%%%%%%%%%%%%%%%%%%%%%%%%%%%%%%%%%%%%%%%
%%%%%%%%%%%%%%%%%%%%%%%%%%%%%%%%%%%%%%%%%%%%%%%%%%%%%%%%%%%%%%%%%%%%%%%%%%%%%
%%%%%%%%%%%%%%%%%%%%%%%%%%%%%%%%%%%%%%%%%%%%%%%%%%%%%%%%%%%%%%%%%%%%%%%%%%%%%

\section{Preliminaries on Nested Sequents} 
In order to present a nested sequent system for $\CK$, we first need
to define the notion of a nested sequent structure. For this, we
recall the basic notions from~\cite{str:fossacs13}, with slight modifications in notation, tailored to the current setting. Let
$a,b,c,\ldots$ denote propositional variables and define formulas $A,B,C,\ldots$ of
constructive modal logic 
by the following grammar:
\begin{equation*}
%  \label{modform-int}
  A\grammareq 
  a\mid\bot\mid(A\cand A)\mid(A\cor A)\mid(A\implies A)\mid
  \wbox A\mid\wdia A
\end{equation*}
As shorthand we write $\top$ for $\bot\implies\bot$ and we omit parentheses whenever it is not ambiguous. 

A (nested) sequent is a tree whose nodes are multisets of 
formulas tagged with a polarity. There are two polarities,
\emph{input} (intuitively as if on the left of the turnstile in the conventional sequent calculus), denoted by a $\lmark$ superscript, and \emph{output} (intuitively as if on the right of the turnstile in the conventional sequent calculus), denoted by a $\rmark$ superscript. 
Formally we define \emph{LHS sequents}, denoted $\Phi$, and \emph{RHS sequents}, denoted $\Psi$, as follows,
\looseness=-1
\begin{equation}
  \label{eq:ns-cml}
    \Phi\grammareq \emptyset\mid \lef A \mid \wbr{\Phi} \mid \Phi,\Phi
      \qqqquad
  \Psi\grammareq \rig A\mid\wbr{\Phi,\Psi}
  \quad
\end{equation}
and a \emph{full sequent} is a structure of the form $\Phi,\Psi$. We assume that associativity and commutativity of the comma `,' is implicit in our systems, and that $\emptyset$ acts as its unit.

This definition entails that exactly one formula in a full sequent has output
polarity, and all others have input polarity.  
We use capital Greek letters $\Gamma$,
$\Delta, \Sigma, \ldots$ to denote arbitrary sequents, LHS, RHS or full, and may decorate them with a $\lmark$ or $\rmark$ superscript to indicate that they are LHS or RHS, respectively. 

The \emph{corresponding formula} of a sequent is defined inductively as follows,
\begin{align*}
  & \formula{\lef A}= \formula{\rig A} = A \ , \    \formula{\wbr{\Phi}}= \wdia\formula{\Phi} \ , \ 
    \formula{\wbr{\Phi,\Psi}} = \wbox(\formula{\Phi,\Psi}) \\
  & \formula{\emptyset} = \top \ ,\   \formula{\Phi_1 , \Phi_2} = \formula{\Phi_1}\cand\formula{\Phi_2}  \ , \ 
\formula{\Phi,\Psi} = \formula{\Phi}\implies\formula{\Psi}
\end{align*} 
A \emph{context}, denoted
by $\Gamma\conhole$, is a sequent with a hole $\conhole$ taking the place of a subsequent (or, equivalently, a formula);  $\colGcons{\Delta}$ is the sequent obtained
from $\Gamma\conhole$ by replacing the occurrence of $\conhole$ by
$\Delta$. Note that, for this to form a full sequent, $\Gamma\conhole$ and $\Delta$ must have the correct format. 
We distinguish two
kinds of contexts: an \emph{output context} is one that results in a full sequent when its hole is filled with a RHS sequent, and an \emph{input context} analogously for a LHS sequent. This is clarified by the following example, taken
from~\cite{str:fossacs13}.

\begin{example}\label{exa:conhole}
  Let $\Gamma_1\conhole=\lef C,\wbr{\conhole,\wbr{\lef B,\lef C}}$ and
  $\Gamma_2\conhole=\lef C,\wbr{\conhole,\wbr{\lef B,\rig C}}$.
  %%Then $\depth{\Gamma_1\conhole}=\depth{\Gamma_2\conhole}=1$. 
  Now let
  $\Delta_1=\lef A,\wbr{\rig B}$ and $\Delta_2=\lef A,\wbr{\lef B}$.
  Then $\Gamma_1\cons{\Delta_2}$ and $\Gamma_2\cons{\Delta_1}$ are not
  well-formed full sequents, because the former would contain no output
  formula, and the latter would contain two. However, we can form the full sequents,
  \begin{equation*}
    \Gamma_1\cons{\Delta_1}=
    \lef C,\wbr{{\lef A,\wbr{\rig B}},\wbr{\lef B,\lef C}}
    \quand
    \Gamma_2\cons{\Delta_2}=
    \lef C,\wbr{{\lef A,\wbr{\lef B}},\wbr{\lef B,\rig C}}
  \end{equation*}
  whose corresponding formulas, respectively, are:
  \begin{equation*}
    C\implies\wbox(A\cand\wdia(B\cand C)\implies\wbox B)
    \quand
    C\implies\wbox(A\cand\wdia B \implies\wbox(B\implies C))
  \end{equation*}
\end{example}

\begin{observation}\label{obs:context}
  Every output context $\Gamma\conhole$ is of the shape,
  \begin{equation}
    \label{eq:wcontext}
    \lef\Gamma_1,\wbr{\lef\Gamma_2,\wbr{\ldots,\wbr{\lef\Gamma_n,\conhole}\ldots}}
  \end{equation}
  for some $n\ge 0$.  Filling
  the hole of an output context with a RHS or full sequent yields a full
  sequent, and filling it with a LHS sequent yields a
  LHS sequent. 
  Every input context $\Gamma\conhole$ is of the shape,
  \begin{equation}
\Gamma'\cons{\Lambda\conhole,\rig\Pi}
  \end{equation}
  where $\Gamma'\conhole$ and
  $\Lambda\conhole$ are output contexts (i.e., are of the
  shape~\eqref{eq:wcontext} above). Note that $\Gamma'\conhole$ and $\Lambda\conhole$
  and $\Pi$ are uniquely determined by the position of the hole
  $\conhole$ in $\Gamma\conhole$. 
\end{observation}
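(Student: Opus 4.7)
My plan is to prove both claims by structural induction on the context $\Gamma\conhole$, following the grammar~\eqref{eq:ns-cml} of sequents augmented with a hole marker $\conhole$ taking the position of a single formula or subsequent. A useful auxiliary fact, proved by a short induction on~\eqref{eq:ns-cml}, is that every full sequent contains exactly one output-polarity position: each clause defining an RHS sequent supplies exactly one output, and no clause for LHS sequents supplies any.

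For the first claim, I would observe that in any output context the hole must lie along the unique spine from the root down to the prospective output position (the position that becomes the output once the hole is filled with an RHS). At each bracket level on this spine, every comma-separated subsequent not containing the hole must itself be an LHS sequent by~\eqref{eq:ns-cml}, and collecting the LHS material at level $i$ into $\lef\Gamma_i$ recovers the canonical form~\eqref{eq:wcontext}. The behaviour under hole-filling then follows by induction on the depth $n$: the innermost slot $\lef\Gamma_n, \conhole$ becomes RHS when filled with an RHS or full sequent and LHS when filled with an LHS sequent, directly from~\eqref{eq:ns-cml}, and this type propagates outward through the successive enclosing brackets and LHS commas without surprises.

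For the second claim the hole is an LHS position while the unique output-polarity position sits elsewhere in the tree. I would locate the \emph{divergence point}: the smallest bracket $\wbr{\cdot}$ (or the root, treated as an implicit outermost bracket) whose content already contains both the hole and the output position, necessarily in distinct comma-separated subsequents. The portion of the context from the root down through this bracket along the path towards the output is an output context $\Gamma'\conhole$ by the first claim applied to the ``output subtree'', while the divergent subsequent containing the hole is also an output context $\Lambda\conhole$ by the first claim, now with its hole intended for LHS-filling. The remaining divergent subsequent, containing the output, is precisely the RHS $\rig\Pi$. Uniqueness of $\Gamma'$, $\Lambda$, and $\Pi$ is then immediate from the tree structure, since the positions of the hole and of the unique output formula fix the two paths from the root, hence the divergence point and the whole decomposition.

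The main conceptual step, rather than a real obstacle, is identifying the divergence point in the input context case and recognising that both the outer and inner pieces it produces are instances of the already-characterised output contexts. Once this is in place, the rest of the argument reads off directly from the sequent grammar.
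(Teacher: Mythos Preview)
The paper states this as an Observation without proof; it is treated as evident from the grammar in~\eqref{eq:ns-cml}. Your argument supplies a correct and reasonably detailed justification, so there is no real comparison to make with the paper's ``proof''.

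Two minor remarks on your write-up. First, when you say the divergent subsequent containing the hole is an output context ``by the first claim'', you are really using the same structural induction again (this time on a hole sitting inside a purely LHS subtree), together with the hole-filling part of the first claim to conclude it qualifies as an output context. The phrasing ``by the first claim'' slightly obscures that you are re-running the argument rather than citing a conclusion.

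Second, the uniqueness assertion is a little looser than ``immediate from the tree structure'' suggests: at the divergence level, LHS siblings of the subtree containing the hole can be absorbed either into the innermost layer of $\Gamma'\conhole$ or into the outermost layer of $\Lambda\conhole$, so the split between $\Gamma'$ and $\Lambda$ is only canonical up to that convention. What \emph{is} uniquely determined is $\rig\Pi$ (the RHS component at the divergence level), and hence the output pruning $\ddGamma\conhole=\Gamma'\cons{\Lambda\conhole}$, which is all the paper actually needs. Neither the paper nor your argument dwells on this, and it is harmless for the subsequent development.
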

\vskip.5ex %pexarskip

We can choose to fill the hole of a context $\Gamma\conhole$ with nothing, denoted by
$\Gamma\conempty$,
which means we simply remove the occurrence of $\conhole$. In Example~\ref{exa:conhole} above, $\Gamma_1\conempty=\lef
C,\wbr{\wbr{\lef B,\lef C}}$ is a LHS sequent and $\Gamma_2\conempty=\lef
C,\wbr{\wbr{\lef B,\rig C}}$ is a full sequent. More generally, whenever
$\Gamma\conempty$ is a full sequent, then $\Gamma\conhole$ is an input
context.

\begin{definition}\label{def:pruning}
  For every input context
  $\Gamma\conhole=\Gamma'\cons{\Lambda\conhole,\rig\Pi}$, we define its
  \emph{output pruning} $\ddown{\Gamma}\conhole$ to be the context
  $\Gamma'\cons{\Lambda\conhole}$, i.e., the same context with the
  subtree containing the unique output formula and sharing the same root as $\conhole$ removed. Thus,
  $\ddGcons{\enspace}$ is an output context. 
  
  If $\Gamma\conhole$
  is already an output context then
  $\ddGcons{\enspace}=\Gamma\conhole$.
  For every full sequent
  $\Delta=\lef\Lambda,\rig\Pi$, we define $\ddDelta$ to be its LHS-sequent
  $\Lambda$. For a LHS sequent $\Delta$, we define $\ddDelta=\Delta$.
  
\end{definition}

In Example~\ref{exa:conhole} above, $\ddown{\Gamma_1}\conhole=\lef
C,\wbr{\conhole,\wbr{\lef B,\lef C}}$ and $\ddown{\Gamma_2}\conhole=\lef
C,\wbr{\conhole}$, whereas $\ddown{(\Gamma_1\conempty)}=\lef
C,\wbr{\wbr{\lef B,\lef C}}$ and $\ddown{(\Gamma_2\conempty)}=\lef C$. In particular,  $\ddown{\Gamma_1}\cons{\rig A} = \lef C , \wbr{\rig A , \wbr{ \lef B, \lef C }}$, which is not the same as $\ddown{(\Gamma_1 \cons{ \rig A })} = \lef C$.

%%%%%%%%%%%%%%%%%%%%%%%%%%%%%%%%%%%%%%%%%%%%%%%%%%%%%%%%%%%%%%%%%%%%%%%%%%%%%%%%%%%%%%%%%%%%
%%%%%%%%%%%%%%%%%%%%%%%%%%%%%%%%%%%%%%%%%%%%%%%%%%%%%%%%%%%%%%%%%%%%%%%%%%%%%%%%%%%%%%%%%%%%
%%%%%%%%%%%%%%%%%%%%%%%%%%%%%%%%%%%%%%%%%%%%%%%%%%%%%%%%%%%%%%%%%%%%%%%%%%%%%%%%%%%%%%%%%%%%

\begin{figure}[t!]
  %%\myfigskips
  \def\myskip{-1.5ex}
  \begin{center}%\normalsize
    \begin{tabular}[t]{c@{\qqquad}c@{\qqquad}c}
      $\vlinf{\lef\bot}{}{\colGcons{\lef\bot,\rig\Pi}}{}$ 
      &
      &
      $\vlinf{\idr}{}{\colGcons{\lef a,\rig a}}{}$
      \\ \\[\myskip]
      $\vlinfG{\lef\cand}{}{{\plef{A\cand B}}}{{\lef A,\lef B}}$ 
      &&
      $\vliiinf{\rig\cand}{}{ \colGcons{\prig{A\cand B}}}{ \colGcons{\rig
          A}}{}{\colGcons{\rig B}}$ 
      \\ \\[\myskip] 
      $\vliiinf{\lef\cor}{}{
        \colGcons{\plef{A\cor B},\rig\Pi}}{ \colGcons{\lef A,\rig\Pi}}{}{\colGcons{\lef B, \rig\Pi}}$ 
      &&
      $\vlinfG{\rig\cor}{}{\prig{A\cor B}}{\rig A}$ \quad
      $\vlinfG{\rig\cor}{}{\prig{A\cor B}}{\rig B}$
      \\ \\[\myskip]
      $\vliiinf{\lef\implies}{}{ \colGcons{\plef{A\implies B}}}{
        \colcons{\ddown\Gamma}{\rig A}}{}{\colGcons{\lef
          B}}$ 
      %% $\vliiinf{\lef\implies}{}{ \colGcons{\plef{A\implies B},\rig C}}{
      %%   \colcons{\Gamma}{\plef{A\implies B},\rig A}}{}{\colGcons{\lef
      %%     B,\rig C}}$ 
      && 
      $\vlinfG{\rig\implies}{}{{\prig{A\implies B}}}{{\lef A,\rig
          B}}$ 
      \\ \\[\myskip] 
      $\vlinf{\lef\wbox}{}{ \colGcons{\plef{\wbox
            A},\wbr{\Delta}}}{ \colGcons{\wbr{\lef A,\Delta}}}$
      && 
      $\vlinf{\rig\wbox}{}{\colGcons{\prig{\wbox A}}}{\colGcons{\wbr{\rig
            A}}}$ 
      \\ \\[\myskip] 
      $\vlinf{\lef\wdia}{}{\colGcons{\plef{\wdia
            A}}}{\colGcons{\wbr{\lef A}}}$ 
      &
      \qlap{$\vlinf{\conr}{}{\Gcons{\lef\Delta}}{\Gcons{\lef\Delta,\lef\Delta}}$}
      & 
      $\vlinf{\rig\wdia}{}{
        \colGcons{\prig{\wdia A},\wbr{\Delta}}}{ \colGcons{\wbr{\rig
            A,\Delta}}}$
    %  \\[\myskip]
    \end{tabular}%
    \caption{System $\NCK$}
    \label{fig:NCK}
  \end{center}
\end{figure}

%%%%%%%%%%%%%%%%%%%%%%%%%%%%%%%%%%%%%%%%%%%%%%%%%%%%%%%%%%%%%%%%%%%%%%
%%%%%%%%%%%%%%%%%%%%%%%%%%%%%%%%%%%%%%%%%%%%%%%%%%%%%%%%%%%%%%%%%%%%%%
%%%%%%%%%%%%%%%%%%%%%%%%%%%%%%%%%%%%%%%%%%%%%%%%%%%%%%%%%%%%%%%%%%%%%%

\section{Nested Sequent Systems for $\CK$ and its Variants}

We use the standard notions of \emph{inference rule} and
\emph{derivation} (or \emph{proof}) from usual sequent
calculi; all that changes is the notion of sequent, as
introduced in the previous section. We insist that every sequent in a
derivation is a full sequent.\footnote{In fact
all the inference rules that we discuss in this paper, are such
that it is enough to demand that the conclusion of a derivation is a
full sequent. It then will follow that every sequent occurring in the
derivation is full.} A proof of a formula $A$ is then a
derivation whose conclusion is the (full) sequent $\rig A$. 
We also use the standard notions of
\emph{admissibility} and \emph{derivability} of inference rules (see,
e.g., \cite{buss:98} or \cite{troelstra:schwichtenberg:00}).

Let us now consider the set of inference rules shown in Figure~\ref{fig:NCK}, which we call the system $\NCK$ for $\CK$. These rules are similar to the corresponding rules for intuitionistic modal logic
in~\cite{str:fossacs13} and classical modal logic in~\cite{brunnler:deepseq}, although there are some subtle yet crucial
differences:
\begin{itemize}
\item In~\cite{str:fossacs13} and~\cite{brunnler:deepseq} additive versions of $\lef\implies$ and $\lef\wbox$ were given rather than incorporating an explicit contraction rule in the system. While these were essentially design choices in the previous works, here it is necessary to make contraction explicit since our
  treatment of the $\bax$-axiom does not allow us to show the admissibility of contraction; this is explained further below. Consequently, our cut-elimination proof differs significantly from the ones in \cite{str:fossacs13}
  and~\cite{brunnler:deepseq}.
\item The $\lef\bot$-rule and the $\lef\cor$-rule have a
  restriction on where the output formula occurs in the context: it
  must be in the same subtree of the sequent as the principal formula
  of the rule.  
  The reason for this is the lack of $\kax3$ (for the
  $\lef\cor$-rule) and $\kax5$ (for the $\lef\bot$-rule).
\item In the $\lef\implies$-rule (and also in the $\cutr$-rule
  described below), the `output pruning' is defined
  differently from \cite{str:fossacs13}. There only the unique output formula
  is removed, whereas here the whole subtree containing the output
  formula is removed. The reason for this is the lack of the
  $\kax4$-axiom.
  \item  In~\cite{str:fossacs13} the structural rule
    {\small${\vlinfG{\boxmedr}{}{{\wbr{\Delta_1,\Delta_2}}}{{\wbr{\Delta_1},\wbr{\Delta_2}}}}$}
    is heavily used. However, in the constructive setting, this rule is not available as it is no longer sound: it corresponds to the
    $\kax4$-axiom when the output formula occurs in $\Delta_1$ or~$\Delta_2$.
\end{itemize}
Note that the $\idr$-rule applies only to atomic formulas but, as usual with
sequent-style systems, the general form is derivable and this can be shown by a straightforward induction:

\begin{proposition}
  The rule \hbox{\small$~~\smash{\vlinf{\idr}{}{\colGcons{\lef A,\rig A}}{}}~~$} is
  derivable in\/~$\NCK$.
\end{proposition}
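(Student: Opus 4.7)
The plan is to proceed by induction on the structure of the formula~$A$, constructing in each case an explicit derivation of $\Gcons{\lef A,\rig A}$ for an arbitrary output context $\Gcons{\cdot}$. The base cases are immediate: for $A=a$ atomic, the statement is the rule $\idr$ itself; for $A=\bot$, the rule $\lef\bot$ applies directly to $\Gcons{\lef\bot,\rig\bot}$ (instantiating the ``$\rig\Pi$'' of its schema to $\rig\bot$), and the side condition that the output formula sit in the same subtree as the principal formula is satisfied because both $\lef\bot$ and $\rig\bot$ are inserted together into the hole of the output context.

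For the inductive step on the propositional connectives I would apply the corresponding right rule followed by the left rule, in each case reducing to identities on strict subformulas. For $A=B\cand C$ the successive applications of $\rig\cand$ and $\lef\cand$ leave premises of the form $\Gcons{\lef B,\lef C,\rig B}$ and $\Gcons{\lef B,\lef C,\rig C}$, both supplied by the induction hypothesis on $B$ and $C$ in an extended context. For $A=B\cor C$ one first applies $\lef\cor$ (whose subtree side condition is met for the same reason as for $\lef\bot$), then $\rig\cor$ on each premise. For the modal cases the order is forced: for $A=\wbox B$ one applies $\rig\wbox$ first and then $\lef\wbox$, reducing to $\Gcons{\wbr{\lef B,\rig B}}$; whereas for $A=\wdia B$ one must apply $\lef\wdia$ first, so that $\lef B$ is placed inside a bracket, and only then does $\rig\wdia$ fit to pull $\rig B$ into that same bracket, again leaving $\Gcons{\wbr{\lef B,\rig B}}$ to be discharged by the induction hypothesis.

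I expect the implication case to be the main obstacle, because of the output pruning in $\lef\implies$. After applying $\rig\implies$ one obtains $\Gcons{\lef{B\implies C},\lef B,\rig C}$, and $\lef\implies$ with principal formula $\lef{B\implies C}$ then requires identifying the surrounding input context and unfolding the pruning of Definition~\ref{def:pruning}. By Observation~\ref{obs:context} that context can be parsed as $\Gamma'\cons{\Lambda\conhole,\rig\Pi}$ with $\rig\Pi=\rig C$ and $\Lambda\conhole=\conhole,\lef B$, so that the pruning yields $\Gcons{\conhole,\lef B}$; the left premise of $\lef\implies$ is therefore $\Gcons{\lef B,\rig B}$, available from the induction hypothesis on~$B$, while the right premise is $\Gcons{\lef C,\lef B,\rig C}$, available from the induction hypothesis on~$C$. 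Once this unfolding of the pruning is verified, the induction closes; this is the only place where genuine care is needed, since the rest of the reduction tree is just the usual cut-free identity expansion.
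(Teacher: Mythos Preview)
Your proposal is correct and matches the paper's approach: the paper simply remarks that the general identity is derivable ``by a straightforward induction'' and does not spell out the cases, so your argument is the natural unfolding of that induction on the structure of~$A$. Your handling of the nontrivial points---the subtree side condition for $\lef\bot$ and $\lef\cor$, the rule ordering for $\wdia$, and the computation of the output pruning in the $\implies$ case---is accurate.
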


In the course of this paper we make use of the following structural rules:
\begin{equation}
  \label{eq:cutr}
  \vlinf{\boxnecr}{}{\wbr{\Gamma}}{\Gamma}
  \qqquad
  \vlinfG{\weakr}{}{\lef\Delta}{\emptyset}
  \qqquad
  \vliiinf{\cutr}{}{
    \colGcons{\emptyset}}{
    \colcons{\ddGamma}{\rig A}}{}{\colGcons{\lef A}}
\end{equation}
called \emph{necessitation}, \emph{weakening}, 
and \emph{cut}, respectively. These rules are not part of
the system, but we will later see that they are all admissible. Note that in
the weakening rule $\Delta$ must be a LHS sequent, as is the case for the contraction
rule $\conr$, as one might expect in an intuitionistic setting. 
The cut rule makes use of the output
pruning in the same way as the $\lef\implies$-rule.

We now turn to the rules for the axioms in~\eqref{eq:ax}. For
$\dax$, $\tax$ and $\vax$, the corresponding rules are shown in
Figure~\ref{fig:dt4}, and they coincide with those in~\cite{str:fossacs13}.

\begin{figure}[t!]
  %%\myfigskips
  \def\myskip{-1.5ex}
  \begin{center}%\normalsize
    \begin{tabular}[t]{c@{\qquad}c@{\qquad}c@{\qquad}c@{\qquad}c}
      $\vlinfG{\rig\dint}{}{\prig{\wdia A}}{\wbr{\rig A}}$
      &&
      $\vlinfG{\rig\tint}{}{\prig{\wdia A}}{\rig A}$
      &&
      $\vlinf{\rig\vint}{}{\colGcons{\prig{\wdia A},\wbr{\Delta}}}{
        \colGcons{\wbr{\prig{\wdia A},\Delta}}}$
      \\ \\%[\myskip]
      $\vlinfG{\lef\dint}{}{\plef{\wbox A}}{\wbr{\lef A}}$
      &&
      $\vlinfG{\lef\tint}{}{\plef{\wbox A}}{\lef A}$
      &&
      $\vlinf{\lef\vint}{}{\colGcons{\plef{\wbox A},\wbr{\Delta}}}{
        \colGcons{\wbr{\plef{\wbox A},\Delta}}}$
  \end{tabular}
    \caption{Constructive $\rig\wdia$- and $\lef\wbox$-rules for the axioms
      $\dax$, $\tax$, and $\vax$.}
    \label{fig:dt4}
 \end{center}
\end{figure}

For the $\bax$ and $\fax$ axioms, the rules given
in~\cite{str:fossacs13} (themselves adapted from the classical
setting \cite{brunnler:deepseq}) are not sound in the constructive
setting, again due to the lack of~$\kax4$. For $\bax$, one could
restrict the rules of~\cite{str:fossacs13} in the following way,
\begin{equation}
  \label{eq:brig}
  \vlinf{\rig\bint_{\mathit{int}}}{}{
    \colGcons{\wbr{\Delta,\prig{\wdia A}}}}{
    \colGcons{\wbr{\Delta},\rig A}}
  \qualto
  \vlinf{\rig\bint_{\mathit{con}}}{}{
    \colGcons{\wbr{\lef\Delta,\prig{\wdia A}}}}{
    \colGcons{\rig A}}
\end{equation}
%and
\begin{equation}
  \label{eq:blef}
  \vlinf{\lef\bint_{\mathit{int}}}{}{
    \colGcons{\wbr{\Delta,\plef{\wbox A}}}}{
    \colGcons{\wbr{\Delta},\lef A}}
  \qualto
  \vlinf{\lef\bint_{\mathit{con}}}{}{
    \colGcons{\wbr{\lef\Delta,\plef{\wbox A}}}}{
    \colGcons{\lef A}}
\end{equation}
in order to regain soundness. However such a system is not yet complete as, for example, the formula $\wdia(\wbox A\cor
\bot)\implies A$ is no longer provable in the cut-free system.   

To address this problem, we introduce the structural rules in Figure~\ref{fig:CXstr} which were
used during the cut-elimination proofs of \cite{brunnler:deepseq} and~\cite{str:fossacs13}. These rules are identical to the ones in
\cite{brunnler:deepseq} and~\cite{str:fossacs13} for $\dax$, $\tax$,
and $\bax$. For $\vax$, our rule is slightly weaker than the one
in~\cite{str:fossacs13}, again due to the lack of $\kax4$. 
Finally, for $\fax$, the situation is more subtle: again, the general
versions of the logical rules $\flef$ and $\frig$
from~\cite{str:fossacs13} are no longer sound due to the lack of
$\kax4$. These $\flef$ and $\frig$ rules can each be
decomposed into three rules performing `simpler' inference steps, but unfortunately all three of these are unsound. The
first can be made sound by incorporating weakening, as shown for
$\brig$ and~$\blef$ in \eqref{eq:brig} and~\eqref{eq:blef} above, but, as expected, the resulting system is again incomplete.

Perhaps surprisingly, the structural rule $\fdot$ used in
\cite{brunnler:deepseq} and~\cite{str:fossacs13} is also no longer
sound in the constructive setting due to the lack of $\kax4$.
However, that rule (shown on the left below\footnote{In this rule the
  depth of $\Gamma\conhole\cons{\emptyset}$ has to be $>0$. For
  further details we refer the reader to
  \cite{brunnler:deepseq,str:fossacs13,marin:str:aiml14}.}) can also be decomposed into three rules (shown on the right below), of which the
first (shown in Figure~\ref{fig:CXstr}) is sound in the constructive
setting, i.e.\ with respect to $\HCK+\fax$ in the next section. This `decomposition' is
similar to the cases of the rules $\lef{\fax}$ and $\rig{\fax}$
discussed in \cite{str:fossacs13} and~\cite{marin:str:aiml14}.
$$
\vlinf{}{}{\Gamma\cons{\emptyset}\cons{\wbr{\Sigma}}}{\Gamma\cons{\wbr{\Sigma}}\cons{\emptyset}}
\quad\;\equiv\;\quad
\vlinf{}{}{\colGcons{\wbr{\Sigma},\wbr{\Delta}}}{\colGcons{\wbr{\wbr{\Sigma},\Delta}}}
\quad\!+\!\quad
\vlinf{}{}{\colGcons{\wbr{\Delta},\wbr{\wbr{\Sigma},\Theta}}}{\colGcons{\wbr{\Delta,\wbr{\Sigma}},\wbr{\Theta}}}
\quad\!+\!\quad
\vlinf{}{}{\colGcons{\wbr{\Delta,\wbr{\wbr{\Sigma},\Theta}}}}{\colGcons{\wbr{\Delta,\wbr{\Sigma},\wbr{\Theta}}}}
$$

\begin{figure}[t!]
  %%\myfigskips
  \begin{center}
    \begin{tabular}[t]{c@{\qquad}c@{\qquad}c@{\qquad}c@{\qquad}c}
      $\vlinfG{\ddot}{}{\emptyset}{\wbr{\emptyset}}$
      &
      $\vlinfG{\tdot}{}{\Sigma}{\wbr{\Sigma}}$
      &
      $\vlinf{\bdot}{}{
        \colGcons{\Sigma,\wbr{\Delta}}}{
        \colGcons{\wbr{\wbr{\Sigma},\Delta}}}$
      &
      $\vlinf{\vdot}{}{\colGcons{\wbr{\wbr{\Sigma}}}}{
        \colGcons{\wbr{\Sigma}}}$
      &
      $\vlinf{\fdot}{}{
        \colGcons{\wbr{\Sigma},\wbr{\Delta}}}{
        \colGcons{\wbr{\wbr{\Sigma},\Delta}}}$
  \end{tabular}
    \caption{Structural rules for the axioms
      $\dax$, $\tax$, $\bax$, $\vax$, and $\fax$} 
    \label{fig:CXstr}
  \end{center}
\end{figure}

In the remainder of this paper we show soundness and completeness of our systems. For this let us introduce 
the following notation. We use $\Xax$ and $\Yax$ for sets of axioms, i.e.,
$\Xax, \Yax \subseteq \set {\dax, \tax, \bax, \vax, \fax}$, and we 
write $\Xdot$ (or $\Ydot$) to
denote the set of corresponding structural rules shown in
Figure~\ref{fig:CXstr}. If $\Xax\subseteq\set{\dax,\tax,\vax}$, we
write $\Xlefrig$ for the set of corresponding $\lef\wbox$- and $\rig\wdia$-rules shown in
Figure~\ref{fig:dt4}.
Then, we may write $\NCK+\Xlefrig+\Ydot$ to denote $\NCK$ augmented with the rules $\Xlefrig$ and $\Ydot$; in such cases no assumptions on $\Xax$ or $\Yax$ further to those stated are assumed. In particular, their intersection does not need to be empty, nor does one need to be a subset of the other.

%%%%%%%%%%%%%%%%%%%%%%%%%%%%%%%%%%%%%%%%%%%%%%%%%%%%%%%%%%%%%%%%%%%%%%%%
%%%%%%%%%%%%%%%%%%%%%%%%%%%%%%%%%%%%%%%%%%%%%%%%%%%%%%%%%%%%%%%%%%%%%%%%
%%%%%%%%%%%%%%%%%%%%%%%%%%%%%%%%%%%%%%%%%%%%%%%%%%%%%%%%%%%%%%%%%%%%%%%%

\section{Soundness}\label{sec:soundness}

To our knowledge there are no standard Kripke semantics for all the
various constructive modal logics and consideration of this issue is
beyond the scope of this work. Therefore we show soundness of our
rules with respect to the Hilbert system.

For this we define $\HCK$ to be
some complete set of axioms for intuitionistic propositional logic extended by
the axioms $\kax1$ and $\kax2$, shown in~\eqref{eq:ik}, together with
the rules $\mpr$ for \emph{modus ponens} and $\necr$ for \emph{necessitation}:
\begin{equation}
  \label{eq:mp-nec}
  \vliinf{\mpr}{}{B}{A}{A\implies B}
  \qqqqquad
  \vlinf{\necr}{}{\wbox A}{A}
\end{equation}
For a set $\Xax \subseteq \set {\dax, \tax, \bax, \vax, \fax}$ we then
write $\HCK+\Xax$ for the system obtained from $\HCK$ by adding the
axioms in $\Xax$. If $\Xax$ is a singleton $\set{\xax}$, we just write
$\HCK+\xax$.
Soundness can now be stated in the following theorem:
\begin{theorem}[Soundness]\label{thm:sound-con}
  Let $\Xax\subseteq\set{\dax,\tax,\vax}$, let $\Yax\subseteq\set{\dax,\tax,\bax,\vax,\fax}$, and let~
%  \begin{equation*}
    {\small$\upsmash{\vliiinf{\rr}{}{\Gamma}{\Gamma_1}{\ldots}{\Gamma_n}}$}
%  \end{equation*}
  {\rm(}for $n\in\set{0,1,2}${\rm)}
  be an instance of a rule in $\NCK+\weakr+\cutr+\Xlefrig+\Ydot$. Then:
  \begin{enumerate}[label=(\roman{*}), leftmargin=*] %[(i)]
  \item the formula
    $\formula{\Gamma_1}\cand\cdots\cand\formula{\Gamma_n}\implies\formula{\Gamma}$
    is provable in $\HCK+\Xax+\Yax$, and
  \item whenever a sequent $\Gamma$ is provable in $\NCK+\weakr+\cutr+\Xlefrig+\Ydot$, then
    $\formula\Gamma$ is provable in $\HCK+\Xax+\Yax$.
  \end{enumerate}
\end{theorem}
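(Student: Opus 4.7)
The plan is to reduce (ii) to (i) by a routine induction on the height of a nested derivation: given proofs of $\formula{\Gamma_i}$ for each premise (by the inductive hypothesis) together with the implication provided by (i), repeated applications of $\mpr$ yield a proof of $\formula{\Gamma}$.

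For part (i), the first step is to establish a deep congruence lemma for the formula translation, which is the standard workhorse of soundness arguments in deep-inference-style calculi. Concretely, I would show by induction on contexts that if $A\implies B$ is provable in $\HCK+\Xax+\Yax$, then for every output context $\Gamma\conhole$ the implication $\formula{\Gcons{\rig A}}\implies\formula{\Gcons{\rig B}}$ is also provable, and that for every input context $\Gamma\conhole=\Gamma'\cons{\Lambda\conhole,\rig\Pi}$ the corresponding contravariant statement holds, with the LHS material of the context and the output formula $\Pi$ supplying the same extra antecedents on both sides. The induction step propagates the assumed implication through each box layer using $\kax1$ with $\necr$ and through each diamond layer using $\kax2$ with $\necr$.

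With this lemma in hand, part (i) is a case analysis. The propositional rules $\lef\cand,\rig\cand,\lef\cor,\rig\cor,\rig\implies,\conr,\lef\bot,\idr$ reduce to routine intuitionistic reasoning lifted through the context by the congruence lemma, after one observes that the restriction on the location of $\rig\Pi$ in the $\lef\bot$ and $\lef\cor$ rules is exactly what makes a local propositional step available without invoking $\kax3$ or $\kax5$. The basic modal rules $\lef\wbox,\rig\wbox,\lef\wdia,\rig\wdia$ follow directly from $\kax1$, $\kax2$ and $\necr$. Each structural rule in $\Ydot$ corresponds to its Hilbert counterpart applied at the hole and then propagated by the lemma: $\ddot$ to $\dax$, $\tdot$ to $\tax$, $\vdot$ to $\vax$, $\bdot$ to $\bax$, and the decomposed $\fdot$ to one direction of $\fax$; the rules $\Xlefrig$ and $\weakr$ are similarly direct.

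The main obstacle will be the output pruning employed by $\lef\implies$ and $\cutr$. There the left premise lives over $\ddGamma\conhole$, yet the conclusion carries the full context $\Gamma\conhole=\Gamma'\cons{\Lambda\conhole,\rig\Pi}$ which additionally contains the subtree rooted at $\rig\Pi$. The key auxiliary observation is that passing from $\ddGamma$ back to $\Gamma$ only re-introduces material strictly inside $\wbox$ operators of $\Gamma'\conhole$ at positions disjoint from the hole, so the missing material can be absorbed under those boxes using only intuitionistic reasoning together with $\kax1$, without any appeal to $\kax4$. This is precisely where the careful design of Definition~\ref{def:pruning} -- removing the entire subtree rather than only the output formula -- pays off, and it is also the reason why the $\bdot$ and $\fdot$ rules are formulated in the specific shapes given here rather than the stronger versions available in the intuitionistic setting of~\cite{str:fossacs13}.
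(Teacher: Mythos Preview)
Your overall architecture matches the paper's: a deep congruence lemma (the paper's Lemmas~\ref{lem:DI-sound} and~\ref{lem:DI-sound-inv}, plus a binary version~\ref{lem:DI-sound-bin}) followed by a rule-by-rule case analysis, with (ii) obtained from (i) by induction.

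There are, however, two places where your sketch misidentifies the actual work required.

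\textbf{The $\lef\implies$ and $\cutr$ cases.} Your description locates the difficulty in $\Gamma'\conhole$ and claims it is resolved ``using only intuitionistic reasoning together with $\kax1$''. That is not where the action is. Writing $\Gamma\conhole=\Gamma'\cons{\Lambda\conhole,\rig\Pi}$, the outer context $\Gamma'\conhole$ is handled by the \emph{binary} congruence lemma; the nontrivial step is inside, where one must show $(L_X\cand L_Z)\implies L_Y$ with $L_X=\formula{\Lambda\cons{\rig A}}$ a $\wbox$-tower and $L_Y,L_Z$ the corresponding $\wdia$-towers. This is an induction on the depth of $\Lambda\conhole$ using the mixed principle $(\wbox A\cand\wdia B)\implies\wdia C$ whenever $(A\cand B)\implies C$, i.e.\ $\kax2$, not $\kax1$. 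The point of the pruning in Definition~\ref{def:pruning} is precisely that it keeps $\rig\Pi$ at the root of $\Lambda$, so this box/diamond interaction never has to cross into $\rig\Pi$ --- that is what avoids $\kax4$.

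\textbf{The $\bdot$ and $\fdot$ rules.} These do \emph{not} reduce to ``the Hilbert counterpart applied at the hole''. Each requires a three-way case split on whether the output formula lies in $\Gamma\conhole$, in $\Sigma$, or in $\Delta$, and each case needs a dedicated auxiliary theorem that is a nontrivial consequence of $\bax$ or $\fax$: for instance, when the output is in $\Gamma\conhole$ one needs $(S\cand\wdia D)\implies\wdia(\wdia S\cand D)$ for $\bdot$, and $(\wdia S\cand\wdia D)\implies\wdia(\wdia S\cand D)$ for $\fdot$. The paper isolates six such facts (its Lemma~\ref{lem:b5aux}) and proves them separately; both conjuncts of $\fax$ are used, not just one direction. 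Without these auxiliaries your congruence lemma alone will not close the $\bdot$/$\fdot$ cases.
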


Clearly, (ii) follows immediately from (i) using an induction on the size of the derivation.
To prove (i), we start with the axioms:

\begin{lemma}\label{lem:0-rule-sound}
  Let $\Xax\subseteq\set{\dax,\tax,\bax,\vax,\fax}$, let $\Gamma\conhole$
  be an output context, and $\rig\Pi$ be an RHS-sequent. Then
  $\formula{\Gamma\cons{\lef a,\rig a}}$ and
  $\formula{\Gamma\cons{\lef\bot,\rig\Pi}}$ are provable in
  $\HCK+\Xax$.
\end{lemma}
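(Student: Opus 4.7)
The plan is to proceed by induction on the depth $n$ of the output context $\Gamma\conhole$, using the normal form given by Observation~\ref{obs:context}: every output context has the shape
\begin{equation*}
\Gamma\conhole \;=\; \lef\Gamma_1,\wbr{\lef\Gamma_2,\wbr{\ldots,\wbr{\lef\Gamma_n,\conhole}\ldots}}
\end{equation*}
for some $n\ge 0$. Let $\Delta$ stand for either of the two full sequents $\lef a,\rig a$ or $\lef\bot,\rig\Pi$. Then by unfolding the definition of $\mathit{fm}$, the formula $\formula{\Gamma\cons\Delta}$ is the iterated implication/box chain
\begin{equation*}
\formula{\Gamma_1}\implies\wbox\bigl(\formula{\Gamma_2}\implies\wbox(\cdots\wbox(\formula{\Gamma_n}\implies\formula\Delta)\cdots)\bigr)\quadfs
\end{equation*}

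First I would handle the innermost formula $\formula\Delta$. For $\Delta=\lef a,\rig a$ we have $\formula\Delta=a\implies a$, which is a theorem of intuitionistic propositional logic, hence of $\HCK+\Xax$. For $\Delta=\lef\bot,\rig\Pi$ we have $\formula\Delta=\bot\implies\formula{\rig\Pi}$, which is provable by ex falso quodlibet (again a theorem of the intuitionistic propositional fragment of $\HCK+\Xax$).

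Next I would do the inductive propagation, which is really just the observation that theoremhood is preserved under the two operations that build up an output context, namely prefixing an input sequent $\lef\Sigma$ and wrapping in a bracket. Concretely, suppose $\phi$ is provable in $\HCK+\Xax$. Then:
\begin{itemize}
\item By necessitation (rule $\necr$), $\wbox\phi$ is provable.
\item Using the propositional axiom $\phi\implies(B\implies\phi)$ together with modus ponens, $B\implies\phi$ is provable for any formula $B$.
\end{itemize}
Combining these two facts, if $\formula{\Gamma'\cons\Delta}$ is provable for the tail context $\Gamma'\conhole=\lef\Gamma_2,\wbr{\ldots,\wbr{\lef\Gamma_n,\conhole}\ldots}$, then $\wbox\formula{\Gamma'\cons\Delta}$ is provable, and so is $\formula{\Gamma_1}\implies\wbox\formula{\Gamma'\cons\Delta}=\formula{\Gamma\cons\Delta}$. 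This closes the induction.

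There is no real obstacle here: the lemma amounts to a bookkeeping check that the modalities and implications introduced by nesting an output context preserve provability, and both the $\necr$-rule and the intuitionistic propositional axioms of $\HCK$ provide exactly what is needed. The only point worth being careful about is that the output pruning in $\lef\implies$ and $\cutr$ plays no role in this lemma, since $\Gamma\conhole$ is already an output context, so filling its hole with the full sequent $\Delta$ yields a well-formed full sequent with the expected corresponding formula.
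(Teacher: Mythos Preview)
Your proposal is correct and follows exactly the approach the paper indicates (``By induction on the structure of $\Gamma\conhole$''), just spelled out in detail. One small imprecision: the displayed formula for $\formula{\Gamma\cons\Delta}$ is the curried form, whereas the literal unfolding gives the innermost part as $(\formula{\Gamma_n}\cand\formula{\Phi_\Delta})\implies\formula{\Psi_\Delta}$ rather than $\formula{\Gamma_n}\implies\formula{\Delta}$; but these are intuitionistically equivalent, so your argument via necessitation and antecedent weakening goes through unchanged.
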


\begin{proof}
By induction on the structure of $\Gamma\conhole$.
  %%\qed
\end{proof}

For showing soundness of the inference rules with one premise, 
we first have to verify that the deep inference reasoning remains valid in
the constructive setting. 
This is shown in the following three lemmas.

\begin{lemma}\label{lem:pre-context}
  Let $\Xax\subseteq\set{\dax,\tax,\bax,\vax,\fax}$, and let $A$, $B$,
  and $C$ be formulas.
  \begin{enumerate}[label=(\roman{*}), leftmargin=*] 
  \item\label{l:impl} If $A\implies B$ is provable in $\HCK+\Xax$, then so is
    $(C\implies A)\implies(C\implies B)$.
 \item\label{l:impl-inv} If $A\implies B$ is provable in $\HCK+\Xax$, then so is
    $(B\implies C)\implies(A\implies C)$.
  \item\label{l:and} If $A\implies B$ is provable in $\HCK+\Xax$, then so is 
    $(C\cand A)\implies(C\cand B)$.
    \item\label{l:box} If $A\implies B$ is provable in $\HCK+\Xax$, then so is
            $\wbox A\implies\wbox B$.
     \item\label{l:dia} If $A\implies B$ is provable in $\HCK+\Xax$, then so is
        $\wdia A\implies\wdia B$.

  \end{enumerate}
\end{lemma}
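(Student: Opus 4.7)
The plan is to handle parts \ref{l:impl}--\ref{l:and} by purely intuitionistic propositional reasoning, and then dispatch parts \ref{l:box}--\ref{l:dia} by combining necessitation with the modal axioms $\kax1$ and $\kax2$ built into $\HCK$. Since $\HCK+\Xax$ extends a complete calculus for intuitionistic propositional logic, any intuitionistic propositional tautology is a theorem, and I may freely reason in the meta-theory using the deduction theorem for intuitionistic logic.

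For \ref{l:impl}, observe that $(A\implies B)\implies((C\implies A)\implies(C\implies B))$ is an intuitionistic propositional tautology (indeed, a well-known Hilbert axiom scheme). Hence from a proof of $A\implies B$ in $\HCK+\Xax$ one application of $\mpr$ yields $(C\implies A)\implies(C\implies B)$. Part \ref{l:impl-inv} follows in exactly the same manner from the intuitionistic tautology $(A\implies B)\implies((B\implies C)\implies(A\implies C))$, and part \ref{l:and} from $(A\implies B)\implies((C\cand A)\implies(C\cand B))$, which again is intuitionistically valid.

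For \ref{l:box}, I first apply the necessitation rule $\necr$ from \eqref{eq:mp-nec} to a proof of $A\implies B$, obtaining a proof of $\wbox(A\implies B)$. The axiom $\kax1$, namely $\wbox(A\implies B)\implies(\wbox A\implies\wbox B)$, is in $\HCK$, so a single application of $\mpr$ yields $\wbox A\implies\wbox B$. Part \ref{l:dia} is entirely analogous, using $\kax2$ in place of $\kax1$: necessitation on $A\implies B$ gives $\wbox(A\implies B)$, and then $\mpr$ with $\kax2$ gives $\wdia A\implies\wdia B$.

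There is no serious obstacle here: the statement is essentially just a packaging of standard monotonicity properties of normal modal operators over intuitionistic propositional logic. The only thing to be mindful of is that the modal part of the argument uses the two $\mathsf{K}$-axioms built into $\HCK$ and no further properties, so the conclusion genuinely holds uniformly for every choice of $\Xax\subseteq\set{\dax,\tax,\bax,\vax,\fax}$.
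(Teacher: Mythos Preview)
Your proof is correct and follows essentially the same approach as the paper: parts \ref{l:impl}--\ref{l:and} are handled by appealing to the completeness of $\HCK$ over intuitionistic propositional logic, and parts \ref{l:box}--\ref{l:dia} by necessitation followed by $\kax1$ or $\kax2$, respectively. You have simply spelled out a few more details than the paper does.
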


\begin{proof}
\ref{l:impl}, \ref{l:impl-inv} and \ref{l:and} follow by completeness of $\HCK$ over intuitionistic logic. \ref{l:box} and \ref{l:dia} follow by necessitation and $\kax1$ or $\kax2$, respectively.
  %%\qed
\end{proof}

\begin{lemma}\label{lem:DI-sound}
  Let $\Xax\subseteq\set{\dax,\tax,\bax,\vax,\fax}$, let $\Delta$ and
  $\Sigma$ be full sequents, and let $\Gamma\conhole$ be an output context.
  If $\formula{\Delta}\implies\formula\Sigma$ is  provable in $\HCK+\Xax$, then so is 
  $\formula{\Gamma\cons{\Delta}}\implies\formula{\Gamma\cons{\Sigma}}$.
\end{lemma}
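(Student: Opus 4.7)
The plan is a straightforward induction on the structure of $\Gamma\conhole$, exploiting the normal form from Observation~\ref{obs:context}: every output context has the shape $\lef\Lambda_1,\wbr{\lef\Lambda_2,\wbr{\ldots,\wbr{\lef\Lambda_n,\conhole}\ldots}}$. Unfolding the definition of the corresponding formula, filling the hole with a full sequent $\Theta$ yields
$$\formula{\Gamma\cons{\Theta}} \;=\; \formula{\Lambda_1}\implies\wbox(\formula{\Lambda_2}\implies\wbox(\cdots\implies\wbox(\formula{\Lambda_n}\implies\formula\Theta)\cdots))$$
so the task is to thread the hypothesis $\formula\Delta\implies\formula\Sigma$ through $n$ layers of ``implication-under-box''.

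I would carry out the induction on $n$ (equivalently, on the nesting depth of the hole). In the base case $n=0$, i.e.\ $\Gamma\conhole = \lef\Lambda,\conhole$ with $\Lambda$ a (possibly empty) LHS sequent, Lemma~\ref{lem:pre-context}\ref{l:impl} applied with $C = \formula\Lambda$ immediately converts $\formula\Delta\implies\formula\Sigma$ into $\formula{\Gamma\cons{\Delta}}\implies\formula{\Gamma\cons{\Sigma}}$.

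For the inductive step, decompose $\Gamma\conhole = \lef\Lambda,\wbr{\Gamma'\conhole}$ where $\Gamma'\conhole$ is a strictly smaller output context. The induction hypothesis supplies $\formula{\Gamma'\cons{\Delta}}\implies\formula{\Gamma'\cons{\Sigma}}$ in $\HCK+\Xax$. Since filling an output context with a full sequent yields a full sequent (Observation~\ref{obs:context}), the bracket $\wbr{\Gamma'\cons{\Delta}}$ has corresponding formula $\wbox\formula{\Gamma'\cons{\Delta}}$, and analogously for $\Sigma$. Applying Lemma~\ref{lem:pre-context}\ref{l:box} then pushes the implication under the outer $\wbox$, after which Lemma~\ref{lem:pre-context}\ref{l:impl} with $C = \formula\Lambda$ prepends $\formula\Lambda\implies(-)$, yielding the desired $\formula{\Gamma\cons{\Delta}}\implies\formula{\Gamma\cons{\Sigma}}$.

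I do not expect any serious obstacle, since all the logical reasoning has been isolated in Lemma~\ref{lem:pre-context}. The only delicate point is bookkeeping: at each level of the induction one must verify that the intermediate substitution is a full sequent, so that the clause $\formula{\wbr{\Phi,\Psi}} = \wbox\formula{\Phi,\Psi}$ of the corresponding-formula definition is the one that applies (rather than the LHS clause $\formula{\wbr{\Phi}} = \wdia\formula{\Phi}$). This is guaranteed by Observation~\ref{obs:context}, so the induction goes through cleanly and makes no use of any axiom beyond those handled by Lemma~\ref{lem:pre-context}.
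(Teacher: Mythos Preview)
Your proposal is correct and follows exactly the route the paper takes: induction on the structure of the output context (via Observation~\ref{obs:context}), invoking Lemma~\ref{lem:pre-context}\ref{l:impl} and~\ref{l:box} at each layer. The paper's proof is a single sentence citing precisely these ingredients, and your write-up simply fills in the details.
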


\begin{proof}%%[of Lemma~\ref{lem:DI-sound}]
  Induction on the structure of $\Gamma\conhole$ (see Observation~\ref{obs:context}), using
  Lemma~\ref{lem:pre-context}.\ref{l:impl} and~\ref{l:box}.
  %%\qed
\end{proof}

\begin{lemma}\label{lem:DI-sound-inv}
  Let $\Xax\subseteq\set{\dax,\tax,\bax,\vax,\fax}$, let $\Delta$ and
  $\Sigma$ be LHS-sequents, and $\Gamma\conhole$ an input context.
  If $\formula\Sigma\implies\formula{\Delta}$ is  provable in $\HCK+\Xax$, then so is 
  $\formula{\Gamma\cons{\Delta}}\implies\formula{\Gamma\cons{\Sigma}}$.
\end{lemma}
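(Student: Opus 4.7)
The plan is to reduce this to Lemma~\ref{lem:DI-sound} using the structural decomposition of input contexts given by Observation~\ref{obs:context}, together with the contravariance of implication on the right, which is captured by Lemma~\ref{lem:pre-context}\ref{l:impl-inv}.

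First, by Observation~\ref{obs:context}, I write $\Gamma\conhole=\Gamma'\cons{\Lambda\conhole,\rig\Pi}$, where $\Gamma'\conhole$ and $\Lambda\conhole$ are output contexts and $\rig\Pi$ is the unique RHS-subtree on the path from the root to the hole. Since $\Lambda\conhole$ is an output context and $\Delta,\Sigma$ are LHS sequents, $\Lambda\cons\Delta$ and $\Lambda\cons\Sigma$ are themselves LHS sequents. An argument directly analogous to Lemma~\ref{lem:DI-sound}, by induction on the shape \eqref{eq:wcontext} of $\Lambda\conhole$ but using Lemma~\ref{lem:pre-context}\ref{l:and} and~\ref{l:dia} in place of \ref{l:and} and~\ref{l:box} (since wrapping a LHS sequent with $\wbr\cdot$ produces a $\wdia$, not a $\wbox$), turns the hypothesis $\formula\Sigma\implies\formula\Delta$ into $\formula{\Lambda\cons\Sigma}\implies\formula{\Lambda\cons\Delta}$.

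Next, I apply Lemma~\ref{lem:pre-context}\ref{l:impl-inv} with $C=\formula{\rig\Pi}$ to obtain
\[
  (\formula{\Lambda\cons\Delta}\implies\formula{\rig\Pi})\implies(\formula{\Lambda\cons\Sigma}\implies\formula{\rig\Pi}),
\]
which by the definition of $\formula{\cdot}$ on full sequents is precisely $\formula{\Lambda\cons\Delta,\rig\Pi}\implies\formula{\Lambda\cons\Sigma,\rig\Pi}$. Finally, applying Lemma~\ref{lem:DI-sound} to the output context $\Gamma'\conhole$ and the full sequents $\Lambda\cons\Delta,\rig\Pi$ and $\Lambda\cons\Sigma,\rig\Pi$ yields the desired $\formula{\Gamma\cons\Delta}\implies\formula{\Gamma\cons\Sigma}$.

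The main subtlety is the contravariance: an input context differs from an output context precisely in that the unique RHS-subtree on the path from the root to the hole contributes an implication whose antecedent houses the hole, thus inverting the polarity in which the hypothesis must be used. This is exactly the situation Lemma~\ref{lem:pre-context}\ref{l:impl-inv} is designed to handle, and it is essentially the only new ingredient beyond Lemma~\ref{lem:DI-sound} that the input-context case requires.
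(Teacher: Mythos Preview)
Your proposal is correct and follows essentially the same argument as the paper: decompose the input context via Observation~\ref{obs:context} as $\Gamma'\cons{\Lambda\conhole,\rig\Pi}$, induct on the output context $\Lambda\conhole$ using Lemma~\ref{lem:pre-context}\ref{l:and} and~\ref{l:dia} to obtain $\formula{\Lambda\cons\Sigma}\implies\formula{\Lambda\cons\Delta}$, apply Lemma~\ref{lem:pre-context}\ref{l:impl-inv} for the contravariant step, and finish with Lemma~\ref{lem:DI-sound} on $\Gamma'\conhole$. Your added commentary on why the polarity flip is the only new ingredient is accurate and matches the paper's terse presentation.
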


\begin{proof} %%[of Lemma~\ref{lem:DI-sound-inv}]
  As in~\cite{str:fossacs13}. By Observation~\ref{obs:context},
  $\Gamma\conhole=\Gamma'\cons{\Lambda\conhole,\Pi}$ for some
  $\Gamma'\conhole$ and $\Lambda\conhole$ and $\Pi$.  By induction on
  $\Lambda\conhole$, using Lemma~\ref{lem:pre-context}.\ref{l:and}
  and~\ref{l:dia}, we get
  $\formula{\Lambda\cons{\Sigma}}\implies\formula{\Lambda\cons{\Delta}}$,
  and from Lemma~\ref{lem:pre-context}.\ref{l:impl-inv} it then
  follows that $(\formula{\Lambda\cons{\Delta}}\implies\formula{\Pi})
  \implies(\formula{\Lambda\cons{\Sigma}}\implies\formula{\Pi})$.  Now
  the statement follows from Lemma~\ref{lem:DI-sound}.  %%\qed
\end{proof}

We can now prove the soundness of rules with one premiss.

\begin{lemma}\label{lem:1-rule-sound}
  Let $\Xax\subseteq\set{\dax,\tax,\bax,\vax,\fax}$, and let~ 
%  \begin{equation*}
    {\small$\upsmash{\vlinf{\rr}{}{\Gamma_2}{\Gamma_1}}$}
%  \end{equation*}
  be an instance of $\weakr$, $\conr$, $\rig\cor$, $\rig\wbox$,
  $\rig\wdia$, $\rig\implies$, $\lef\cand$, $\lef\wdia$, or\/ $\lef\wbox$.
  Then $\formula{\Gamma_1}\implies\formula{\Gamma_2}$ is
  provable in $\HCK+\Xax$.
\end{lemma}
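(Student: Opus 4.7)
The plan is to proceed by case analysis on the rule $\rr$. In each case I would write the premise $\Gamma_1$ and conclusion $\Gamma_2$ as $\Gamma\cons{\Delta_1}$ and $\Gamma\cons{\Delta_2}$ respectively, for a common context $\Gamma\conhole$ and fillings $\Delta_1,\Delta_2$ that differ only in what the rule manipulates, and then reduce the claim to a purely \emph{local} implication between $\formula{\Delta_1}$ and $\formula{\Delta_2}$. Whether $\Gamma\conhole$ is an output or an input context is determined by the polarity of the filling: for the right logical rules $\rig\cor$, $\rig\wbox$, $\rig\wdia$, $\rig\implies$ both fillings carry the unique output formula, so $\Gamma\conhole$ is an output context and I would invoke Lemma~\ref{lem:DI-sound}; for $\weakr$, $\conr$, $\lef\cand$, $\lef\wdia$, and $\lef\wbox$ both fillings are LHS sequents, so $\Gamma\conhole$ is an input context and I would invoke Lemma~\ref{lem:DI-sound-inv}, which demands the local implication in the opposite direction.

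Most cases reduce to an intuitionistic tautology. Specifically, $\rig\implies$ reduces to $(A\implies B)\implies(A\implies B)$, $\lef\cand$ to $(A\cand B)\implies(A\cand B)$, $\lef\wdia$ to $\wdia A\implies\wdia A$, $\weakr$ to $\formula{\lef\Delta}\implies\top$, $\conr$ to $\formula{\lef\Delta}\implies\formula{\lef\Delta}\cand\formula{\lef\Delta}$, and $\rig\cor$ to $A\implies A\cor B$ (or $B\implies A\cor B$ for the other variant). The rule $\rig\wbox$ is almost as immediate: the required local implication is $\wbox(\top\implies A)\implies\wbox A$, which follows from the trivial $(\top\implies A)\implies A$ by an application of Lemma~\ref{lem:pre-context}\ref{l:box}.

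The two remaining cases $\rig\wdia$ and $\lef\wbox$ are genuinely modal. For $\rig\wdia$, since the unique output $\rig A$ sits inside the box in the premise, $\Delta$ must be a LHS sequent, and the local implication unfolds to $\wbox(\formula\Delta\implies A)\implies(\wdia\formula\Delta\implies\wdia A)$, which is a direct instance of $\kax2$. The main obstacle, and the sole genuinely creative step, is $\lef\wbox$: here the local implication demanded by Lemma~\ref{lem:DI-sound-inv} is $\wbox A\cand\wdia\formula\Delta\implies\wdia(A\cand\formula\Delta)$, which is particularly delicate because for $\Delta=\emptyset$ it must still hold \emph{without} appealing to $\dax$. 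I would derive it inside $\HCK$ by starting from the intuitionistic tautology $A\implies(\formula\Delta\implies(A\cand\formula\Delta))$, applying necessitation and $\kax1$ to obtain $\wbox A\implies\wbox(\formula\Delta\implies(A\cand\formula\Delta))$, and then chaining with the $\kax2$-instance $\wbox(\formula\Delta\implies(A\cand\formula\Delta))\implies(\wdia\formula\Delta\implies\wdia(A\cand\formula\Delta))$; currying the resulting implication then yields the desired local statement.
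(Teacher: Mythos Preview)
Your overall strategy is right and matches the paper's: split into the right-rule cases (handled via Lemma~\ref{lem:DI-sound}) and the left/structural cases (handled via Lemma~\ref{lem:DI-sound-inv}), with the genuinely modal work confined to $\rig\wdia$ and $\lef\wbox$. Your treatment of $\rig\wdia$ via $\kax2$ is exactly what the paper does.

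However, there is a genuine gap in your treatment of $\lef\wbox$. You assert that for this rule ``both fillings are LHS sequents'', but inspect the rule again:
\[
\vlinf{\lef\wbox}{}{\Gamma\cons{\lef{\wbox A},\wbr{\Delta}}}{\Gamma\cons{\wbr{\lef A,\Delta}}}
\]
Here $\Delta$ is \emph{not} required to be a LHS sequent (contrast with $\weakr$ and $\conr$, where the rule explicitly writes $\lef\Delta$). The unique output formula may perfectly well sit inside $\Delta$, in which case $\wbr{\lef A,\Delta}$ and $\lef{\wbox A},\wbr{\Delta}$ are both full sequents, $\Gamma\conhole$ is an \emph{output} context, and your computation $\formula{\wbr{\Delta}}=\wdia\formula\Delta$ is wrong: it should be $\wbox\formula\Delta$. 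Your local implication $\wbox A\cand\wdia\formula\Delta\implies\wdia(A\cand\formula\Delta)$ therefore covers only the case where the output lies in $\Gamma\conhole$.

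The paper handles this by an explicit case distinction. When the output is in $\Delta$, one applies Lemma~\ref{lem:DI-sound} and must show the local implication $\wbox\formula{\lef A,\Delta}\implies(\wbox A\implies\wbox\formula\Delta)$; after currying $\formula{\lef A,\Delta}$, this is an instance of $\kax1$. When the output is in $\Gamma\conhole$, one applies Lemma~\ref{lem:DI-sound-inv} and your $\kax2$-based argument goes through. You need to add the first case.
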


\begin{proof}
  For the rules $\rig\cor$, $\rig\wbox$, $\rig\wdia$, $\rig\implies$
  this follows immediately from Lemma~\ref{lem:DI-sound}, where for
  $\rig\wdia$ we need the $\kax2$-axiom. For the other rules we apply
  Lemma~\ref{lem:DI-sound-inv}.  Note that for the $\lef\wbox$-rule we
  need a case distinction: If the output formula occurs inside
  $\Delta$, then we use $\kax1$ and Lemma~\ref{lem:DI-sound}.  If the
  output formula occurs inside the context $\Gamma\conhole$, then we use
  $\kax2$ and Lemma~\ref{lem:DI-sound-inv}.  %%\qed
\end{proof}

Let us now turn to showing the soundness of the branching rules
$\rig\cand$, $\lef\cor$, $\lef\implies$, and $\cutr$. For this, we
develop appropriate versions of Lemmas \ref{lem:pre-context} and
\ref{lem:DI-sound} that deal with branching behavior. 
Note that, contrary to the intuitionistic case
in~\cite{str:fossacs13}, we do not have such a version of
Lemma~\ref{lem:DI-sound-inv} in the constructive setting. This is due
to the lack of axiom~$\kax3$.

\begin{lemma}\label{lem:pre-context-bin}
  Let $\Xax\subseteq\set{\dax,\tax,\bax,\vax,\fax}$, and let $A$, $B$, $C$,
  and~$D$ be formulas.
  \begin{enumerate}[label=(\roman{*}), leftmargin=*]
  \item\label{l:impl-bin} If $(A\cand B)\implies C$ is provable in $\HCK+\Xax$, then so is
    $((D\implies A)\cand(D\implies B))\implies(D\implies C)$.
      \item\label{l:impl-bin-a} If $(A\cand B)\implies C$ is provable in $\HCK+\Xax$, then so is
        $((D\implies A)\cand(D\cand B))\implies(D\cand C)$.
  \item\label{l:box-bin} If $(A\cand B)\implies C$ is provable in $\HCK+\Xax$, then so is
    $(\wbox A\cand\wbox B)\implies\wbox C$.
  \item\label{l:box-bin-a} If $(A\cand B)\implies C$ is provable in $\HCK+\Xax$, then so is
    $(\wbox A\cand\wdia B)\implies\wdia C$.
  \end{enumerate}
\end{lemma}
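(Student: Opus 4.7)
The plan is to reduce each statement to a combination of the hypothesis $(A\cand B)\implies C$, standard intuitionistic propositional reasoning, and the two modal axioms $\kax1$ and $\kax2$ available in $\HCK$, together with the necessitation rule $\necr$.

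For parts \ref{l:impl-bin} and \ref{l:impl-bin-a} nothing modal is needed. I would first observe that the implication $((D\implies A)\cand(D\implies B))\implies(D\implies(A\cand B))$ is a theorem of intuitionistic propositional logic, and similarly $((D\implies A)\cand(D\cand B))\implies(D\cand(A\cand B))$. Composing either of these with the hypothesis $(A\cand B)\implies C$ via transitivity (together with Lemma~\ref{lem:pre-context}.\ref{l:impl} for \ref{l:impl-bin}, and \ref{l:and} for \ref{l:impl-bin-a}) yields the desired conclusions.

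For part \ref{l:box-bin}, applying $\necr$ to the hypothesis and then $\kax1$ gives $\wbox(A\cand B)\implies \wbox C$. The remaining ingredient is the theorem $(\wbox A\cand\wbox B)\implies \wbox(A\cand B)$; this is obtained by applying $\necr$ to the tautology $A\implies(B\implies A\cand B)$ and two successive uses of $\kax1$. Transitivity of implication then finishes the case.

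The main challenge is part \ref{l:box-bin-a}, which must be handled without $\kax3$ (the natural inclination $\wdia A\cand\wdia B\implies\wdia(A\cand B)$ is not available, but also not needed). Instead, the plan is to derive the constructively valid intermediate formula $(\wbox A\cand\wdia B)\implies \wdia(A\cand B)$ as follows: start from the tautology $A\implies(B\implies A\cand B)$; apply $\necr$ and then $\kax1$ to obtain $\wbox A\implies \wbox(B\implies A\cand B)$; apply $\kax2$ to $\wbox(B\implies A\cand B)$ to get $\wdia B\implies\wdia(A\cand B)$; hence $\wbox A\implies(\wdia B\implies\wdia(A\cand B))$, which is intuitionistically equivalent to the desired intermediate. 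Finally, applying $\necr$ and $\kax2$ to the hypothesis yields $\wdia(A\cand B)\implies\wdia C$, and transitivity completes the proof.
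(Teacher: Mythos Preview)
Your proposal is correct and follows essentially the same approach as the paper: parts \ref{l:impl-bin} and \ref{l:impl-bin-a} by pure intuitionistic propositional reasoning, parts \ref{l:box-bin} and \ref{l:box-bin-a} via necessitation, the distributivity $(\wbox A\cand\wbox B)\implies\wbox(A\cand B)$, and $\kax1$ or $\kax2$ respectively. Your treatment of \ref{l:box-bin-a} in fact spells out more carefully than the paper the key intermediate $(\wbox A\cand\wdia B)\implies\wdia(A\cand B)$ and its derivation from $\kax1$ and $\kax2$ without appeal to~$\kax3$.
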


\begin{proof}
\ref{l:impl-bin} and \ref{l:impl-bin-a} follow by completeness of $\HCK$ over intuitionistic logic. \ref{l:box-bin} and \ref{l:box-bin-a} follow by necessitation, distributivity of $\wbox$ over $\cand$, and $\kax1$ or $\kax2$ respectively.
  %%\qed
\end{proof}

\begin{lemma}\label{lem:DI-sound-bin}
  Let $\Xax\subseteq\set{\dax,\tax,\bax,\vax,\fax}$, let $\Delta_1$,
  $\Delta_2$, and $\Sigma$ be full sequents, and let $\Gamma\conhole$ be an
  output context.  If
  $(\formula{\Delta_1}\cand\formula{\Delta_2})\implies\formula\Sigma$ is
  provable in $\HCK+\Xax$, then so is
  $(\formula{\Gamma\cons{\Delta_1}}\cand\formula{\Gamma\cons{\Delta_2}})
  \implies\formula{\Gamma\cons{\Sigma}}$.
\end{lemma}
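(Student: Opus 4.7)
The plan is to prove Lemma \ref{lem:DI-sound-bin} by induction on the structure of the output context $\Gamma\conhole$, following the characterization in Observation \ref{obs:context}, in direct analogy with the proof of its unary companion Lemma \ref{lem:DI-sound}, but invoking Lemma \ref{lem:pre-context-bin} in place of Lemma \ref{lem:pre-context} at each step.

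The base case $\Gamma\conhole = \conhole$ is immediate, since $\Gamma\cons{\Delta_i} = \Delta_i$ and $\Gamma\cons{\Sigma} = \Sigma$, so the conclusion is just the hypothesis. For the inductive step I peel off the outermost layer and split into the two natural sub-cases. If $\Gamma\conhole = \lef\Lambda,\Gamma'\conhole$ (extension by a comma-attached LHS sequent), then $\formula{\Gamma\cons{\Xi}} = \formula{\lef\Lambda}\implies\formula{\Gamma'\cons{\Xi}}$ for each $\Xi\in\set{\Delta_1,\Delta_2,\Sigma}$; applying the inductive hypothesis to $\Gamma'\conhole$ and then Lemma \ref{lem:pre-context-bin}(i) with $D := \formula{\lef\Lambda}$ immediately gives the claim. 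If $\Gamma\conhole = \wbr{\Gamma'\conhole}$ (extension by a box), then $\formula{\Gamma\cons{\Xi}} = \wbox\formula{\Gamma'\cons{\Xi}}$; the inductive hypothesis followed by Lemma \ref{lem:pre-context-bin}(iii) delivers $(\wbox\formula{\Gamma'\cons{\Delta_1}}\cand\wbox\formula{\Gamma'\cons{\Delta_2}})\implies\wbox\formula{\Gamma'\cons{\Sigma}}$, which is exactly what is required.

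I do not anticipate any real obstacle: the argument is a direct routine adaptation of the proof of Lemma \ref{lem:DI-sound}. What is genuinely load-bearing here is part (iii) of Lemma \ref{lem:pre-context-bin}, the distribution of $\wbox$ over $\cand$, which is available in $\HCK$ thanks to $\kax1$ and necessitation. It is worth emphasizing that the absence of a dual distribution principle for $\wdia$ over $\cor$, which would require $\kax3$, is precisely what blocks the corresponding binary strengthening of Lemma \ref{lem:DI-sound-inv} in the constructive setting, as the authors flag in the prelude to the present lemma.
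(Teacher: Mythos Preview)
Your proposal is correct and follows essentially the same approach as the paper, which simply says ``Induction on the structure of $\Gamma\conhole$, using Lemma~\ref{lem:pre-context-bin}.\ref{l:impl-bin} and~\ref{l:box-bin}.'' Your two inductive sub-cases are exactly the unpacking of that one-line sketch, and your closing remark about why no binary analogue of Lemma~\ref{lem:DI-sound-inv} exists matches the paper's own commentary.
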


\begin{proof}%%[of Lemma~\ref{lem:DI-sound-bin}]
  Induction on the structure of $\Gamma\conhole$, 
  %%(which is of shape~\eqref{eq:wcontext}), 
  using
  Lemma~\ref{lem:pre-context-bin}.\ref{l:impl-bin} and~\ref{l:box-bin}.
  %%\qed
\end{proof}

\begin{lemma}\label{lem:2-rule-sound}
  Let $\Xax\subseteq\set{\dax,\tax,\bax,\vax,\fax}$, and let~ 
%  \begin{equation*}
  {\small\upsmash{$\vliinf{\rr}{}{\Gamma_3}{\Gamma_1}{\Gamma_2}$}}
%  \end{equation*}
  be an instance of $\rig\cand$, $\lef\cor$, $\lef\implies$, or $\cutr$.  Then
  $(\formula{\Gamma_1}\cand\formula{\Gamma_2})\implies\formula{\Gamma_3}$ is
  provable in $\HCK+\Xax$.
\end{lemma}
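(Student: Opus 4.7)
The plan is to split by case analysis on the rule $\rr$, treating the branching rules $\rig\cand,\lef\cor$ separately from the pruning rules $\lef\implies,\cutr$.

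For $\rig\cand$ and $\lef\cor$, I would observe that the context $\Gamma\conhole$ in the rule is an output context: in $\rig\cand$ the hole is filled by an RHS sequent $\rig A$, $\rig B$, or $\rig{A\cand B}$, while in $\lef\cor$ it is filled by a full sequent of shape $\lef X,\rig\Pi$. A direct application of Lemma~\ref{lem:DI-sound-bin} to $\Gamma\conhole$ then reduces the goal to $(\formula{\Delta_1}\cand\formula{\Delta_2})\implies\formula{\Sigma}$, which is $(A\cand B)\implies(A\cand B)$ for $\rig\cand$ and $((A\implies\pi)\cand(B\implies\pi))\implies((A\cor B)\implies\pi)$ (with $\pi=\formula{\rig\Pi}$) for $\lef\cor$; both are intuitionistic validities.

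For $\lef\implies$ and $\cutr$, $\Gamma\conhole$ is instead an input context, so by Observation~\ref{obs:context} I would write $\Gamma\conhole=\Gamma'\cons{\Lambda\conhole,\rig\Pi}$ with $\Gamma'\conhole$ and $\Lambda\conhole$ output, giving $\ddown\Gamma\cons{\rig A}=\Gamma'\cons{\Lambda\cons{\rig A}}$. Applying Lemma~\ref{lem:DI-sound-bin} to the outer context $\Gamma'\conhole$ reduces matters, writing $p:=\formula{\rig\Pi}$, to the \emph{inner claim}
\[
(\formula{\Lambda\cons{\rig A}}\cand(\formula{\Lambda\cons{X}}\implies p))\implies(\formula{\Lambda\cons{Y}}\implies p),
\]
where $(X,Y)=(\lef B,\lef{A\implies B})$ for $\lef\implies$ and $(X,Y)=(\lef A,\emptyset)$ for $\cutr$. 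I would prove this by induction on $\Lambda\conhole$, treating $p$ as a universally quantified parameter; the base case $\Lambda\conhole=\conhole$ reduces (using $\formula{\emptyset}=\top$) to the intuitionistic tautologies $(A\cand(B\implies p))\implies((A\implies B)\implies p)$ and $(A\cand(A\implies p))\implies p$, respectively.

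For the inductive step, $\Lambda\conhole=\lef{\Gamma_1},\wbr{\Lambda'\conhole}$; writing $L_1=\formula{\lef{\Gamma_1}}$, $P=\formula{\Lambda'\cons{\rig A}}$, $Q=\formula{\Lambda'\cons{X}}$, and $R=\formula{\Lambda'\cons{Y}}$, the three relevant formulas unfold to $L_1\implies\wbox P$, $L_1\cand\wdia Q$, and $L_1\cand\wdia R$. My plan is to instantiate the induction hypothesis at $s:=Q$, which collapses via the tautology $Q\implies Q$ to $(P\cand R)\implies Q$; Lemma~\ref{lem:pre-context-bin}.\ref{l:box-bin-a} then delivers $(\wbox P\cand\wdia R)\implies\wdia Q$. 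Assuming the three hypotheses of the inner claim, $L_1$ together with $L_1\implies\wbox P$ gives $\wbox P$, while $L_1\cand\wdia R$ gives $\wdia R$, whence $\wdia Q$, whence $L_1\cand\wdia Q$, and finally $p$ via the second hypothesis. The hard part will be precisely this inductive step: a naive attempt propagating the induction hypothesis with $p$ fixed gets stuck, since in $\HCK$ one cannot in general extract $\wdia Q$ from $\wdia((Q\implies p)\implies p)$. The key insight is to instantiate $s:=Q$ inside the induction hypothesis, trivialising the internal implication and leaving the purely conjunctive form $(P\cand R)\implies Q$ to which Lemma~\ref{lem:pre-context-bin}.\ref{l:box-bin-a} applies.
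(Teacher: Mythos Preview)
Your proposal is correct and lands on the same argument as the paper, but via a slightly more circuitous path in the $\lef\implies$/$\cutr$ case. The paper first observes propositionally that the inner claim follows from the simpler statement $(\formula{\Lambda\cons{\rig A}}\cand\formula{\Lambda\cons{Y}})\implies\formula{\Lambda\cons{X}}$ (in your notation, $(P\cand R)\implies Q$ at every level), and then proves \emph{that} by induction on the depth of $\Lambda\conhole$ using Lemma~\ref{lem:pre-context-bin}.\ref{l:impl-bin-a} and~\ref{l:box-bin-a}. You instead carry the full inner claim through the induction with $p$ universally quantified, and in the inductive step instantiate the hypothesis at $s=Q$ to recover exactly $(P\cand R)\implies Q$ --- so your ``key insight'' is really a detour back to the paper's direct formulation. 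Both routes are sound; the paper's spares you the quantification over~$p$ and the concern about whether a fixed-$p$ induction would go through. (One cosmetic point: your base case $\Lambda\conhole=\conhole$ should really be $\Lambda\conhole=\lef{\Gamma_1},\conhole$ to match the general shape of output contexts, but the required tautology is unchanged.)
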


\begin{proof}
  For the $\rig\cand$- and $\lef\cor$-rules, this follows immediately
  from Lemma~\ref{lem:DI-sound-bin} and provable formulas $(A\cand
  B)\implies (A\cand B)$ and $((A\implies C)\cand(B\implies C))\implies
  ((A\cor B)\implies C)$, respectively. For
  $\lef\implies$, note that by
  Observation~\ref{obs:context} and Definition~\ref{def:pruning}, the rule is
  of shape
  \begin{align*}
    \vliiinf{\lef{\implies}}{}{\Gamma'\cons{\Lambda\cons{\lef{A\implies
            B}},\rig\Pi}}{
      \Gamma'\cons{\Lambda\cons{\rig A}}}{}{
      \Gamma'\cons{\Lambda\cons{\lef B},\rig\Pi}}
  \end{align*}
  where $\Gamma'\conhole$, $\Lambda\conhole$, and $\Pi\conhole$ are
  output contexts. In particular,
  let $$\Lambda\conhole =
    \Lambda_0,\wbr{\Lambda_1,\wbr{\ldots,\wbr{\Lambda_n,\conhole}\ldots}}\quadfs$$
  Now let $P=\formula{\rig\Pi}$ and $L_i=\formula{\Lambda_i}$ for $i=0\ldots n$, and let
  \begin{align*}
    L_X&=\formula{\Lambda\cons{\rig A}}=
    L_0\implies\wbox(L_1\implies\wbox(L_2\implies\wbox(\cdots
    \implies\wbox(L_n\implies A)\cdots)))
    \\
    L_Y&=\formula{\Lambda\cons{\lef B}} =
    L_0\cand\wdia(L_1\cand\wdia(L_2\cand\wdia(\cdots
    \cand\wdia(L_n\cand B)\cdots)))
    \\
    L_Z&=\formula{\Lambda\cons{\plef{A\implies B}}} =
    L_0\cand\wdia(L_1\cand\wdia(L_2\cand\wdia(\cdots
    \cand\wdia(L_n\cand(A\implies B))\cdots)))
    %\\%[\belowdisplayskip]
  \end{align*}
  To be able to apply Lemma~\ref{lem:DI-sound-bin}, we need to
  show that $(L_X\cand(L_Y\implies P))\implies(L_Z\implies P)$ is
  provable in $\HCK+\Xax$. But this follows from $(L_X\cand L_Z)\implies
  L_Y$, which can be shown provable in $\HCK+\Xax$ using an induction
  on $n$ together with
  Lemma~\ref{lem:pre-context-bin}.\ref{l:impl-bin-a}
  and~\ref{l:box-bin-a}.
  For the $\cutr$-rule we additionally observe that $A\implies A$ is always provable.
  %%\qed
\end{proof}

\begin{remark}\label{rem:NCK-HCK}
  From the lemmas presented so far, we now have that $\NCK+\weakr+\cutr$ is sound with respect to $\HCK$, i.e.\ we have proved already Theorem~\ref{thm:sound-con} in the case of $\Xax=\Yax=\emptyset$. This means that if we have a proof of a formula $A$ in $\NCK + \weakr+\cutr$ in which we allow $\Xax\subseteq\set{\dax,\tax,\bax,\vax,\fax}$ to occur as proper axioms, then we have that $ \bigwedge \Xax \implies A$ is provable in $\HCK$, by purely propositional logic, and therefore $A$ is provable in $\HCK+\Xax$.
\end{remark}

We use the observation in the above remark to prove the following lemma.

\begin{lemma}\label{lem:b5aux}
  Let $S$ and $D$ be arbitrary formulas. Then we have the following:
  \begin{enumerate}[label=(\roman{*}), leftmargin=*]
  \item\label{i:bG} $(S\cand\wdia D)\implies\wdia(\wdia S\cand D)$ is a theorem of $\HCK+\bax$.
  \item\label{i:bS} $\wbox(D\implies\wbox S)\implies(\wdia D\implies S)$ is a
    theorem of $\HCK+\bax$.
  \item\label{i:bD} $\wbox(\wdia S\implies D)\implies(S\implies\wbox D)$ is a
    theorem of $\HCK+\bax$.
  \item\label{i:5G} $(\wdia S\cand\wdia D)\implies \wdia(\wdia S\cand D)$ is a theorem of $\HCK+\fax$.
  \item\label{i:5S} $\wbox(D\implies\wbox S)\implies(\wdia D\implies \wbox S)$ is a
    theorem of $\HCK+\fax$.
  \item\label{i:5D} $\wbox(\wdia S\implies D)\implies(\wdia S\implies\wbox D)$ is a
    theorem of $\HCK+\fax$.
 \end{enumerate}
\end{lemma}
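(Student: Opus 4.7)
The plan is to derive each of the six implications directly in the Hilbert system, using the stated modal axiom ($\bax$ or $\fax$) together with $\kax1$, $\kax2$, and the necessitation rule. The structural similarity between the three $\bax$-statements and the three $\fax$-statements means that each pair will be proved by essentially the same derivation, with the single difference being which axiom is invoked at one specific step.

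For items \ref{i:bS} and \ref{i:5S}, I start from $\wbox(D\implies\wbox S)$ and apply $\kax2$ (together with necessitation) to obtain $\wdia D\implies\wdia\wbox S$. The conclusion $\wdia D\implies S$ then follows from the second conjunct of $\bax$, namely $\wdia\wbox S\implies S$; while the conclusion $\wdia D\implies \wbox S$ follows from the second conjunct of $\fax$, namely $\wdia\wbox S\implies \wbox S$. Items \ref{i:bD} and \ref{i:5D} proceed dually: from $\wbox(\wdia S\implies D)$, the axiom $\kax1$ gives $\wbox\wdia S\implies\wbox D$, and then one prepends $S\implies\wbox\wdia S$ (from $\bax$) or $\wdia S\implies\wbox\wdia S$ (from $\fax$) to obtain the desired implication.

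The more delicate cases are \ref{i:bG} and \ref{i:5G}, where one has to push a left conjunct inside a $\wdia$. The idea is the same in both: first use $\bax$ (respectively $\fax$) to replace $S$ with $\wbox\wdia S$ (respectively $\wdia S$ with $\wbox\wdia S$), reducing the problem to showing
\[
(\wbox\wdia S\cand\wdia D)\implies\wdia(\wdia S\cand D)\quadfs
\]
Since $\wdia S\implies(D\implies \wdia S\cand D)$ is an intuitionistic tautology, necessitation and $\kax1$ yield $\wbox\wdia S\implies\wbox(D\implies \wdia S\cand D)$, and then $\kax2$ gives $\wbox(D\implies \wdia S\cand D)\implies(\wdia D\implies\wdia(\wdia S\cand D))$. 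Chaining these three steps establishes the claim.

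The main obstacle is therefore confined to parts \ref{i:bG} and \ref{i:5G}, where the right way to stage the argument is not immediate: one must first promote the non-diamond conjunct to a $\wbox\wdia$ form so that $\kax1$ and $\kax2$ can be used to commute the pieces past the outer $\wdia$. Once this bridging step is identified, all six derivations are short manipulations in $\HCK$ plus a single use of the relevant axiom. Alternatively, in light of Remark~\ref{rem:NCK-HCK}, each formula could instead be given a short proof in $\NCK+\weakr+\cutr$ with $\bax$ or $\fax$ added as a proper axiom, but the direct Hilbert derivations sketched above seem to be the most transparent presentation.
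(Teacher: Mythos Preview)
Your derivations are correct. Each of the six implications follows as you describe: for \ref{i:bS}/\ref{i:5S} one instance of $\kax2$ followed by the second conjunct of $\bax$ or $\fax$; for \ref{i:bD}/\ref{i:5D} one instance of $\kax1$ followed by the first conjunct of $\bax$ or $\fax$; and for \ref{i:bG}/\ref{i:5G} the promotion to $\wbox\wdia S$ via $\bax$ or $\fax$, then necessitation, $\kax1$, $\kax2$ and uncurrying. (The parenthetical ``together with necessitation'' in your treatment of \ref{i:bS}/\ref{i:5S} is unnecessary---$\kax2$ applies directly---but this does no harm.)

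The paper takes the alternative route you mention at the end: it exhibits explicit nested-sequent proofs of each formula in $\NCK+\weakr+\cutr$ with $\bax$ or $\fax$ added as a proper axiom, and then invokes Remark~\ref{rem:NCK-HCK} to transfer these to $\HCK+\bax$ or $\HCK+\fax$. Your direct Hilbert derivations are more elementary and self-contained, requiring no appeal to the soundness of the nested-sequent system; they also make the parallel structure between the $\bax$- and $\fax$-cases more transparent, since each pair differs in exactly one axiom instance. The paper's approach, by contrast, exercises the system under construction and makes the derivations somewhat more mechanical once the calculus is in hand, at the cost of a forward reference to soundness. Either route is acceptable here.
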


\begin{proof} %[of Lemma~\ref{lem:b5aux}]       
  In the following we show that the formulas in \ref{i:bG}--\ref{i:5D}
  can be proved in $\NCK+\cutr$ extended by $\bax$ or $\fax$, as appropriate, as a proper axiom. Our lemma then follows from
  Remark~\ref{rem:NCK-HCK}.
  \begin{enumerate}[label=(\roman{*}), leftmargin=*]
  \item 
    {\small$\qquad
      \svlderivation{
        \vlin{\rig{\implies}}{}{\rig{(S \cand \wdia D) \implies \wdia (\wdia S \cand D)}}{
          \vlin{\lef\cand}{}{\lef{S \cand \wdia D},\rig{\wdia (\wdia S \cand D)}}{
            \vlin{\lef\wdia}{}{\lef{S}, \wdia \lef{D},\rig{\wdia (\wdia S \cand D)}}{
              \vlin{\rig\wdia}{}{\lef{S}, \wbr{\lef{D}},\rig{\wdia (\wdia S \cand D)}}{
                \vliin{\rig\cand}{}{\lef{S}, \wbr{\lef{D}, \wdia S \cand \rig{D}}}{
                  \vlin{\weakr}{}{\lef{S}, \wbr{\lef D,\rig{\wdia S}}}{
                    \vliin{\cutr}{}{\lef{S}, \wbr{\rig{\wdia S}}}{
                      \vlin{\weakr}{}{\lef{S}, \rig{S \implies \wbox \wdia S}}{
                        \vlin{\bax}{}{\rig{S \implies \wbox \wdia S}}{
                          \vlhy{}}}}{
                      \vliin{\lef\implies}{}{\lef{S \implies \wbox \wdia S},\lef{S}, \wbr{\lef D,\rig{\wdia S}}}{
                        \vlin{\idr}{}{\lef S,\rig S}{
                          \vlhy{}}}{
                        \vlin{\lef\wbox}{}{\lef{\wbox \wdia S},\lef{S}, \wbr{\lef D,\rig{\wdia S}}}{
                          \vlin{\idr}{}{\lef{S}, \wbr{\lef{\wdia S},\lef D,\rig{\wdia S}}}{
                            \vlhy{}}}}}}}{
                  \vlin{\idr}{}{\lef{S}, \wbr{\lef D, \rig D}}{
                    \vlhy{}}}}}}}}
      $}
  \item
    {\small$\qquad
      \svlderivation{
        \vlin{\rig\implies}{}{\rig{\wbox (D \implies \wbox S) \implies (\wdia D \implies S)}}{
          \vlin{\rig\implies}{}{\lef{\wbox (D \implies \wbox S)},\rig{\wdia D \implies S}}{
            \vlin{\lef\wdia}{}{\lef{\wbox (D \implies \wbox S)},\lef{\wdia D},\rig S}{
              \vlin{\lef\wbox}{}{\lef{\wbox (D \implies \wbox S)},\wbr{\lef D},\rig S}{
                \vliin{\lef\implies}{}{\wbr{\lef{D \implies \wbox S},\lef D},\rig S}{
                  \vlin{\idr}{}{\wbr{\lef D,\rig D}}{
                    \vlhy{}}}{
                  \vlin{\weakr}{}{\wbr{\lef{\wbox S},\lef D},\rig S}{
                    \vliin{\cutr}{}{\wbr{\lef{\wbox S}},\rig S}{
                      \vlin{\weakr}{}{\wbr{\lef{\wbox S}},\rig{\wdia\wbox S\implies S}}{
                        \vlin{\bax}{}{\rig{\wdia\wbox S\implies S}}{
                          \vlhy{}}}}{
                      \vliin{\lef\implies}{}{\wbr{\lef{\wbox S}},\lef{\wdia\wbox S\implies S},\rig S}{
                        \vlin{\rig\wdia}{}{\wbr{\lef{\wbox S}},\rig{\wdia\wbox S}}{
                          \vlin{\idr}{}{\wbr{\lef{\wbox S},\rig{\wbox S}}}{
                            \vlhy{}}}}{
                        \vlin{\idr}{}{\wbr{\lef{\wbox S}},\lef{S},\rig S}{
                          \vlhy{}}}}}}}}}}}$}
  \item
    {\small$\qquad
      \svlderivation{
        \vlin{\rig\implies}{}{
          \rig{\wbox(\wdia S \implies D) \implies (S \implies \wbox D)}}{
          \vlin{\rig\implies}{}{
            \lef{\wbox(\wdia S \implies D)},\rig{S \implies \wbox D}}{
            \vlin{\rig\wbox}{}{
              \lef{\wbox(\wdia S \implies D)},\lef{S},\rig{\wbox D}}{
              \vlin{\lef\wbox}{}{
                \lef{\wbox(\wdia S \implies D)},\lef{S},\wbr{\rig D}}{
                \vliin{\lef\implies}{}{
                  \lef{S},\wbr{\lef{\wdia S \implies D},\rig D}}{
                  \vlhtr{\DD_1}{\lef{S},\wbr{\rig{\wdia S}}}}{
                  \vlin{\idr}{}{\lef{S},\wbr{\lef{D},\rig D}}{
                    \vlhy{}}}}}}}}$}
    \qqquad
    (iv)
    {\small$\qquad
      \svlderivation{
        \vlin{\rig\implies}{}{
          \rig{(\wdia S \cand \wdia D) \implies\wdia (\wdia S \cand D)}}{
          \vlin{\lef\cand}{}{
            \lef{\wdia S \cand \wdia D},\rig{\wdia (\wdia S \cand D)}}{
            \vlin{\lef\wdia}{}{
              \lef{\wdia S},\lef{\wdia D},\rig{\wdia (\wdia S \cand D)}}{
              \vlin{\rig\wdia}{}{
                \lef{\wdia S},\wbr{\lef D},\rig{\wdia (\wdia S \cand D)}}{
                \vliin{\rig\cand}{}{
                  \lef{\wdia S},\wbr{\lef D,\rig{\wdia S \cand D}}}{
                  \vlin{\weakr}{}{
                    \lef{\wdia S},\wbr{\lef D,\rig{\wdia S}}}{
                    \vlhtr{\DD_2}{\lef{\wdia S},\wbr{\rig{\wdia S}}}}}{
                  \vlin{\idr}{}{
                    \lef{\wdia S},\wbr{\lef D,\rig D}}{
                    \vlhy{}}}}}}}}$}
    \stepcounter{enumi}
  \item 
    {\small$\qquad
      \svlderivation{
        \vlin{\rig\implies}{}{
          \rig{\wbox (D \implies \wbox S) \implies (\wdia D \implies \wbox S)}}{
          \vlin{\rig\implies}{}{
            \lef{\wbox (D \implies \wbox S)},\rig{\wdia D \implies \wbox S}}{
            \vlin{\lef\wdia}{}{
              \lef{\wbox (D \implies \wbox S)},\lef{\wdia D},\rig{\wbox S}}{
              \vlin{\lef\wbox}{}{
                \lef{\wbox (D \implies \wbox S)},\wbr{\lef D},\rig{\wbox S}}{
                \vliin{\lef\implies}{}{
                \wbr{\lef{D \implies \wbox S},\lef D},\rig{\wbox S}}{
                  \vlin{\idr}{}{
                    \wbr{\rig{D},\lef D}}{
                    \vlhy{}}}{
                  \vlin{\weakr}{}{
                    \wbr{\lef{\wbox S},\lef D},\rig{\wbox S}}{
                    \vlhtr{\DD_3}{\wbr{\lef{\wbox S}},\rig{\wbox S}}}}}}}}}$}
    \qqquad
    (vi)
    {\small$\qquad
      \svlderivation{
        \vlin{\rig\implies}{}{
          \rig{\wbox (\wdia S \implies D) \implies (\wdia S \implies \wbox D)}}{
          \vlin{\rig\implies}{}{
            \lef{\wbox (\wdia S \implies D)},\rig{\wdia S \implies \wbox D}}{
            \vlin{\rig\wbox}{}{
              \lef{\wbox (\wdia S \implies D)},\lef{\wdia S},\rig{\wbox D}}{
              \vlin{\lef\wbox}{}{
                \lef{\wbox (\wdia S \implies D)},\lef{\wdia S},\wbr{\rig D}}{
                \vliin{\lef\implies}{}{
                  \lef{\wdia S},\wbr{ \lef{\wdia S \implies D},\rig D}}{
                  \vlhtr{\DD_2}{\lef{\wdia S},\wbr{ \rig{\wdia S}}}}{
                  \vlin{\idr}{}{
                    \lef{\wdia S},\wbr{ \lef{D},\rig D}}{
                    \vlhy{}}}}}}}}$}
  \end{enumerate}
  where $\DD_1$ is a subderivation of (i), $\DD_2$ is the same as
  $\DD_1$, except that we use $\fax$ instead of $\bax$, and $\DD_3$ is
  a variant of a subderivation of (ii), using $\fax$ instead of
  $\bax$.
\end{proof}

Now we can show soundness of the rules in Figures~\ref{fig:dt4}
and~\ref{fig:CXstr}, which we need to complete the proof of
Theorem~\ref{thm:sound-con}.

\begin{lemma}\label{lem:x-rule-sound}
  Let $\Xax\subseteq\set{\dax,\tax,\vax}$, let $\Yax\subseteq\set{\dax,\tax,\bax,\vax,\fax}$, let
  $\xax\in\Xax$, let $\yax\in\Yax$, and let~
%  \begin{equation*}
    \hbox{$\smash{\vlinf{\rr}{}{\Gamma_2}{\Gamma_1}}$} \strut
%  \end{equation*}
  be an instance of $\rig\xax$ or $\lef\xax$ or $\ydot$. Then
  $\formula{\Gamma_1}\implies\formula{\Gamma_2}$ is provable in $\HCK+\Xax+\Yax$.
\end{lemma}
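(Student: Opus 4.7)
The plan is to follow the template already established in Lemma~\ref{lem:1-rule-sound}: for each rule instance, isolate a ``local'' implication between the formulas of the hole-contents, and then globalize by invoking either Lemma~\ref{lem:DI-sound} or Lemma~\ref{lem:DI-sound-inv}, depending on whether the surrounding context $\Gamma\conhole$ is an output or an input context. The type of context is determined by where the unique output formula sits relative to the hole; for rules like $\rig\xax$ and $\lef\xax$ this is forced by the shape of the rule, whereas for the structural $\ydot$ rules one must sub-case on the types (LHS or full) of the inner sequents $\Sigma$ and $\Delta$. The essential new tool, beyond the two globalization lemmas, is Lemma~\ref{lem:b5aux}, which supplies the non-trivial modal theorems needed for the $\bdot$ and $\fdot$ cases.

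For the logical rules in Figure~\ref{fig:dt4}: every $\rig\xax$ rule contains a $\prig{\wdia A}$ occurrence in both conclusion and premise, so the hole is an output context and Lemma~\ref{lem:DI-sound} applies. The local implications read directly off the axioms: $\wbox A \implies \wdia A$ for $\dax$, $A \implies \wdia A$ for $\tax$, and $\wbox(D \implies \wdia A) \implies (\wdia D \implies \wdia A)$ for $\vax$ (which follows by composing $\kax2$ with the $\wdia\wdia A \implies \wdia A$ half of $\vax$). Dually, every $\lef\xax$ rule acts on a $\plef{\wbox A}$ inside an input context, so Lemma~\ref{lem:DI-sound-inv} applies, the local implications being the remaining halves of the axioms. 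Only $\lef\vint$ is context-polymorphic in $\Delta$ and so calls for a sub-case split, mirroring the treatment of $\lef\wbox$ in Lemma~\ref{lem:1-rule-sound}; the ``output'' sub-case reduces via $\kax1$ together with the $\wbox A \implies \wbox\wbox A$ half of $\vax$.

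For the structural rules in Figure~\ref{fig:CXstr}: each is context-polymorphic, so I sub-case on which of the two inner sequents (if either) carries the output formula. The rules $\ddot$, $\tdot$, $\vdot$ are routine: $\ddot$ forces an input context and requires $\top \implies \wdia\top$, which is immediate from $\necr$ applied to $\top$ followed by $\dax$; $\tdot$ and $\vdot$ each split into an input and an output case, reducing to the two directions of $\tax$ and of $\vax$ respectively (the output case for $\vdot$ instantiates $\wbox A \implies \wbox\wbox A$ with $A = L\implies R$). For $\bdot$ the three possible sub-cases (both $\Sigma,\Delta$ LHS; $\Sigma$ full with $\Delta$ LHS; $\Sigma$ LHS with $\Delta$ full) correspond, after unfolding $\formula{\cdot}$, exactly to parts~\ref{i:bG}, \ref{i:bS}, \ref{i:bD} of Lemma~\ref{lem:b5aux}. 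The case analysis for $\fdot$ is entirely analogous and reduces to parts~\ref{i:5G}, \ref{i:5S}, \ref{i:5D}.

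The main obstacle will be correctly matching the sub-cases of $\bdot$ and $\fdot$ to the appropriate clauses of Lemma~\ref{lem:b5aux}: unlike the simpler rules, where the local implication is a verbatim instance of an axiom, here one must carefully compute $\formula{\wbr{\wbr{\Sigma},\Delta}}$ and $\formula{\Sigma,\wbr{\Delta}}$, notice that the resulting implications coincide with the statements in Lemma~\ref{lem:b5aux} after an appropriate substitution (taking $S'$ to be $L\implies R$ and $D'$ to be $L_\Delta\implies R_\Delta$ in the two ``full'' sub-cases), and then dispatch the result through the correct globalization lemma.
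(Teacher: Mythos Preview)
Your proposal is correct and follows essentially the same approach as the paper: local implications globalized via Lemmas~\ref{lem:DI-sound} and~\ref{lem:DI-sound-inv}, with the three-way case split for $\bdot$ and $\fdot$ matched to Lemma~\ref{lem:b5aux} exactly as the paper does. The only minor variation is in the handling of $\vlef$ and $\vrig$: the paper derives these from $\lef\wbox$ and $\rig\wdia$ composed with auxiliary rules $\vlefp$ and $\vrigp$ (which replace $\lef{\wbox\wbox A}$ by $\lef{\wbox A}$, resp.\ $\rig{\wdia\wdia A}$ by $\rig{\wdia A}$, and are sound directly from the $\vax$-axiom), thereby avoiding your sub-case split on $\Delta$ for $\vlef$, whereas you compute the needed local implication directly; both routes succeed.
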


\begin{proof}
  For $\dlef$, $\drig$, $\tlef$, $\trig$, $\tdot$, and $\vdot$ this
  follows immediately from Lemmas \ref{lem:DI-sound}
  and~\ref{lem:DI-sound-inv} and the corresponding axioms, shown
  in~\eqref{eq:ax}. For $\vlef$ and $\vrig$, observe that these two
  rules can be derived using the rules $\lef\wdia$ and $\rig\wdia$,
  respectively, and
  \begin{equation}
    \label{eq:newfour}
    \vlinf{\vlefp}{}{\Gamma\cons{\lef{\wbox A}}}{\Gamma\cons{\lef{\wbox\wbox A}}}
    \qquand
    \vlinf{\vrigp}{}{\Gamma\cons{\rig{\wdia A}}}{\Gamma\cons{\rig{\wdia\wdia A}}}
  \end{equation}
  respectively. The soundness of the two rules in~\eqref{eq:newfour}
  follows immediately from Lemmas \ref{lem:DI-sound}
  and~\ref{lem:DI-sound-inv} and the $\vax$-axiom.
  For $\ddot$ we need to show that $\top\implies\wdia\top$ is provable,
  which follows from $\top\implies\wbox\top$ and the $\dax$-axiom. For
  $\bdot$, we have to make a case analysis on where the output formula
  is. If it is in $\Gamma\conhole$, soundness of the rule follows from
  Lemma~\ref{lem:b5aux}.\ref{i:bG} and Lemma~\ref{lem:DI-sound-inv}.
  If it is in $\Sigma$, we use Lemma~\ref{lem:b5aux}.\ref{i:bS} and
  Lemma~\ref{lem:DI-sound}, and if it is in $\Delta$, we use
  Lemma~\ref{lem:b5aux}.\ref{i:bD} and Lemma~\ref{lem:DI-sound}. For
  the rule $\fdot$ we proceed similarly, using
  Lemma~\ref{lem:b5aux}.\ref{i:5G}--\ref{i:5D} instead.
  %%\qed
\end{proof}

Now we can put everything together to prove Theorem~\ref{thm:sound-con}.  
\begin{proof}[Proof of Theorem~\ref{thm:sound-con}]
  Point (i) is just Lemmas \ref{lem:0-rule-sound}, \ref{lem:1-rule-sound},
  \ref{lem:2-rule-sound}, and~\ref{lem:x-rule-sound}. Point~(ii) follows
  immediately from~(i) by induction on the size of the derivation.
  %%\qed
\end{proof}

Having established the soundness of our system, we can use it to
make some interesting observations. Surprisingly, the $\bax$-axiom
entails the axioms $\kax3$ and $\kax5$ (shown in~\eqref{eq:ik} in the
introduction), as can be seen by the following two derivations in
$\NCK+\bdot$:
\begin{equation}%\small
  \hskip-1em
  \svlderivation{
    \vlin{\rig\implies}{}{\prig{\wdia(A\cor B)\implies(\wdia A\cor \wdia B)}}{
      \vlin{\lef\wdia}{}{\plef{\wdia(A\cor B)},\prig{\wdia A\cor \wdia B}}{
        \vlin{\bdot}{}{\wbr{\plef{A\cor B}},\prig{\wdia A\cor \wdia B}}{
          \vliin{\lef\cor}{}{\wbr{\plef{A\cor B},\wbr{\prig{\wdia A\cor \wdia B}}}}{
            \vlin{\rig\cor}{}{\wbr{\plef{A},\wbr{\prig{\wdia A\cor \wdia B}}}}{
              \vlin{\bdot}{}{\wbr{\plef{A},\wbr{\prig{\wdia A}}}}{
                \vlin{\rig\wdia}{}{\wbr{\wbr{\wbr{\lef A},\prig{\wdia A}}}}{
                  \vlin{\idr}{}{\wbr{\wbr{\wbr{\lef A, \rig A}}}}{
                    \vlhy{}}}}}}{
            \vlin{\rig\cor}{}{\wbr{\plef{B},\wbr{\prig{\wdia A\cor \wdia B}}}}{
              \vlin{\bdot}{}{\wbr{\plef{B},\wbr{\prig{\wdia B}}}}{
                \vlin{\rig\wdia}{}{\wbr{\wbr{\wbr{\lef B},\prig{\wdia B}}}}{
                  \vlin{\idr}{}{\wbr{\wbr{\wbr{\lef B, \rig B}}}}{
                    \vlhy{}}}}}}}}}}
  \qqquand
    \svlderivation{
    \vlin{\rig\implies}{}{\rig{\wdia\bot\implies\bot}}{
      \vlin{\lef\wdia}{}{\lef{\wdia\bot},\rig\bot}{
        \vlin{\bdot}{}{\wbr{\lef{\bot}},\rig\bot}{
          \vlin{\lef\bot}{}{\wbr{\lef{\bot},\wbr{\rig\bot}}}{
            \vlhy{}}}}}}
\end{equation}
%%\end{equation}
While the proof of $\kax5$ can be easily shown directly in the Hilbert
system, the proof of $\kax3$ in $\HCK+\bax$ is not so simple. From our
cut-elimination result in Section~\ref{sec:cutelim} it will follow
that the $\fax$ axiom alone is not enough to derive $\kax3$ or
$\kax5$. But since $\bax$ is derivable in $\CSfive$, both $\kax3$ or
$\kax5$ are derivable in $\CSfive$.

%%%%%%%%%%%%%%%%%%%%%%%%%%%%%%%%%%%%%%%%%%%%%%%%%%%%%%%%%%%%%%%%%%%%%%%%
%%%%%%%%%%%%%%%%%%%%%%%%%%%%%%%%%%%%%%%%%%%%%%%%%%%%%%%%%%%%%%%%%%%%%%%%
%%%%%%%%%%%%%%%%%%%%%%%%%%%%%%%%%%%%%%%%%%%%%%%%%%%%%%%%%%%%%%%%%%%%%%%%

\section{Completeness}\label{sec:completeness}

\begin{figure}  \[
  \begin{array}{c}
  \begin{array}{ccc}
  \dax\text{ axiom:}
  &
  \tax\text{ axiom:}
  &
  \bax\text{ axiom:}
  \\[-1ex]
%  \noalign{\smallskip}
  \svlderivation{
    \vlin{\rig\implies}{}{\wbox A \implies \wdia \rig{A}}{
    \vlin{\rig\dax}{}{\wbox \lef{A}, \wdia \rig{A}}{
    \vlin{\lef{\wbox}}{}{\wbox \lef{A}, \wbr{\rig{A}}}{
    \vlin{\idr}{}{\wbr{\lef{A}, \rig{A}}}{\vlhy{}}
    }
    }
    }
    }
  \quad&\quad
  \svlderivation{
  \vliin{\rig{\cand}}{}{(A \implies \wdia A) \cand \rig{(\wbox A \implies A)}}{
  \vlin{\rig{\implies}}{}{A \implies \wdia \rig{A}}{
  \vlin{\rig{\tax}}{}{\lef{A}, \wdia \rig{A}}{
  \vlin{\idr}{}{\lef A , \rig A }{\vlhy{}}
  }
  }
  }{
  \vlin{\rig{\implies}}{}{\wbox A \implies \rig{A}}{
  \vlin{\lef{\tax}}{}{\wbox \lef{A}, \rig{A}}{
  \vlin{\idr}{}{\lef A , \rig A }{\vlhy{}}
  }
  }
  }
  }
  \quad & \quad 
  \svlderivation{
  \vliin{\rig{\cand}}{}{(\wdia \wbox A \implies A) \cand \rig{(A \implies \wbox \wdia A)}}{
  \vlin{\rig{\implies}}{}{ \wdia \wbox A \implies \rig{A} }{
  \vlin{\lef{\hat\wdia}}{}{\wdia \wbox \lef{A}, \rig{A}}{
  \vlin{\bdot}{}{\wbr{\wbox \lef{A}}, \rig{A}}{
  \vlin{\lef{\wbox}}{}{\wbr{\wbox \lef{A}, \wbr{\rig{A}}}}{
  \vlin{\idr}{}{\wbr{\wbr{\lef{A}, \rig{A}}}}{\vlhy{}}
  }
  }
  }
  }
  }{
  \vlin{\rig{\implies}}{}{ A \implies \wbox \wdia A }{
  \vlin{\rig{\wbox}}{}{ \lef{A}, \wbox \wdia \rig{A} }{
  \vlin{ \bdot }{}{ \wbr{\wbox \lef{A}}, \rig{A} }{
  \vlin{\rig{\wdia}}{}{ \wbr{\wbr{\lef{A}}, \wdia \rig{A}} }{
  \vlin{\idr}{}{ \wbr{\wbr{\lef{A}, \rig{A}}} }{\vlhy{}}
  }
  }
  }
  }
  }
  }
  \end{array}
  \\
  \noalign{\bigskip}
  \noalign{\smallskip}
  \noalign{\smallskip}
  \begin{array}{cc}
  \vax\text{ axiom:}
  &
  \fax\text{ axiom:}
  \\[-1ex]
%%%%%  \noalign{\smallskip}
  \svlderivation{
  \vliin{\rig\cand}{}{ (\wbox A \implies \wbox \wbox A) \cand \rig{(\wdia \wdia A \implies \wdia A)} }{
  \vlin{\rig{\implies}}{}{ \wbox A \implies \wbox \wbox \rig{A} }{
  \vlin{ \rig{\wbox} }{}{ \wbox \lef{A}, \wbox \wbox \rig{A} }{
  \vlin{\lef{\vax}}{}{ \wbox \lef{A}, \wbr{\wbox \rig{A}} }{
  \vlin{\idr}{}{ \wbr{\wbox \lef{A}, \wbox \rig{A}} }{\vlhy{}}
  }
  }
  }
  }{
  \vlin{ \rig{\implies} }{}{ \wdia \wdia A \implies \wdia \rig{A} }{
  \vlin{ \lef{\hat\wdia} }{}{ \wdia \wdia \lef{A}, \wdia \rig{A} }{
  \vlin{\rig{\vax}}{}{ \wbr{\wdia \lef{A}}, \wdia \rig{A} }{
  \vlin{\idr}{}{ \wbr{\wdia \lef{A}, \wdia \rig{A}} }{\vlhy{}}
  }
  }
  }
  }
  }
  \quad & \quad 
  \svlderivation{
  \vliin{\rig{\cand}}{}{ (\wdia \wbox A \implies \wbox A) \cand \rig{(\wdia A \implies \wbox \wdia A)} }{
  \vlin{ \rig{\implies} }{}{ \wdia \wbox A \implies \wbox \rig{A} }{
  \vlin{ \lef{\hat\wdia} }{}{ \wdia \wbox \lef{A}, \wbox \rig{A} }{
  \vlin{ \rig{\wbox} }{}{ \wbr{\wbox \lef{A}}, \wbox \rig{A} }{
  \vlin{ \fdot }{}{ \wbr{\wbox \lef{A}}, \wbr{\rig{A}} }{
  \vlin{ \lef{\wbox} }{}{ \wbr{\wbox \lef{A}, \wbr{\rig{A}}} }{
  \vlin{\idr}{}{ \wbr{\wbr{\lef{A}, \rig{A}}} }{\vlhy{}}
  }
  }
  }
  }
  }
  }{
  \vlin{ \rig{\implies} }{}{ \wdia A \implies \wbox \wdia \rig{A} }{
  \vlin{\lef{\hat\wdia}  }{}{ \wdia \lef{A}, \wbox \wdia \rig{A} }{
  \vlin{ \rig{\wbox} }{}{ \wbr{\lef{A}}, \wbox \wdia \rig{A} }{
  \vlin{ \fdot }{}{ \wbr{\lef{A}}, \wbr{\wdia \rig{A}} }{
  \vlin{ \rig{\wdia} }{}{ \wbr{\wbr{\lef{A}}, \wdia \rig{A}} }{
  \vlin{\idr}{}{ \wbr{\wbr{\lef{A},\rig{A}}} }{\vlhy{}}
  }
  }
  }
  }
  }
  }
  }
  \end{array}
  \end{array}
  \]
  \caption{Proofs of the axioms $\dax$, $\tax$, $\bax$, $\vax$, and
    $\fax$ in our system
}
  \label{fig:prf-of-ax}
\end{figure}

Completeness is also shown with respect to the Hilbert system. This is
in fact very similar to the completeness proof for intuitionistic modal
logic given in~\cite{str:fossacs13}. 
To simplify our cut-elimination
argument in Section~\ref{sec:cutelim} we will put a restriction on the
$\lef\wdia$-rule: we define the system $\NCKp$ to be $\NCK$ with the
$\lef\wdia$-rule replaced by
\begin{equation}
  \vlinf{\lef{\hat{\wdia}}}{}{\colGcons{\plef{\wdia A},\rig\Pi}}{\colGcons{\wbr{\lef A},\rig\Pi}} 
\end{equation}

\begin{theorem}[Completeness]\label{thm:complete}
  Let $\Xax\subseteq\set{\dax,\tax,\vax}$ and
  $\Yax\subseteq\set{\dax,\tax,\bax,\vax,\fax}$. Then every formula
  that is provable in $\HCK+\Xax+\Yax$ is provable in
  $\NCKp+\Xlefrig+\Ydot+\cutr$.
\end{theorem}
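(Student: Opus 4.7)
The plan is to proceed by induction on the length of a Hilbert derivation of a formula $A$ in $\HCK+\Xax+\Yax$, producing a proof of $\rig A$ in $\NCKp+\Xlefrig+\Ydot+\cutr$. This reduces to verifying two things: that every axiom scheme of $\HCK+\Xax+\Yax$ has a nested-sequent proof, and that both inference rules $\mpr$ and $\necr$ preserve provability under the translation.

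The axioms split naturally into three groups. The intuitionistic propositional axioms admit standard derivations inside the propositional fragment of $\NCKp$. The modal axioms $\kax 1$ and $\kax 2$ are verified by short direct derivations using $\lef\wbox$, $\rig\wbox$, $\lef{\hat\wdia}$, and $\rig\wdia$. For each of $\dax$, $\tax$, $\bax$, $\vax$, $\fax$ that appears in $\Xax\cup\Yax$, Figure~\ref{fig:prf-of-ax} already supplies explicit proofs using the corresponding rules from $\Xlefrig$ and $\Ydot$.

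Modus ponens is handled via $\cutr$. Given inductive proofs of $\rig A$ and $\rig{A\implies B}$, we cut on the formula $A\implies B$ in the input context $\conhole,\rig B$: the output pruning removes $\rig B$, so the left premise of the cut is just $\rig{A\implies B}$ (supplied by the inductive hypothesis), while the right premise $\lef{A\implies B},\rig B$ is obtained by one $\lef\implies$ step whose two sub-premises are $\rig A$ (again by inductive hypothesis, with $\rig B$ again discarded by pruning) and the identity $\lef B,\rig B$.

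The main obstacle is necessitation: from a proof of $\rig A$ we must produce one of $\rig{\wbox A}$. Reducing backwards by the $\rig\wbox$ rule leaves the subgoal $\wbr{\rig A}$, so it suffices to establish the admissibility of the structural rule $\boxnecr\colon\Gamma\vdash\wbr{\Gamma}$. This is proved by induction on derivations, wrapping every sequent in a given derivation by $\wbr{\cdot}$ and checking that each rule instance remains valid in the wrapped version. The essential technical observation is that wrapping commutes with the output-pruning operation on contexts, namely $\ddown{(\wbr{\Gamma})}\conhole=\wbr{\ddown{\Gamma}}\conhole$, and that the shared-subtree side conditions on $\lef\bot$ and $\lef\cor$ are preserved under wrapping. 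The remaining verification is a routine case analysis over all rules of $\NCKp+\Xlefrig+\Ydot+\cutr$, including the new structural rules of $\Ydot$, each of which is local enough that adding an outer $\wbr{\cdot}$ to every sequent leaves its applicability intact.
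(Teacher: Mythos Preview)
Your proposal is correct and follows essentially the same approach as the paper: induct on the Hilbert derivation, derive each axiom scheme directly (invoking Figure~\ref{fig:prf-of-ax} for the $\dax,\tax,\bax,\vax,\fax$ axioms), simulate $\mpr$ via $\cutr$, and simulate $\necr$ via $\rig\wbox$ followed by the admissible rule $\boxnecr$. The only minor variation is that your $\mpr$ simulation uses a single cut on $A\implies B$ together with an $\lef\implies$ step (whose pruned left premise is the inductive proof of $\rig A$), whereas the paper uses two cuts and an instance of $\idr$ on $A$; both are valid and neither adds anything substantive over the other.
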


\begin{proof}
  Clearly, all axioms of propositional intuitionistic logic are
  provable in $\NCKp$. The axioms $\kax1$ and $\kax2$ are provable in
  $\NCKp$, by the same derivations as in~\cite{str:fossacs13}, so we
  do not repeat them here. Note that the derivations for
  $\kax3$, $\kax4$, and~$\kax5$ of~\cite{str:fossacs13} are not valid in our setting
  because of the restrictions to the $\lef\cor$-, $\lef\implies$-, and
  $\lef\bot$-rules, respectively.  Figure~\ref{fig:prf-of-ax} shows that each
  axiom $\xax\in\Xax\cup\Yax$ is provable in
  $\NCKp+\Xlefrig+\Ydot$.  Finally the rules $\mpr$ and $\necr$,
  shown in~\eqref{eq:mp-nec}, can be simulated by the rules $\cutr$
  and $\boxnecr$, shown in~\eqref{eq:cutr}, as follows:
  \begin{equation*}
    \svlderivation{
      \vliin{\cutr}{}{\rig B}{
        \vlhy{\rig A}}{
        \vliin{\cutr}{}{\lef A,\rig B}{
          \vlhy{\rig{A\implies B}}}{
          \vliin{\lef\implies}{}{\lef{A\implies B},\lef A,\rig B}{
            \vlin{\idr}{}{\rig A,\lef A}{
              \vlhy{}}}{
            \vlin{\idr}{}{\lef B,\lef A,\rig B}{
              \vlhy{}}}}}}
    \qqqqquad
    \svlderivation{
      \vlin{\rig\wbox}{}{\rig{\wbox A}}{
        \vlin{\boxnecr}{}{\wbr{\rig A}}{
          \vlhy{\rig A}}}}
  \end{equation*}
  From here we appeal to the admissibility of the $\boxnecr$-rule,
  which follows by a straightforward induction on
  the size of a derivation.  %%\qed
\end{proof}

Theorems \ref{thm:sound-con} and~\ref{thm:complete} are enough to give sound and
complete nested sequent systems with cut for any logic in the cube
shown in Figure~\ref{fig:cube-cons}, by simply adding the
corresponding structural rules from Figure~\ref{fig:CXstr}. If one of
the axioms is $\dax$, $\tax$, or $\vax$, then we can use the logical
rules from Figure~\ref{fig:dt4} instead of the structural rule. For
example, for $\CSfour$, we can use $\set{\tdot,\vdot}$ or
$\set{\tlef,\trig,\vdot}$ or $\set{\tdot,\vlef,\vrig}$ or
$\set{\tlef,\trig,\vlef,\vrig}$ or any union of these sets.

In the
next section we show cut-elimination for $\NCKp+\Xlefrig+\Ydot$,
yielding completeness for the cut-free system. However, this is not
achieved for every subset of $\Xlefrig \cup \Ydot$ with
$\Xax\subseteq\set{\dax,\tax,\vax}$ and
$\Yax\subseteq\set{\dax,\tax,\bax,\vax,\fax}$. In fact, it can be
shown that, for example, $\NCKp+\vdot$ is not complete for $\CKfour$. On the other
hand, we have:

\begin{theorem}[Cut-free Completeness]\label{thm:complete-cutfree}
  Let $\Xax\subseteq\set{\dax,\tax,\vax}$ and
  $\Yax\subseteq\set{\dax,\bax,\fax}$, such that
  if $\tax\in\Xax$ and $\fax\in \Yax$ then $\bax\in\Yax$, and
if $\bax\in\Yax$ or $\fax \in \Yax$ then $\vax\in\Xax$. Then every formula that is provable in 
  $\HCK+\Xax+\Yax$ is also provable in
  $\NCKp+\Xlefrig+\Ydot$.
\end{theorem}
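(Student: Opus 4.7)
The plan is to derive this immediately from Theorem~\ref{thm:complete} by eliminating cut, which is the subject of Section~\ref{sec:cutelim}. Given a formula $A$ provable in $\HCK+\Xax+\Yax$, Theorem~\ref{thm:complete} hands us a proof of $A$ in $\NCKp+\Xlefrig+\Ydot+\cutr$; the task then reduces to transforming this into a cut-free proof in $\NCKp+\Xlefrig+\Ydot$. In other words, the theorem is a corollary of the yet-to-be-proved cut-elimination theorem for $\NCKp+\Xlefrig+\Ydot$, once we understand that the side conditions on $\Xax$ and $\Yax$ are exactly the conditions under which the cut-reduction procedure goes through.

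The restriction $\Yax\subseteq\set{\dax,\bax,\fax}$ is imposed because, in the constructive setting, the dot rules $\tdot$ and $\vdot$ fail to permute past cuts on modal cut-formulas: without $\kax4$ one cannot freely push bracketed material up and down the context, which was the whole point of $\boxmedr$ in \cite{str:fossacs13}. Accordingly, whenever $\tax$ or $\vax$ is required as a theorem, we must take its richer logical form $\tlef,\trig$ or $\vlef,\vrig$ from Figure~\ref{fig:dt4} rather than the dotted structural form, so that the standard $\lef\wdia$/$\rig\wbox$ cut-key reductions still apply. The example $\NCKp+\vdot$ being incomplete for $\CKfour$, which is alluded to just before the theorem, is one concrete witness of why this is forced.

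The two compatibility conditions reflect specific cut-reduction cases. Permuting $\bdot$ or $\fdot$ upwards over a principal cut on $\wbox A$ or $\wdia A$ produces subcases in which a modality is nested one level deeper than before, and these can be reabsorbed only by using $\vlef$ and $\vrig$ in the modified proof; this is exactly why $\bax\in\Yax$ or $\fax\in\Yax$ is required to imply $\vax\in\Xax$. The mixed case where $\tax\in\Xax$ and $\fax\in\Yax$ occur together creates cut-reductions in which a $\trig$-step collapses a bracket in the scope of an $\fdot$-expansion, and the residual configuration is rewritable only with the $\bdot$-rule, whence the need for $\bax\in\Yax$ in that case. These are precisely the interactions that Lemma~\ref{lem:b5aux} was set up to handle at the semantic level, so they must be reproducible proof-theoretically in the cut-reductions.

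The main obstacle is therefore the cut-elimination argument of Section~\ref{sec:cutelim} itself: one must design a suitable induction measure (on the pair of cut-formula complexity and the sum of cut-heights, or something equivalent) that tolerates the explicit contraction $\conr$, the subtree-based output pruning used in $\lef\implies$ and $\cutr$, and the structural rules $\ddot,\bdot,\fdot$ together with the logical modal rules $\xlef,\xrig$, while the case analysis verifies each of the listed side conditions in the corresponding commutative and principal-cut cases. Once that machinery is in place, combining it with Theorem~\ref{thm:complete} gives the present theorem with no further work.
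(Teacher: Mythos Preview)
Your plan is correct and matches the paper's proof: invoke Theorem~\ref{thm:complete} for completeness with cut, then apply the cut-elimination theorem of Section~\ref{sec:cutelim}. One detail you miss: the cut-elimination result (Theorem~\ref{thm:cut-elim}) is stated only for \emph{safe pairs}, which restrict $\Xax\subseteq\{\tax,\vax\}$, so when $\dax\in\Xax$ the paper first uses Proposition~\ref{PropDaxDerivable} to trade $\dlef,\drig$ for $\ddot$, applies cut-elimination to the resulting safe pair, and then converts back; hence your claim that the side conditions here are ``exactly'' those under which cut-reduction goes through is not quite literal. Also, your explanation of why $\tax\in\Xax$ together with $\fax\in\Yax$ forces $\bax\in\Yax$ is off: this constraint arises in the height-preserving admissibility proof for $\tdot$ (Proposition~\ref{prop:admiss}, the $\sfdot$--$\tdot$ and $\sbfdot$--$\tdot$ cases), where permuting $\tdot$ upward over an $\sfdot$ step produces an $\sfbdot$ step, not in a principal cut-reduction; and Lemma~\ref{lem:b5aux} is used only for soundness, not in the cut-elimination argument.
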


Thus, if we want a cut-free system for $\CSfour$, we have to add the
rules $\set{\tlef,\trig,\vlef,\vrig}$ to $\NCKp$.  The proof of Theorem~\ref{thm:complete-cutfree}
relies on the cut-elimination argument presented in the next
section, and can thus be presented only at the end of
Section~\ref{sec:cutelim}. 

Looking back at the cube in Figure~\ref{fig:cube-cons}, we can see
that Theorem~\ref{thm:complete-cutfree} gives us cut-free systems for
the logics $\CK$, $\CKfour$, $\CKfourfive$, $\CD$, $\CDfour$,
$\CDfourfive$, $\CT$, $\CSfour$, and $\CSfive$. The logics for which
our cut-elimination proof does not apply are
%%%that are still missing, are
$\CKB$, $\CKfive$, $\CKBfive$, $\CDfive$, $\CDB$, and $\CTB$.
%%%\todo{check!}

%%%%%%%%%%%%%%%%%%%%%%%%%%%%%%%%%%%%%%%%%%%%%%%%%%%%%%%%%%%%%%%%%%%%%%%%
%%%%%%%%%%%%%%%%%%%%%%%%%%%%%%%%%%%%%%%%%%%%%%%%%%%%%%%%%%%%%%%%%%%%%%%%
%%%%%%%%%%%%%%%%%%%%%%%%%%%%%%%%%%%%%%%%%%%%%%%%%%%%%%%%%%%%%%%%%%%%%%%%

\section{Cut-Elimination}\label{sec:cutelim}

By inspection of the statement of Theorem~\ref{thm:complete-cutfree}, we have that $\tax$ and $\vax$ must be present
as logical rules, and $\bax$ and $\fax$ as structural rules, whereas $\dax$ can be present in either variation.
This is due to the following result, whose proof is straightforward.

\begin{proposition}\label{PropDaxDerivable}
  (i) The rules $\dlef$ and $\drig$ are derivable in
  $\set{\lef\wbox,\ddot}$ and $\set{\rig\wdia,\ddot}$,
  respectively. (ii) The rule $\ddot$ is admissible for any subsystem
  of $\NCKp+\Xlefrig+\Ydot$, provided $\dax\in\Xax$.
\end{proposition}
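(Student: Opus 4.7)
For part (i), the plan is a one-step construction in each direction. To derive $\dlef$ I apply the $\NCKp$-rule $\lef\wbox$ with $\Delta=\emptyset$ to the premise $\Gamma\cons{\wbr{\lef A}}$, obtaining $\Gamma\cons{\plef{\wbox A},\wbr{\emptyset}}$, and then erase the freshly-introduced empty box with a single $\ddot$-step applied in the extended context $\Gamma\cons{\plef{\wbox A},\conhole}$, arriving at $\Gamma\cons{\plef{\wbox A}}$. The derivation for $\drig$ is the symmetric construction using $\rig\wdia$ in place of $\lef\wbox$ and $\rig A$ in place of $\lef A$.

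For part (ii), I follow the standard rule-admissibility pattern. Given any derivation using $\ddot$, I pick a topmost occurrence: its premise $\Gamma\cons{\wbr{\emptyset}}$ is then $\ddot$-free-derivable in the subsystem, and iterating the argument below eliminates every $\ddot$. The core lemma is that any $\ddot$-free derivation of $\Gamma\cons{\wbr{\emptyset}}$ can be transformed into a $\ddot$-free derivation of $\Gamma\cons{\emptyset}$, proved by induction on the height of the derivation with a case analysis on the last rule $\rr$. The routine cases are when $\rr$ is an axiom (the distinguished $\wbr{\emptyset}$ lies in the ambient context and can simply be deleted to leave another instance of the same axiom) and when $\rr$ does not touch the distinguished $\wbr{\emptyset}$ (the IH applies to the premise(s) and $\rr$ is reapplied in the rewritten context). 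The characteristic case is when $\rr$ creates the distinguished $\wbr{\emptyset}$: if $\rr\in\{\lef\wbox,\rig\wdia\}$ is applied with $\Delta=\emptyset$, the combination $\rr$-then-$\ddot$ is precisely a $\dlef$- or $\drig$-step by part (i), both of which are available in the subsystem thanks to the hypothesis $\dax\in\Xax$.

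The main obstacle I anticipate is the remaining sub-case of the creator branch, namely when the empty box is produced by a structural rule ($\bdot$, $\fdot$, $\vdot$) or by an iterated-box logical rule ($\vlef$ or $\vrig$) applied with empty $\Delta$ or $\Sigma$. Each of these demands its own local rewrite combining the original step with $\dlef$ or $\drig$ and, where necessary, a secondary appeal to the IH on a shortened subderivation; the soundness of each rewrite is guaranteed semantically, since $\ddot$ adds no new theorems to the Hilbert system once $\dax\in\Xax$ (via Theorem~\ref{thm:sound-con}), which guides the syntactic construction. Producing a single uniform syntactic rewrite that covers every subsystem of $\NCKp+\Xlefrig+\Ydot$ containing $\dlef$ and $\drig$ is what turns the schematic induction into a careful, if largely mechanical, case analysis.
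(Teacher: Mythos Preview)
Your argument for part (i) is correct and is exactly the two-step derivation the paper has in mind: apply $\lef\wbox$ (resp.\ $\rig\wdia$) with $\Delta=\emptyset$ and then erase the resulting $\wbr{\emptyset}$ with a single $\ddot$.

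For part (ii) your overall strategy---eliminate a topmost $\ddot$ by an inner induction on the height of the derivation of its premise, with a case analysis on the last rule---is the natural one and is what the paper's ``straightforward'' presumably gestures at. You correctly isolate the only interesting situation: when the last rule \emph{creates} the distinguished $\wbr{\emptyset}$. For $\lef\wbox$ and $\rig\wdia$ you give the right rewrite (replace by $\dlef$/$\drig$). However, your treatment of the remaining creator cases (e.g.\ $\vlef$, $\vrig$, $\bdot$, $\fdot$ producing an empty box) is not an argument yet: you write that each ``demands its own local rewrite'' and that ``soundness is guaranteed semantically\ldots\ which guides the syntactic construction''. Semantic soundness does not supply a syntactic admissibility proof, and for instance in the $\vlef$ case (premise $\Gamma\{\wbr{\lef{\wbox A}}\}$, target $\Gamma\{\lef{\wbox A}\}$) it is not obvious what the local rewrite is---$\dlef$ yields $\Gamma\{\lef{\wbox\wbox A}\}$, which is not what you need, and there is no single rule that strips one box from an input $\wbox$-formula. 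So these cases need to be worked out concretely rather than asserted; as written, this is where the proposal has a gap. That said, the paper itself gives no details here either, so you are not behind the paper's own exposition---you have simply been more explicit about where the genuine case analysis lies.
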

However, our cut-elimination argument becomes slightly simpler if we
work with $\ddot$ instead of $\dlef$ and $\drig$. To summarize, the
following definition fixes the axiom sets our cut-elimination proof deals with.

\begin{definition}
  Let $\Xax,\Yax\subseteq\set{\dax,\tax,\bax,\vax,\fax}$. We call the
  pair $\tuple{\Xax,\Yax}$ \emph{safe} if
  $\Xax\subseteq\set{\tax,\vax}$ and
  $\Yax\subseteq\set{\dax,\bax,\fax}$, such that
if  $\tax\in \Xax$ and $\fax\in\Yax$ then $\bax\in\Yax$, and
if  $\bax\in\Yax$ or $\fax\in\Yax$ then $\vax\in\Xax$.
\end{definition}

We can now state our cut-elimination result in a concise way:

\begin{theorem}[Cut-Elimination]\label{thm:cut-elim}
  Let $\tuple{\Xax,\Yax}$ be a safe pair of axioms, and let $\DD$ be a
  proof in $\NCKp+\Xlefrig+\Ydot+\cutr$. Then
  there is a proof $\DD'$ of the same conclusion in
  $\NCKp+\Xlefrig+\Ydot$.
\end{theorem}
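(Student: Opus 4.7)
The plan is to prove cut-elimination by a lexicographic induction on (complexity of cut formula, sum of heights of the two premisses), adapted to the particular features of this calculus. As preliminaries I would first establish height-preserving admissibility of $\weakr$, admissibility of $\boxnecr$, and invertibility of the non-branching logical rules such as $\rig\wbox$, $\rig\implies$, $\lef\cand$, and $\lef{\hat\wdia}$. These go through by routine induction on derivations, with attention to how the output pruning operation behaves under each rule and in particular how it interacts with each rule in $\Xlefrig+\Ydot$.

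Because the explicit contraction rule $\conr$ is not admissible here (this being one of the features that distinguishes the constructive setting from the classical and intuitionistic ones discussed in the introduction), a naive cut-reduction breaks down when the cut formula is duplicated by a contraction above the cut. To remedy this I would introduce a \emph{multicut} rule that cuts simultaneously against an arbitrary multiset of occurrences of the cut formula in the right premiss, subsuming the ordinary $\cutr$ and absorbing any $\conr$ on the cut formula. The main theorem is then obtained by showing that multicut is admissible in $\NCKp+\Xlefrig+\Ydot$, from which the statement for $\cutr$ follows as a special case.

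The main argument proceeds by case analysis on the last rules of the two premisses of a topmost multicut. When the cut formula is principal in both premisses, one produces multicuts on its immediate subformulas; the key modal cases are those where the last rules are $\rig\wbox$ against $\lef\wbox$ (on the same $\wbox A$) or $\rig\wdia$ against $\lef{\hat\wdia}$. Here the output pruning in the multicut rule aligns exactly with the bracket introduced by the right-rule, so one obtains a smaller multicut on $A$ inside that bracket. When the cut formula is not principal in one premiss, the multicut is permuted upward through the offending rule, appealing to the secondary induction. The case where the cut formula is introduced by $\weakr$ is handled using admissibility of weakening, and the case where it is introduced by $\conr$ is absorbed by the multicut machinery.

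The main obstacle is the interaction of the multicut with the modal structural rules $\bdot$ and $\fdot$, which move entire subtrees across brackets. Without the medial-like rule $\boxmedr$ (which, as noted in the introduction, is unsound in the constructive setting), material cannot be freely redistributed, and a $\bdot$ or $\fdot$ below the multicut may displace the output formula in a way that obstructs direct permutation. The safety conditions on $\tuple{\Xax,\Yax}$ are calibrated precisely for this: whenever $\bax\in\Yax$ or $\fax\in\Yax$ we have $\vax\in\Xax$, so the rules $\vlef$ and $\vrig$ are available to rebuild the displaced nested structure, and the clause ``$\tax\in\Xax$ and $\fax\in\Yax$ imply $\bax\in\Yax$'' ensures that residual auxiliary formulas left behind after an $\fdot$ commutation can be reabsorbed by an available structural rule. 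Working out these commutation cases systematically, and checking that in each case either the cut-formula complexity strictly decreases or the sum of premiss heights strictly decreases while the cut formula is preserved, constitutes the technical core of the proof.
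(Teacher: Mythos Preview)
Your proposal misses the central difficulty that the paper explicitly identifies (see the discussion around~\eqref{eq:weird}). The obstacle is not contraction on the cut formula---which multicut would indeed absorb---but the interaction between the restricted rules $\lef\cor$, $\lef\bot$, $\lef{\hat\wdia}$ (whose side conditions require the output branch to lie next to the principal formula) and the non-invertible rules $\rig\wdia$, $\rig\cor$. Take a cut on $\wdia A$ whose left premiss ends in $\lef\cor$: you cannot permute $\lef\cor$ below the cut, because in the conclusion the output formula may sit in a different subtree from the principal $\lef{D\cor E}$; and you cannot invert $\rig\wdia$ to make the cut formula principal on the left either. The paper's solution is deliberately asymmetric: first permute the cut upward on the \emph{right} until the cut formula is principal there (Lemma~\ref{lem:make-anchored}, controlled by flow-graphs, a ``left-free'' condition, and a multiset ordering on cut-values rather than a height measure), and only then handle the left premiss, since the side condition on $\lef{\hat\wdia}$ now forces the output to be positioned so that $\lef\cor$ permutations go through (Lemma~\ref{lem:one-step-anchored}). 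Your symmetric ``permute through the offending rule'' strategy, with a (rank, sum-of-heights) measure, does not encode this asymmetry and will get stuck precisely in this configuration.

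Two further gaps. First, when $\vax\in\Xax$ the rules $\svlef,\svrig$ move the cut formula across bracket levels without decomposing it; the paper absorbs these into generalised cuts $\wdiacutr,\wboxcutr$ (see~\eqref{eq:diaboxcut} and Fact~\ref{fact:4cut}), which a plain multicut does not cover. Second, the explicit $\conr$ in the system is needed not because contraction on the cut formula is hard, but because several permutation cases---e.g.\ the $\lef\implies$ case and the $\sbdot,\sfdot$ cases in Lemma~\ref{lem:make-anchored}---must duplicate an entire context $\Theta\conhole$ or $\Dcons{\emptyset}$ in the \emph{result} of the reduction. That use of $\conr$ is orthogonal to what multicut buys you, so your diagnosis that multicut is the right remedy for the contraction issue is off target.
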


The rest of this section is dedicated to the proof of Theorem~\ref{thm:cut-elim}.

Since our cut-elimination strategy might seem unorthodox, we first explain some of the problems we encountered. Consider the
following derivation:
\begin{equation}
  \label{eq:weird}
  \svlderivation{
    \vliin{\cutr}{}{\Gcons{\Tcons{\plef{C\cor B}},\rig\Pi}}{
      \vliin{\lef\cor}{}{\Gcons{\Tcons{\plef{C\cor B},\rig{\wdia A}}}}{
        \vlhy{\Gcons{\Tcons{\plef{C},\rig{\wdia A}}}}}{
        \vlhy{\Gcons{\Tcons{\plef{B},\rig{\wdia A}}}}}}{
      \vlin{\lef\wdia}{}{\Gcons{\Tcons{\plef{C\cor B},\lef{\wdia A}},\rig\Pi}}{
        \vlhy{\Gcons{\Tcons{\plef{C\cor B},\wbr{\lef A}},\rig\Pi}}}}}
\end{equation}
We cannot permute the instance of $\lef\cor$ under the $\cutr$ because
in general it is not applicable in $\Gcons{\Tcons{\plef{C\cor
      B}},\rig\Pi}$. On the other hand, we cannot reduce the rank of
the cut along the main connective of the cut formula $\wdia A$,
since there is no invertible rule for $\rig{\wdia A}$, and
different things might happen in the left two branches. Furthermore,
we cannot just impose the same restriction that we impose on the
$\lef\cor$ rule also on the $\cutr$ rule, because then we would not be
able to reduce the cut rank in the ordinary $\rig\wdia$-$\lef\wdia$
cases. The situation in~\eqref{eq:weird} is the reason we work with
the rule $\lef{\hat{\wdia}}$ instead of $\lef\wdia$. Note that
imposing the same restriction on all logical rules would make other
permutation cases difficult.

\medskip

In what follows we will use the shorthand $\wbrn{\Gamma}n$ to denote $\Gamma$ with $n$ pairs of brackets around it, i.e.\ $\overbrace{[\vldots[}^n\ \Gamma\ \overbrace{]\vldots]}^n$. Also, we define the \emph{depth} of a context $\Gamma\conhole$ to be the
number of bracket pairs in whose scope the hole of $\Gamma\conhole$ appears, i.e., the depth of 
$\Delta_0,\wbr{\Delta_1,\wbr{\ldots,\wbr{\Delta_n,\conhole}\ldots}}$
is $n$.

We consider \emph{super rule} variants of the rules $\vlef$,
$\vrig$, $\bdot$, and~$\fdot$, shown in Figure~\ref{fig:s4}, obtained from unboundedly many
applications of the corresponding normal rules in a certain way. For a safe pair
$\tuple{\Xax,\Yax}$ of axioms, we define
\begin{equation*}
  \sXlefrig=
  \begin{cases}
    \Xlefrig &
    \text{ if } \vax\not\in\Xax \\
    (\Xlefrig\setminus\set{\vrig,\vlef})\cup\set{\svrig,\svlef,\svdrig,\svblef} &
    \text{ if } \vax\in\Xax 
  \end{cases}
\end{equation*}
and 
\begin{equation*}
  \sYdot=
  \begin{cases}
    \Ydot & 
    \text{ if } \bax,\fax\notin\Yax \\
    (\Ydot\setminus\set{\bdot})\cup\set{\sbdot} &
    \text{ if } \bax\in\Yax,\fax\notin\Yax \\
    (\Ydot\setminus\set{\fdot})\cup\set{\sfdot} &
    \text{ if } \bax\notin\Yax,\fax\in\Yax \\
    (\Ydot\setminus\set{\bdot,\fdot})\cup\set{\sfbdot,\sbfdot} &
    \text{ if } \bax,\fax \in \Yax \\
  \end{cases}
\end{equation*}

\begin{figure}[!t]
  \begin{equation*}
    \vlinf{\svrig\!}{}{
      \Gcons{\rig{\wdia A},\Dcons{\emptyset}}}{
      \Gcons{\Dcons{\rig{\wdia A}}}}
    \quad
    \vlinf{\svlef\!}{}{
      \Gcons{\lef{\wbox A},\Dcons{\emptyset}}}{
      \Gcons{\Dcons{\lef{\wbox A}}}}
    \quad
    \vlinf{\svdrig\!}{}{
      \Gcons{\rig{\wdia A},\wbr{\Dcons{\emptyset}}}}{
      \Gcons{\wbr{\Dcons{\rig{A}}}}}
    \quad
    \vlinf{\svblef\!}{}{
      \Gcons{\lef{\wbox A},\wbr{\Dcons{\emptyset}}}}{
      \Gcons{\wbr{\Dcons{\lef{A}}}}}
  \end{equation*}
  \begin{equation*}
    \vlinf{\sbdot}{}{
      \colGcons{\Sigma,\Dcons\emptyset}}{
      \colGcons{\Dcons{\wbrn{\Sigma}n}}}
    \quad
    \vlinf{\sfdot}{}{
      \colGcons{\wbr\Sigma,\Dcons\emptyset}}{
      \colGcons{\Dcons{\wbr{\Sigma}}}}
    \quad
    \vlinf{\sfbdot}{}{
      \colGcons{\Sigma,\Dcons\emptyset}}{
      \colGcons{\Dcons{\wbrn{\Sigma}k}}}
    \quad
    \vlinf{\sbfdot}{}{
      \colGcons{\wbr\Sigma,\Dcons\emptyset}}{
      \colGcons{\Dcons{\wbrn{\Sigma}k}}}
  \end{equation*}
  \caption{Super rules for $\vax$, $\bax$, and $\fax$, where $n$ is
the depth of $\Delta\conhole$ and $1\le k\le n$.}
  \label{fig:s4}
\end{figure}

We need these variants in order to obtain height-preserving
admissibility of certain rules. We have the following proposition:

\begin{proposition}\label{prop:super}
  A sequent is provable in $\NCKp+\sXlefrig+\sYdot$ if and only if it
  is provable in $\NCKp+\Xlefrig+\Ydot$.
\end{proposition}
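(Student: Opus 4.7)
For the ``if'' direction of this equivalence --- that every $\NCKp+\Xlefrig+\Ydot$-proof is also a $\NCKp+\sXlefrig+\sYdot$-proof --- the observation is that each normal rule is literally a special case of its super-variant with minimal-depth context. Specifically, $\vrig$ and $\vlef$ are recovered from $\svrig$ and $\svlef$ by taking $\Delta\conhole = \wbr\conhole$; $\bdot$ and $\fdot$ are similarly recovered from $\sbdot$ and $\sfdot$, or from $\sfbdot$ and $\sbfdot$ with $n = k = 1$ in the case that $\bax,\fax\in\Yax$. Thus every $\NCKp+\Xlefrig+\Ydot$-derivation is verbatim a proof in $\NCKp+\sXlefrig+\sYdot$.

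For the converse direction, the plan is to derive each super rule in $\NCKp+\Xlefrig+\Ydot$ by induction on $n := \depth{\Delta\conhole}$. For $\svrig$ (resp.\ $\svlef$), $n$ successive applications of $\vrig$ (resp.\ $\vlef$) suffice, each moving the principal formula one bracket outward. For $\svdrig$ (resp.\ $\svblef$), one first applies $\rig\wdia$ (resp.\ $\lef\wbox$) at the innermost bracket to introduce the modality and move the formula out one level, and then iterates $\vrig$ (resp.\ $\vlef$) $n$ further times. For $\sbdot$ and $\sfdot$, $n$ successive applications of $\bdot$ (resp.\ $\fdot$) starting from the innermost bracket do the job. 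For $\sfbdot$ and $\sbfdot$ (which apply only when both $\bax,\fax\in\Yax$), the derivation uses $n$ applications in total, namely $k$ of type $\bdot$ and $n-k$ of type $\fdot$ for $\sfbdot$, and $k-1$ of type $\bdot$ and $n-k+1$ of type $\fdot$ for $\sbfdot$.

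The main obstacle will be the mixed case $\sfbdot$/$\sbfdot$: the order of the interleaved $\bdot$ and $\fdot$ applications must be chosen carefully so that the outermost (final) application produces exactly the form required in the conclusion. The outermost application must be a $\bdot$ for $\sfbdot$ --- so that $\Sigma$ emerges bare at the outer level without leaving an isolated empty bracket $\wbr{}$ --- and an $\fdot$ for $\sbfdot$, so that the single remaining bracket around $\Sigma$ is preserved. With this scheduling fixed, each inductive step reduces to a single rule application at the relevant innermost bracket followed by an appeal to the induction hypothesis, and no auxiliary lemmas are required.
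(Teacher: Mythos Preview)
Your proposal is correct and follows essentially the same approach as the paper: the normal rules are instances of the super rules (with depth-$1$ context $\Delta\conhole$), and conversely each super rule is derived by iterating the corresponding normal rule(s), with $\svdrig$/$\svblef$ additionally composing with $\rig\wdia$/$\lef\wbox$ and $\sfbdot$/$\sbfdot$ interleaving both $\bdot$ and $\fdot$. Two minor points: in the ``if'' direction you should take $\Delta\conhole = \wbr{\Delta',\conhole}$ rather than $\wbr{\conhole}$ to recover the general form of $\vrig$, and for $\sbfdot$ the outermost step need not be $\fdot$ (though your scheduling works).
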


\begin{proof} %%[of Proposition~\ref{prop:super}]
  One direction follows immediately from the observation that $\vlef$
  and $\vrig$ are special cases of $\svlef$ and $\svrig$,
  respectively, and that $\bdot$ is a special case of $\sbdot$ and of
  $\sfbdot$, and that $\fdot$ is a special case of $\sfdot$ and of
  $\sbfdot$. Conversely, $\svlef$ and $\svrig$ are just sequences of
  $\vlef$ and $\vrig$, respectively, and $\svblef$ and $\svdrig$ are
  obtained by composing with $\lef\wbox$ and $\rig\wdia$,
  respectively. Then $\sbdot$ and $\sfdot$ are just sequences of
  $\bdot$ and $\fdot$, respectively, whereas $\sfbdot$ and $\sbfdot$
  use both $\bdot$ and $\fdot$.
  %%\qed
\end{proof}

\begin{lemma}\label{lem:4-over-b}
  Let $\tuple{\Xax,\Yax}$ be a safe pair of axioms. If $\vax\in\Xax$,
  then the rules $\svlef$ and $\svrig$ permute over any $\rr\in\sYdot$.
\end{lemma}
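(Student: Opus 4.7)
The plan is to proceed by case analysis on the rule $\rr\in\sYdot$; by definition of $\sYdot$, this rule is one of $\ddot$, $\sbdot$, $\sfdot$, $\sfbdot$, or $\sbfdot$. I will permute $\svlef$ past $\rr$; the argument for $\svrig$ is entirely symmetric, substituting $\rig{\wdia A}$ for $\lef{\wbox A}$ throughout. So consider a proof fragment of the shape
\[
    \vlderivation{
        \vlin{\rr}{}{C}{
            \vlin{\svlef}{}{\Gcons{\lef{\wbox A},\Dcons{\emptyset}}}{
                \vlhy{\Gcons{\Dcons{\lef{\wbox A}}}}}}}
\]
and aim to produce an equivalent derivation of $C$ ending with $\svlef$, whose premise is obtained from $\Gcons{\Dcons{\lef{\wbox A}}}$ by one or more rules of $\sYdot$.

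For each choice of $\rr$ I distinguish sub-cases based on where the ``active site'' of $\rr$ lies in the premise $\Gcons{\lef{\wbox A},\Dcons{\emptyset}}$, relative to the exposed $\lef{\wbox A}$ and the context $\Delta\conhole$. In the simplest sub-case, the active site of $\rr$ lies entirely in the outer context $\Gamma\conhole$, disjoint from $\lef{\wbox A}$ and $\Delta\conhole$; here $\rr$ and $\svlef$ act on disjoint parts of the sequent, and I swap them directly. When $\rr$ instead acts strictly within the $\Delta\conhole$-portion without involving $\lef{\wbox A}$, the key observation is that $\Delta$'s internal tree structure occurs identically in $\Gcons{\Dcons{\lef{\wbox A}}}$ and $\Gcons{\lef{\wbox A},\Dcons{\emptyset}}$ (they differ only in what fills the hole of $\Delta$); I therefore perform the same manipulation first on $\Dcons{\lef{\wbox A}}$, obtaining some $\Gcons{\Delta'\cons{\lef{\wbox A}}}$, and then apply $\svlef$ with the new inner context $\Delta'$ to reach $C$.

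The hard part will be the sub-cases where the active region of $\rr$ overlaps with the exposed $\lef{\wbox A}$ and with the outer bracket skeleton of $\Dcons{\emptyset}$ simultaneously. This arises only for $\rr\in\{\sbdot,\sfbdot,\sbfdot\}$, where the ``moving'' sub-sequent $\Sigma$ of $\rr$ happens to coincide with (or contain) the $\lef{\wbox A}$ just exposed by $\svlef$, so that the composite effect of $\svlef$ followed by $\rr$ is to extract $\lef{\wbox A}$ from deep inside $\Delta$, wrap it in extra brackets, and redeposit it at a new position in the tree. My plan is to realise this composite effect by a single $\svlef$ with a suitably enlarged inner context $\Delta''$ of greater depth, possibly preceded by a further rule from $\sYdot$ that rearranges only the bracket skeleton and does not touch $\lef{\wbox A}$. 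It is precisely here that the hypothesis $\vax\in\Xax$ is used substantively, since it guarantees that $\svlef$ with arbitrary-depth inner contexts is available in the system.

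In each sub-case the permuted derivation will be exhibited explicitly, and a direct inspection will confirm that it has the same conclusion $C$. The proof thus reduces to a careful but mechanical enumeration of the interactions between $\rr$ and $\svlef$, with the only genuine content concentrated in the third group of sub-cases above.
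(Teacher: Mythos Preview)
Your proposal has the permutation direction reversed. In the paper's intended reading, ``$\svlef$ permutes over $\rr$'' means that whenever $\rr$ sits immediately \emph{above} an instance of $\svlef$ (i.e., reading top-to-bottom, first $\rr$ then $\svlef$), one can rewrite this so that $\svlef$ is applied first and $\rr$ afterwards. Concretely, the starting configuration is
\[
    \vlderivation{
        \vlin{\svlef}{}{\Gcons{\lef{\wbox A},\Dcons{\emptyset}}}{
            \vlin{\rr}{}{\Gcons{\Dcons{\lef{\wbox A}}}}{
                \vlhy{P}}}}
\]
and the goal is to obtain the same conclusion with $\rr$ at the bottom. You instead start from $\svlef$ above $\rr$ and try to push $\svlef$ downwards, which is the opposite statement. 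That this is the wrong direction is confirmed by the lemma's sole use in the cut-elimination argument: there a $\wboxcutr$ is decomposed into $\svlef$ above a $\cutr$, with some $\rr\in\sYdot$ sitting \emph{above} the $\svlef$; one needs to move $\svlef$ upward past $\rr$ so that $\rr$ becomes adjacent to the $\cutr$ and the standard permutation cases apply.

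Once the direction is corrected, the case analysis is also different from what you sketch. The nontrivial interactions are those where the $\lef{\wbox A}$ in the premise $\Gcons{\Dcons{\lef{\wbox A}}}$ lies inside the region manipulated by $\rr$; for $\rr=\sbdot$, say, this means $\lef{\wbox A}$ sits inside $\wbrn{\Sigma}{n}$ or inside the $\Delta$-part of the $\sbdot$. In either case one applies $\svlef$ first with a \emph{deeper} inner context (absorbing the extra bracket layers introduced by $\rr$) and then a single $\rr$ at the bottom---no extra $\sYdot$-steps are needed. It is exactly this enlargement of the $\svlef$-context that requires $\vax\in\Xax$ (so that the super-rule $\svlef$, allowing arbitrary depth, is available), which is the point you correctly identified; but the mechanism is simpler than the composite ``extract, wrap, redeposit'' you describe for the reversed direction.
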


\begin{proof}
  We show here how $\svrig$ permutes over $\sbdot$. There are two nontrivial interactions:
  \begin{equation*}%\small
    \svlderivation{
      \vlin{\svrig}{}{\Gcons{\rig{\wdia A} , \Scons{\emptyset},\Dcons{\emptyset}}}{
        \vlin{\sbdot}{}{\Gcons{\Scons{\rig{\wdia A}},\Dcons{\emptyset}}}{
          \vlhy{\Gcons{\Dcons{\wbrn{\Scons{\rig{\wdia A}}}n}}}}}}
    \qquato
    \svlderivation{
      \vlin{\sbdot}{}{\Gcons{\rig{\wdia A} , \Scons{\emptyset},\Dcons{\emptyset}}}{
        \vlin{\svrig}{}{\Gcons{\rig{\wdia A},\Dcons{\wbrn{\Scons{\emptyset}}n}}}{
          \vlhy{\Gcons{\Dcons{\wbrn{\Scons{\rig{\wdia A}}}n}}}}}}
  \end{equation*}
  and
  \begin{equation*}%\small
    \svlderivation{
      \vlin{\svrig}{}{\Gcons{\rig{\wdia A} , \Sigma,\Dcons{\emptyset}}}{
        \vlin{\sbdot}{}{\Gcons{\Sigma,\Delta'\cons{\emptyset}}}{
          \vlhy{\Gcons{\Delta'\cons{\wbrn{\Sigma}n}}}}}}
    \qquato
    \svlderivation{
      \vlin{\sbdot}{}{\Gcons{\rig{\wdia A} , \Sigma,\Dcons{\emptyset}}}{
        \vlin{\svrig}{}{\Gcons{\rig{\wdia A},\Dcons{\wbrn{\Sigma}n}}}{
          \vlhy{\Gcons{\Delta'\cons{\wbrn{\Sigma}n}}}}}}
  \end{equation*}
  The other cases are similar. %%\qed
\end{proof}

When the rules $\vlef$ and $\vrig$ are present our cut rule, shown in \eqref{eq:cutr}, is not strong enough for our induction to work. Therefore we additionally use the following two rules
\begin{equation}\small
  \label{eq:diaboxcut}
  \vliinf{\wdiacutr}{}{\Gcons{\lef\Theta\cons{\emptyset}}}{
    \ddGamma\cons{\lef\Theta\cons{\prig{\wdia A}}}}{
    \Gcons{\plef{\wdia A},\lef\Theta\cons{\emptyset}}}
%%  \qquad
  %% \vliinf{\wdiacutr}{}{\Gcons{\rig\Theta\cons{\emptyset}}}{
  %%   \Gcons{\prig{\wdia A}}}{
  %%   \Gcons{\plef{\wdia A},\rig\Theta\cons{\emptyset}}}
  \;\quand\;
%\end{equation*}
%and
%\begin{equation*}
  \vliinf{\wboxcutr}{}{\Gcons{\Theta\cons{\emptyset}}}{
    \ddGamma\cons{\prig{\wbox A},\ddown{(\Theta\cons{\emptyset})}}}{
    \Gcons{\Theta\cons{\plef{\wbox A}}}}
  %% \vliinf{\wboxcutr}{}{\Gcons{\lef\Theta\cons{\emptyset}}}{
  %%   \ddGamma\cons{\prig{\wbox A},\lef\Theta\cons{\emptyset}}}{
  %%   \Gcons{\lef\Theta\cons{\plef{\wdia A}}}}
  %% \qquad
  %% \vliinf{\wboxcutr}{}{\Gcons{\rig\Theta\cons{\emptyset}}}{
  %%   \Gcons{\prig{\wbox A}}}{
  %%   \Gcons{\rig\Theta\cons{\plef{\wbox A}}}}
\end{equation} 
which are just combinations of $\cutr$ with towers of $\vrig$ and
$\vlef$, respectively. More precisely: 

\begin{fakt}\label{fact:4cut}
  The rule $\wdiacutr$ is derivable in $\set{\cutr,\svrig}$ and in $\set{\cutr,\vrig}$, and the
  rule $\wboxcutr$ is derivable in $\set{\cutr,\svlef}$ and in $\set{\cutr,\vlef}$.
\end{fakt}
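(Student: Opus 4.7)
My plan is to derive each rule by hoisting the cut formula to the top level of its enclosing subtree using the appropriate super rule, and then applying the ordinary $\cutr$ at that top level. For $\wdiacutr$, I will start from the left premise $\ddGamma\cons{\lef\Theta\cons{\prig{\wdia A}}}$ and observe that $\lef\Theta\conhole$ is a tower of bracket-nestings with LHS siblings ending in the hole, so a single application of $\svrig$ (with outer context $\ddGamma\conhole$ and inner context $\lef\Theta\conhole$) rewrites this premise to $\ddGamma\cons{\prig{\wdia A}, \lef\Theta\cons{\emptyset}}$. I will then apply $\cutr$ on the formula $\wdia A$ with the cut context $\Gamma'\conhole := \Gcons{\conhole, \lef\Theta\cons{\emptyset}}$: the right cut premise $\Gamma'\cons{\plef{\wdia A}}$ is precisely the second premise of $\wdiacutr$, and the conclusion $\Gamma'\cons{\emptyset}$ is the conclusion of $\wdiacutr$. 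The key verification is that the output pruning $\ddGamma'\cons{\prig{\wdia A}}$ equals the sequent produced by $\svrig$; using the input-context decomposition $\Gamma\conhole = \Gamma_1\cons{\Lambda\conhole, \rig\Pi}$ from Observation~\ref{obs:context}, adding $\lef\Theta\cons{\emptyset}$ as a sibling of the hole enlarges $\Lambda\conhole$ to another output context, so the pruning removes only the subtree around $\rig\Pi$ and leaves $\lef\Theta\cons{\emptyset}$ intact, yielding exactly the required shape.

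The argument for $\wboxcutr$ will be dual: apply $\svlef$ to the right premise $\Gcons{\Theta\cons{\plef{\wbox A}}}$ to obtain $\Gcons{\plef{\wbox A}, \Theta\cons{\emptyset}}$, and then cut on $\wbox A$ with the context $\Gcons{\conhole, \Theta\cons{\emptyset}}$. The one point requiring care, and the reason the left premise of $\wboxcutr$ reads $\ddGamma\cons{\prig{\wbox A}, \ddown{(\Theta\cons{\emptyset})}}$ rather than using $\Theta\cons{\emptyset}$ directly, is that $\Theta\conhole$ carries no polarity annotation and so may be either input or output. When $\Theta\conhole$ is input, the output formula sits inside $\Theta\cons{\emptyset}$ and is stripped away by the pruning of the cut context, leaving precisely $\ddown{(\Theta\cons{\emptyset})}$ behind; when $\Theta\conhole$ is output, $\Theta\cons{\emptyset}$ is already LHS and $\ddown{(\Theta\cons{\emptyset})} = \Theta\cons{\emptyset}$, reducing to the previous situation. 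A short case split on the location of the unique output formula will close the matching between what the $\cutr$ application demands and what the left premise of $\wboxcutr$ supplies.

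Finally, the $\set{\cutr,\vrig}$ and $\set{\cutr,\vlef}$ versions follow at once by replacing each super-rule step above with its defining sequence of plain $\vrig$ or $\vlef$ applications, as in the proof of Proposition~\ref{prop:super}. The main obstacle in the entire argument is purely bookkeeping --- correctly tracking the unique output formula through the output-pruning operation $\ddown{\cdot}$ across the construction of the new cut context $\Gamma'\conhole$ --- while the rest of the derivation, including the applicability of $\svrig$ and $\svlef$ through the relevant towers of brackets, is entirely routine.
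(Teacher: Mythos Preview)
Your proposal is correct and is exactly the intended derivation: the paper states this as a bare fact with no proof, having just remarked that $\wdiacutr$ and $\wboxcutr$ ``are just combinations of $\cutr$ with towers of $\vrig$ and $\vlef$, respectively.'' Your careful tracking of the output pruning through the new cut context---in particular the case split on whether the output formula lies inside $\Theta\cons{\emptyset}$ for $\wboxcutr$---fills in precisely the bookkeeping that the paper leaves implicit.
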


By $\Cutr$, we refer to the set
$\set{\cutr,\wdiacutr,\wboxcutr}$ or $\set{\cutr}$, depending on
whether $\vlef$ and $\vrig$ are present or not, and we write
$\ast\cutr$ for any variant in~$\Cutr$.
Throughout this section we fix the convention that, for any $\ast\cutr$ step, the output cut formula occurs in the left premise, while the input cut formula occurs in the right premise.

\begin{definition}
For a formula $A$ we define $\depth{A}$
  inductively as follows:
  \begin{align*}
   & \depth{a}=\depth{\bot}=1 \qqqqquad
    \depth{\wbox A}=\depth{\wdia A}=\depth{A}+1\\
   & \depth{A\cand B}=\depth{A\cor B}=\depth{A\implies B}=
    \max(\depth{A},\depth{B})+1
  \end{align*} 
  %% \begin{align*}
  %%   \depth{a}=\depth{\bot}&=1\\
  %%   \depth{\wbox A}=\depth{\wdia A}&=\depth{A}+1\\
  %%   \depth{A\cand B}=\depth{A\cor B}=\depth{A\implies B}&=
  %%   \max(\depth{A},\depth{B})+1
  %% \end{align*}
  Given a $\cutr$ step, as shown in~\eqref{eq:cutr}, its
  \emph{cut formula} is $A$, and its \emph{rank} is $\depth{A}$.
\end{definition} 
Since the rules $\wdiacutr$ and $\wboxcutr$ can
be seen as derivations consisting of one instance of $\cutr$ and some
instances of $\vrig$ and $\vlef$, respectively, the definition of rank
also applies to $\wdiacutr$ and $\wboxcutr$ given
in~\eqref{eq:diaboxcut}. We use this convention throughout this
section: whenever we define a notion for an instance of $\cutr$, this
definition also applies for $\wdiacutr$ and $\wboxcutr$ because there
is a unique instance of $\cutr$ contained in them.

\begin{definition}\label{dfn:black-destructing}
  The inference rules $\idr$, $\lef\bot$, $\lef\cand$, $\lef\cor$,
  $\lef\implies$, $\lef\wbox$, $\lef\wdia$, $\tlef$, and $\svblef$ are called
  \emph{black destructing}. The \emph{principal formula} of a black
  destructing rule instance is the input formula singled out in its
  conclusion in Figures \ref{fig:NCK}, \ref{fig:dt4} and~\ref{fig:s4}.
\end{definition}

In other words a rule instance is black destructing if, considered bottom-up, it decomposes
an input formula along its main connective, and that formula is its
principal formula. In particular, note that $\vlef$ and $\svlef$ are not black
destructing.

\begin{definition}\label{dfn:cut-anchored+value}
  An instance of $\cutr$ is \emph{anchored} if the rule immediately
  above it on the right is a black-destructing rule whose principal
  formula is the cut formula. 
  We define the
  \emph{value} of a $\cutr$-instance to be the pair $\tuple{r,s}$, where $r$ is its rank, and $s=0$ if it is anchored and $s=1$ if it is not anchored. The \emph{value} of an instance of $\wdiacutr$ or $\wboxcutr$ is the value of the underlying $\cutr$-instance (if we read the $\wdiacutr$/$\wboxcutr$ as composition of $\cutr$ and $\svrig$/$\svlef$).
Finally, the \emph{cut-value} of a derivation $\DD$, denoted by $\cv{\DD}$ is the
  multiset of the values of its $\cutr$-instances.
\end{definition}

We order cut values lexicographically, i.e., 
$$\tuple{r_1,s_1}<\tuple{r_2,s_2} \qquiff r_1<r_2\qquor r_1=r_2 \quand
s_1<s_2$$ Then, multisets of cut values are ordered via a common
multiset ordering: Given an ordered set $\tuple{V,<}$, let $\cM(V)$ be
the set of multisets of elements of $V$, and let $M_1,M_2\in\cM(V)$ be
two such multisets. We define $M_1\ll M_2$ iff there is a multiset
surjection $f\colon M_1\to M_2$ such that for all $v\in M_2$, we either have $f^{-1}
(v) = \set{v}$ or $\forall u\in f^{-1} (v).\;u<v$. 

\begin{fakt}
  If $<$ is a strict total order, then so is $\ll$. Furthermore, if $<$ is well-founded, then so is $\ll$ \cite{derschowitz:nachum:multisets}.
\end{fakt}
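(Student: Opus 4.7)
The statement is the classical multiset extension theorem of Dershowitz and Manna~\cite{derschowitz:nachum:multisets}, so my plan is either to cite it outright or, if a self-contained proof is wanted, to reproduce the standard argument, which splits neatly along the two claims.

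For the strict total order part, I would check irreflexivity, transitivity, and connectedness directly from the surjection characterization. Transitivity is the easiest: given surjections $f\colon M_1 \to M_2$ and $g\colon M_2 \to M_3$ witnessing $M_1 \ll M_2$ and $M_2 \ll M_3$, their composite $g \circ f$ witnesses $M_1 \ll M_3$, the condition on preimages following by a short case analysis and transitivity of $<$. Irreflexivity follows by a maximality argument: a witnessing surjection from a finite multiset to itself must be a bijection, and inspecting its action on a $<$-maximal element forces it to be the identity. Connectedness is obtained by induction on multiset size, peeling off largest elements first and using totality of $<$ to break ties.

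For well-foundedness, I would adopt the K\"onig-style argument of Dershowitz and Manna. Supposing, for contradiction, an infinite descending chain $M_0 \gg M_1 \gg M_2 \gg \cdots$ witnessed by surjections $g_i\colon M_{i+1} \to M_i$, I would build the tree of element genealogies: the nodes at level $n$ are the elements of $M_n$, with edges given by $g_n$. This tree is finitely branching since each $M_n$ is finite, and it is infinite because the chain is, so by K\"onig's lemma it admits an infinite branch. By the condition defining $\ll$, the elements along this branch are weakly $<$-decreasing, and, with more care, cofinally strictly decreasing, contradicting well-foundedness of $<$.

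The main obstacle is precisely this last bookkeeping step: naively, an infinite branch of the genealogy tree might stay constant on a single element while all of the strict decreases happen on sibling branches, so one cannot simply read off a strictly descending $<$-chain. Extracting a cofinally strictly decreasing branch requires a more delicate application of K\"onig's lemma --- or, equivalently, a nested induction on the $<$-rank of elements --- which is the genuine content of the Dershowitz--Manna theorem. I would therefore either invoke their result directly or reproduce their careful construction, since all the remaining verifications reduce to routine case analyses on the surjection condition.
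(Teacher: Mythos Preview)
The paper gives no proof of this fact at all; it is stated as a \emph{Fakt} with only the citation to Dershowitz and Manna, which is exactly your first option. Your detailed sketch of the standard argument goes well beyond what the paper provides, so simply citing the result is already fully aligned with the paper's treatment.
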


\begin{example}
  If we let $V=\Nat$, with the usual ordering, then we have, for
  example, $\set{1,2,3,4,4,5}\ll\set{2,5,5}$ and
  $\set{1,1,2,2,2,2}\ll\set{1,2,3}$, where in the first case we can
  map $2$ to $2$, and $1,3,4,4$ all to one $5$, and $5$ to the other
  $5$. We could also map $1$ to $2$, and $2,3,4,4$ to one $5$, and $5$
  to the other $5$. These choices will be associated with reductions
  in our later cut-elimination arguments.
\end{example}

This gives us a well-order $\ll$ on the cut-values of a derivation,
and our cut-reduction proceeds by an induction on this well-order. For simplicity, we always
consider a topmost cut. 
There are two main lemmas, one for reducing
anchored cuts (Lemma~\ref{lem:one-step-anchored}), and one for
reducing cuts that are not anchored
(Lemma~\ref{lem:make-anchored}). For both of these lemmas we need, as
is often the case, height preserving admissibility and invertibility of certain
inference rules.

\begin{definition}
  The
  \emph{height} of a derivation $\DD$, denoted by $\height{\DD}$, is
  defined to be the length of a maximal branch in the derivation
  tree.
  We say that a rule $\rr$ with one premise is \emph{height preserving
    admissible} in a system $\sysS$, if for each derivation $\DD$ in
  $\sysS\setminus\set{\rr}$ of $\rr$'s premise there is a derivation $\DD'$ of $\rr$'s
  conclusion in $\sysS\setminus\set{\rr}$.
  Similarly, a rule $\rr$ is \emph{height preserving invertible} in a
  system $\sysS$, if for every derivation of the conclusion of $\rr$
  there are derivations for each of $\rr$'s premises with at most the
  same height. 
\end{definition}

\begin{proposition}\label{prop:admiss}
  Let $\tuple{\Xax,\Yax}$ be a safe pair of axioms. Then all rules in $\Xdot$, as well as 
  the rules $\weakr$ and $\necr$ are
  height preserving admissible for 
  $\NCKp\cup\sXlefrig\cup\sYdot$. 
\end{proposition}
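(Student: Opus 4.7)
The plan is to prove the four admissibility statements by induction on the height of the derivation, with a case analysis on the last rule applied. The four statements can be handled in the order $\weakr$, $\necr$, $\tdot$, $\vdot$.

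For $\weakr$: given a derivation of $\Gamma\cons{\emptyset}$ of height $h$, we construct a derivation of $\Gamma\cons{\lef\Delta}$ of height at most $h$. Most last-rule cases are routine: apply the IH to each premise and reapply the rule in the enlarged context. Two families need verification. First, the rules $\lef\bot$, $\lef\cor$, and $\lef{\hat\wdia}$ constrain the output formula to sit in the same subtree as the principal formula, but since $\lef\Delta$ is purely input material, adding it elsewhere cannot displace the output and the restriction is preserved. Second, the rules $\lef\implies$ and $\ast\cutr$ use output pruning, and we verify that $\ddown{(\Gamma\cons{\lef\Delta})}$ equals either $(\ddown\Gamma)\cons{\lef\Delta}$ or $\ddown\Gamma$, depending on whether the weakening site survives the pruning; in both cases the IH still yields both premises in the correct shape.

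For $\necr$: given a derivation of $\Gamma$, we wrap every sequent occurring in it with an outer bracket $\wbr{\cdot}$. Each rule schema in $\NCKp\cup\sXlefrig\cup\sYdot$ operates modulo an arbitrary outer context, so the fresh bracket is simply absorbed into that context for every rule step. The only subtle interaction is with output pruning: by Observation~\ref{obs:context}, if $\Gamma\conhole = \Gamma'\cons{\Lambda\conhole,\rig\Pi}$, then wrapping and then pruning yields $\wbr{\Gamma'}\cons{\Lambda\conhole} = \wbr{\ddown\Gamma\conhole}$. Hence $\necr$ commutes with $\ddown{(\cdot)}$, and the $\lef\implies$- and cut-rules continue to apply correctly in the wrapped derivation.

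For $\tdot$ and $\vdot$: we again induct on the height of the derivation, with a case analysis on the last rule. When the last rule acts wholly outside or wholly inside the bracket being collapsed ($\tdot$) or doubled ($\vdot$), the IH applies directly, possibly after invoking the already-established admissibility of $\weakr$ or $\necr$. The interesting cases are when the last rule straddles the affected bracket boundary, and here the super-rule forms of Figure~\ref{fig:s4} are indispensable: each super rule is parameterised by a bracket-depth ($n$ or $k$), and this parameter can be adjusted so that one application of the super rule to the transformed premise yields the transformed conclusion in a single step. The main obstacle is the intricate bookkeeping in this case analysis — particularly for $\vdot$ interacting with $\sbdot$, $\sfdot$, $\sfbdot$, and $\sbfdot$, where the depth parameters must be tracked through nested bracket manipulations. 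This is the design motivation for the super rules: without them the admissibility transformation would have to push into the premises with increased height, breaking height-preservation.
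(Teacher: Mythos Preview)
Your approach is essentially the same as the paper's: induction on height, with the $\tdot$/$\vdot$ cases handled by permuting upward and exploiting the super-rule formulations to preserve height. A few small corrections are worth noting. First, the system $\NCKp\cup\sXlefrig\cup\sYdot$ contains no $\ast\cutr$, so that case in your $\weakr$ argument is superfluous (though harmless). Second, the paper's permutation cases for $\tdot$ and $\vdot$ never need to invoke admissibility of $\weakr$ or $\necr$; your speculation there is unnecessary. Third, your account of the super-rule interaction is slightly imprecise: in the key cases $\sbdot$--$\tdot$ and $\sbdot$--$\vdot$, the depth parameter $n$ of $\sbdot$ is not freely adjustable (it is fixed by the depth of $\Delta\conhole$), so the paper instead applies the induction hypothesis \emph{twice} on the shorter derivation before reapplying a single $\sbdot$ with the new depth. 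This double application of the IH is the real reason height is preserved, and it is also why permutation over $\conr$ works. Your high-level explanation captures the spirit but elides this detail.
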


\begin{proof}
For $\weakr$ and $\necr$ this is a straightforward induction on the height of the derivation.
For $\tdot$ and $\vdot$ we permute steps upwards through the proof to show admissibility, preserving height of the other rules in each reduction. Notice that, for either step, any nontrivial overlap with a rule above must have a bracket in its conclusion. 
For $\tdot$ we have the following nontrivial cases:\looseness=-1
\begin{enumerate}
\item\label{bdot-tdot} $\sbdot-\tdot$. The only overlap possible is in the $\Delta$ part of a $\sbdot$-step, so let $\Delta\conhole = \Delta_1 \cons{\wbr{ \Delta_2 \conhole} }$ with $\depth{\Delta_2} = m$ and $\depth{\Delta_1} = n$, and the permutation is as follows:
%\[
%\svlderivation{
%\vlin{\tdot}{}{\Gamma\{\Sigma,\Delta \}}{
%\vlin{\bdot}{}{\Gamma\{ \Sigma , \wbr{ \Delta } \} }{\vlhy{ \wbr{ \wbr{\Sigma} , \Delta }  }}
%}
%}
%\quad\to\quad
%\svlderivation{
%\vlin{\tdot}{}{\Gamma\{\Sigma,\Delta \}}{
%\vlin{\tdot}{}{\Gamma\{ \wbr{\Sigma,\Delta}  \}}{\vlhy{ \wbr{ \wbr{\Sigma} , \Delta }  }}
%}
%}
%\]
\[%\small
\svlderivation{
\vlin{\tdot}{}{\Gcons{\Sigma , \Delta_1 \cons{ \Delta_2 \cons{\emptyset} }  }}{
\vlin{\sbdot}{}{\Gcons{\Sigma , \Delta_1 \cons{[ \Delta_2 \cons{\emptyset}] }  }  }{\vlhy{ \Gcons{ \Delta_1 \cons{ [ \Delta_2\cons{ \wbrn{\Sigma}{m+n+1} }  ] } } }}
}
}
\quad\to\quad
\svlderivation{
\vlin{\sbdot}{}{ \Gcons{\Sigma , \Delta_1 \cons{ \Delta_2 \cons{\emptyset} }  } }{
\vlin{\tdot}{}{ \Gcons{ \Delta_1 \cons{ \Delta_2 \cons{\wbrn{\Sigma}{m+n} } }  } }{
\vlin{\tdot}{}{ \Gcons{ \Delta_1 \cons{ \Delta_2 \cons{\wbrn{\Sigma}{m+n+1} } }   }}{\vlhy{ \Gcons{ \Delta_1 \cons{ [ \Delta_2\cons{ \wbrn{\Sigma}{m+n+1} }  ] } }  }}
}
}
}
\]
and we can apply the induction hypothesis twice.
\item\label{vrig-tdot} $\svdrig-\tdot$.
\[%\small
\svlderivation{
\vlin{\tdot}{}{\Gcons{ \rig{\wdia A} , \Dcons{\emptyset} }}{
\vlin{\svdrig}{}{ \Gcons{ \rig{\wdia A} , \wbr{\Dcons{\emptyset}} } }{\vlhy{ \Gcons{ \wbr{ \Dcons{ \rig A } } } }}
}
}
\quad\to\quad
\svlderivation{
\vlin{\mathsf r}{}{ \Gcons{ \rig{\wdia A} , \Dcons{\emptyset} } }{
\vlin{\tdot}{}{ \Gcons{ \Dcons{ \rig A } } }{\vlhy{  \Gcons{ \wbr{ \Dcons{ \rig A } } }  }}
}
}
\]
where $\mathsf r$ is $\trig$ if the hole of $\Delta\conhole$ has depth $0$ and $\svdrig$ otherwise. 
%\[
%\svlderivation{
%\vlin{\tdot}{}{\Gamma\{ \rig{\wdia A} , \Delta  \}  }{
%\vlin{\vrig}{}{\Gamma\{ \rig{\wdia A} , \wbr{\Delta}  \}  }{\vlhy{ \Gamma\{ \wbr{ \rig{\wdia A } , \Delta } \}  }}
%}
%}
%\quad\to\quad
%\vlinf{\tdot}{}{ \Gamma\{ \rig{\wdia A} , \Delta  \}   }{ \Gamma\{ \wbr{ \rig{\wdia A } , \Delta } \}  }
%\]
\item\label{vlef-tdot} $\svblef-\tdot$. Similar to case \ref{vrig-tdot}.
\item\label{rigdia-tdot} $\rig\wdia-\tdot$. 
\[%\small
\svlderivation{
\vlin{\tdot}{}{\Gamma\{ \rig{ \wdia A } , \Delta  \}  }{
\vlin{\rig\wdia}{}{\Gamma\{ \rig{\wdia A} , \wbr{\Delta} \}  }{\vlhy{ \Gamma\{\wbr{ \rig A , \Delta }  \} }}
}
}
\quad\to\quad
\svlderivation{
\vlin{\trig}{}{ \Gamma\{ \rig{ \wdia A } , \Delta  \}   }{
\vlin{\tdot}{}{\Gamma\{\rig A , \Delta  \} }{\vlhy{ \Gamma\{\wbr{ \rig A , \Delta }  \}  }}
}
}
\]
\item\label{lefbox-tdot} $\lef\wbox - \tdot$. Similar to case \ref{rigdia-tdot}. 
\item\label{sfbdot-tdot} $\sfbdot - \tdot$. Similar to case \ref{bdot-tdot}.
\item\label{sbfdot-tdot} $\sbfdot - \tdot$. One overlap case is similar to case \ref{bdot-tdot}, and the other is given below.
\[%\small
\svlderivation{
\vlin{\tdot}{}{ \Gcons{ \Sigma , \Dcons{ \emptyset } } }{
\vlin{\sbfdot}{}{ \Gcons{ \wbr{\Sigma} , \Dcons{ \emptyset }  }}{\vlhy{ \Gcons{ \Dcons{ \wbrn{ \Sigma }k } } }}
}
}
\quad\to\quad
\vlinf{\sfbdot}{}{ \Gcons{ \Sigma , \Dcons{ \emptyset } } }{ \Gcons{ \Dcons{ \wbrn{ \Sigma }k } } }
\]

\item\label{sfdot-tdot} $\sfdot - \tdot$. One overlap case is similar to case \ref{bdot-tdot}, and the other is similar to \ref{sbfdot-tdot} above.

\end{enumerate} 
And for $\vdot$ we have the following nontrivial cases:
\begin{enumerate} 
\setcounter{enumi}{8}
\item\label{sbdot-vdot} $\sbdot - \vdot$. The only overlap possible is in the $\Delta$ part of a $\sbdot$-step, so let $\Delta\conhole = \Delta_1 \cons{\wbr{ \Delta_2 \conhole  } }$ with $\depth{\Delta_2} = m$ and $\depth{\Delta_1} = n$, and the permutation is as follows:
\[%\small
\svlderivation{
\vlin{\vdot}{}{ \Gamma\{\Sigma , \Delta_1 \{ \wbr{\wbr{ \Delta_2 \{\emptyset\} } } \}  \} }{
\vlin{\sbdot}{}{ \Gamma\{\Sigma , \Delta_1 \{ \wbr{ \Delta_2 \{\emptyset\} } \}  \}}{\vlhy{ \Gamma\{\Delta_1 \{ \wbr{ \Delta_2 \{ \wbrn{\Sigma }{m+n+1} \} } \} \} }}
}
}
\quad \to \quad
\svlderivation{
\vlin{\sbdot}{}{  \Gamma\{\Sigma , \Delta_1 \{ \wbr{\wbr{ \Delta_2 \{\emptyset\} } } \}  \}  }{
\vlin{\vdot}{}{ \Gamma\{\Delta_1 \{ \wbr{\wbr{ \Delta_2 \{ \wbrn{\Sigma }{m+n+2} \} } } \} \}  }{
\vlin{\vdot}{}{  \Gamma\{\Delta_1 \{ \wbr{\wbr{ \Delta_2 \{ \wbrn{\Sigma }{m+n+1	} \} } } \} \}  }{\vlhy{ \Gamma\{\Delta_1 \{ \wbr{ \Delta_2 \{ \wbrn{\Sigma }{m+n+1} \} } \} \} }}
}
}
}
\]
and we can apply the induction hypothesis twice.
\item\label{vrig-vdot} $\svdrig - \vdot$. 
\[%\small
\svlderivation{
\vlin{\vdot}{}{\Gcons{ \rig{\wdia A} ,\wbr{\wbr{  \Dcons{\emptyset} }} }}{
\vlin{\svdrig}{}{ \Gcons{ \rig{\wdia A} , \wbr{\Dcons{\emptyset}} } }{\vlhy{ \Gcons{ \wbr{ \Dcons{ \rig A } } } }}
}
}
\quad\to\quad
\svlderivation{
\vlin{\svdrig}{}{\Gcons{ \rig{\wdia A} ,\wbr{\wbr{  \Dcons{\emptyset} }} }}{ 
  \vlin{\vdot}{}{\Gcons{ \wbr{  \wbr{ \Dcons{ \rig A } }} }}{\vlhy{\Gcons{ \wbr{ \Dcons{ \rig A } } }}}}}
\]
\item\label{vlef-vdot} $\svblef - \vdot$. Similar to case \ref{vrig-vdot}.
\item\label{rigdia-vdot} $\rig\wdia-\vdot$. 
\[%\small
\svlderivation{
\vlin{\vdot}{}{\Gcons{ \rig{\wdia A} , \wbr{\wbr{ \Delta }} }}{
\vlin{\rig\wdia}{}{\Gcons{ \rig{\wdia A} , \wbr{ \Delta } }}{\vlhy{ \Gcons{ \wbr{ \rig A , \Delta } } }}
}
}
\quad\to\quad
\svlderivation{
\vlin{\svdrig}{}{ \Gcons{ \rig{\wdia A} , \wbr{\wbr{ \Delta }} } }{
\vlin{\vdot}{}{  \Gcons{ \wbr{\wbr{ \rig A , \Delta } }}  }{\vlhy{ \Gcons{ \wbr{ \rig A , \Delta } } }}
}
}
\]
\item\label{lefbox-vdot} $\lef\wbox - \vdot$. Similar to case \ref{rigdia-vdot}.

%% \[\small
%% \svlderivation{
%% \vlin{\tdot}{}{\Gamma\{\wbr{\Sigma} \}}{
%% \vlin{\vdot}{}{\Gamma\{\wbr{\wbr{\Sigma}} \} }{\vlhy{\Gamma\{\wbr{\Sigma} \} }}
%% }
%% }
%% \quad \to \quad
%% \svlderivation{
%% \Gamma\{\wbr{\Sigma} \}
%% }
%% \]

\item\label{sfbdot-vdot} $\sfbdot - \vdot$. Similar to case \ref{sbdot-vdot}.
\item\label{sbfdot-vdot} $\sbfdot - \vdot$. One overlap case is similar to \ref{sbdot-vdot} and the other is given below.
\[%\small
\svlderivation{
\vlin{\vdot}{}{ \Gcons{ \wbr{\wbr{\Sigma}}, \Dcons{ \emptyset } } }{
\vlin{\sbfdot}{}{ \Gcons{ \wbr{\Sigma} , \Dcons{ \emptyset }  }}{\vlhy{ \Gcons{ \Dcons{ \wbrn{ \Sigma }k } } }}
}
}
\quad\to\quad
\svlderivation{
\vlin{\sbfdot}{}{ \Gcons{  \wbr{\wbr{\Sigma}} , \Dcons{ \emptyset } } }{
\vlin{\vdot}{}{ \Gcons{ \Dcons{ \wbrn{ \wbr{\Sigma} }k } }}{\vlhy{  \Gcons{ \Dcons{ \wbrn{ \Sigma }k } } }}
}
}
\]

\item\label{sfdot-vdot} $\sfdot - \vdot$. One overlap case is similar to \ref{sbdot-vdot}, and the other is similar to \ref{sbfdot-vdot} above.

\end{enumerate}
Note also that permutations over contraction preserve height, since we can apply the induction hypothesis twice.
%%\qed
\end{proof}

Note that the variants $\sXlefrig$ of $\Xlefrig$ and
$\sYdot$ of $\Ydot$ are needed to make
Proposition~\ref{prop:admiss} work. Without the ``super-rules'' we
would not be able to preserve the height, and consequently would not
be able to proceed by the induction hypothesis when
eliminating $\tdot$ and~$\vdot$ in the cases \ref{bdot-tdot} and~\ref{sbdot-vdot} above.

\begin{proposition}\label{prop:invert}
  The rules $\lef\cand$, $\lef\cor$, $\lef\wdia$, $\rig\cand$,
  $\rig\implies$, $\rig\wbox$, and $\lef\implies$ on the right premise, are 
  height preserving invertible for
  $\NCKp\cup\sXlefrig\cup\sYdot$.
%% \cup\Cutr$.\ryuta{Same. This Cut 
%%       should not be here.} 
\end{proposition}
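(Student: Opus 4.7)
The plan is to prove each of the seven invertibility claims by induction on the height $h$ of the given derivation, case-splitting on its last rule. In the base case the derivation is an $\idr$- or $\lef\bot$-axiom; the distinguished formula of such an axiom is atomic or $\lef\bot$, so it cannot coincide with the compound principal formula of any rule under consideration, and replacing an unrelated compound occurrence by its inversion leaves the axiomatic shape intact (using that the clause $\wbr{\Phi,\Psi}$ in~\eqref{eq:ns-cml} keeps the result an RHS sequent even after, e.g., a $\rig\wbox$-inversion).

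In the inductive step, if the last rule is the one being inverted and its principal formula matches the target occurrence, take the corresponding premise(s); this is where height-preservation is tight. Otherwise, apply the induction hypothesis to each premise of the last rule and reapply that rule. Three phenomena deserve extra care: (i) the contraction rule $\conr$ may duplicate the target, in which case the induction hypothesis is invoked twice (once per copy of the principal formula in the premise) before a single further $\conr$; this is still height-preserving because each IH application is, and the closing $\conr$ recovers the original height. (ii) The structural modal rules $\sbdot$, $\sfdot$, $\svrig$, $\svblef$, and so on may relocate the target across bracket boundaries, but since they do not inspect formula structure, IH applies uniformly to their premises. (iii) For $\lef\implies$ a case distinction is needed on whether the target formula lies in the output-pruned subtree of the left premise: if so, that premise is untouched by the inversion; if not, IH inverts within the pruned context, relying on the explicit shape of input contexts from Observation~\ref{obs:context}.

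The principal technical obstacle is checking that, when inverting rules that reshape the unique output position---namely $\rig\wbox$, $\rig\implies$, and $\lef{\hat\wdia}$---the resulting sequent remains a full sequent of the correct shape so the last rule can be reapplied. For instance, if the $\rig{\wbox C}$ targeted by a $\rig\wbox$-inversion sits in the $\rig\Pi$ of a $\lef\cor$-, $\lef\bot$-, or $\lef{\hat\wdia}$-instance, then $\rig\Pi$ becomes $\wbr{\rig C}$, which must still be a legal RHS sequent; this is guaranteed by the clause $\wbr{\Phi,\Psi}$ in~\eqref{eq:ns-cml}. Dually, $\rig\implies$-inversion replaces an output $\rig{A\implies B}$ by $\lef A,\rig B$, which preserves the uniqueness of the output formula at that node. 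These checks are mechanical but numerous, and the careful tracking of the unique output formula's position through the structural rules is where the bulk of the case analysis resides.
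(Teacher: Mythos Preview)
Your proposal is correct and follows exactly the approach the paper takes: the paper's proof reads in full ``Straightforward induction on the height of the derivation,'' and your plan is precisely such an induction on height, elaborated with the standard case analysis (axiom base cases, principal-formula match, commutation with other rules including $\conr$ and the structural super-rules). Your discussion of the well-formedness checks for the unique output position and the $\conr$-duplication case are the expected details of this routine argument.
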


\begin{proof}
  Straightforward induction on the
height of the derivation. 
\end{proof}

Before we can state our main lemmas, we need to define a restricted
version of Buss' \emph{logical flow-graphs}~\cite{buss:91}.

\newcommand{\flg}[1]{G(#1)}

\begin{definition}
  We define the \emph{(formula) flow-graph} of a derivation $\DD$,
  denoted by $\flg{\DD}$ to be the directed graph whose vertices are
  all input formula occurrences in $\DD$, and whose edges are just between
  two formula occurrences which are the same unaltered occurrence in the
  premise and conclusion of an instance of an inference rule. This
  concerns all formula occurrences in $\Gamma\conhole$, $\Delta$,
  $\Pi$, and $\Sigma$ in the rules in Figures \ref{fig:NCK},
  \ref{fig:dt4}, \ref{fig:CXstr},~\ref{fig:s4} and~$\ast\cutr$, as
  well as the occurrences of $\lef{\wbox A}$ in the $\vlef$ and
  $\svlef$ rules. The edges are always directed from premise to
  conclusion.  The \emph{length} of a path in $\flg{\DD}$ is its
  number of edges. A path $p$ in $\flg{\DD}$ is
  \emph{maximal} if for every path $p'$ in $\flg{\DD}$ with $p\subseteq p'$ we have
  $p=p'$.
\end{definition}
Let us emphasize that there are no edges between a formula occurrence and any of its subformulae that may occur in
$\flg{\DD}$. For example, the principal $\lef{A\cor B}$ in
the conclusion of an $\lef\cor$-rule is connected to neither the $\lef A$
nor the $\lef B$ in the premises. But every formula occurrence in
$\Gcons{\conhole,\rig\Pi}$ in the conclusion is connected via an edge
to the same occurrence in each of the two premises.  Thus, the
flow-graph is essentially a set of trees, where branching occurs in
the branching rules $\lef\cor$, $\lef\implies$, $\rig\cand$, $\ast\cutr$, and in a
contraction because every formula occurrence in $\lef\Delta$ in the
conclusion is connected to each of its copies in the premise.

Recall from Definition~\ref{dfn:black-destructing} the notion of a black-destructing rule, from Definition~\ref{dfn:cut-anchored+value} the notion of an anchored cut, and our convention that an output cut formula is written on the left-hand side of a $\ast\cutr$ step and an input cut formula on the right.

\begin{definition}  
    A \emph{cut-path} in $\flg{\DD}$ is a maximal path that ends at the
  cut formula $\lef A$ in the right-hand side premise of a
  $\ast\cutr$-instance.  
  A cut path is \emph{relevant} if it starts at the
  principal formula of a black destructing rule. Otherwise it is
  called \emph{irrelevant}.  A cut path is \emph{left-free} if it
  never passes through a left-hand side premise of an instance of
  $\ast\cutr$. A derivation $\DD$ is \emph{left-free} if all relevant cut
  paths in $\flg{\DD}$ are left-free.  An \emph{origin} of
  $\flg{\DD}$ is the topmost vertex of a relevant cut path in
  $\flg{\DD}$. An origin is \emph{anchored} if its cut path has
  length~0. A derivation is \emph{anchored} if all its cuts are anchored. 
  An instance of $\ast\cutr$ in $\DD$ is called
  \emph{relevant} if it has at least one relevant cut path. Otherwise
  it is called \emph{irrelevant}.  The \emph{relevant cut-value} of a
  derivation $\DD$, denoted by $\rcv{\DD}$, is the multiset of the
  values of its relevant cuts. 
\end{definition}

 To be clear, irrelevant cut-paths are exactly those that begin in the context of an axiom, i.e.~ in the $\Gamma\conhole$ part of a $\lef\bot$ or $\idr$ step.

Notice that we are using the term `anchored' to describe both cuts, as
in Definition~\ref{dfn:cut-anchored+value}, and origins as in the
definition above (as well as derivations). In particular we point out
that, if a cut is anchored, then it can have only one origin
which is also anchored. Conversely, if all origins are anchored (which
are only defined for relevant cut-paths), there may be some cuts that
are not anchored in the derivation, namely those with only irrelevant
cut-paths.  An anchored derivation, thus, is one all of whose cuts and
origins are anchored, which is not the same as simply having all
origins anchored. This subtlety is important in the proof of
Lemma~\ref{lem:make-anchored} below. But first, let us make an example.

\begin{example}  
    Consider the derivation: \proofadjust
\[
\svlderivation{
  \vliin{\cutr}{}{\lef\bot,\rig{b\cor c}}{
    \vlhtr{}{\lef\bot,\rig A}}{
    \vliin{\cutr}{}{\lef\bot, \lef A , \rig{b\cor c} }{
      \vliin{\cutr}{}{\lef\bot , \lef A, \rig{d\cand b  } }{
        \vlin{\lef\bot}{}{ \lef\bot,\lef A , \rig E}{
          \vlhy{}}}{ 
        \vlin{\lef\bot}{}{ \lef\bot,\lef A , \lef E , \rig{ d\cand b } }{
          \vlhy{}}}}{
      \vlin{\rig\cor}{}{ \lef\bot, \lef A , \lef{ d\cand b },\rig{b\cor c} }{
        \vlin{\lef\cand }{}{ \lef\bot, \lef A , \lef{ d\cand b },\rig{b} }{ 
        \vlin{\idr}{}{\lef\bot, \lef A , \lef d , \lef b , \rig{b}}{
          \vlhy{}}}}}}}
\]
Here the cut-paths for $\lef{A}$ and $\lef{E}$ are irrelevant, 
while the cut-path for $\lef{d \cand b}$ is relevant.  
There are three cut-paths for $\lef{A}$ and, except for the rightmost one, 
they do not satisfy left-freeness, since 
they pass through the left premise of a cut instance. 
The only cut-path for $\lef{d\cand b}$ 
has two vertices and length 1. Therefore, this cut is not anchored.  
But if we permute that cut over the $\rig\cor$-rule instance, we obtain the derivation
\[
\svlderivation{
  \vliin{\cutr}{}{\lef\bot,\rig{b\cor c}}{
    \vlhtr{}{\lef\bot,\rig A}}{
    \vlin{\rig\cor}{}{ \lef\bot, \lef A ,\rig{b\cor c} }{
      \vliin{\cutr}{}{\lef\bot, \lef A , \rig{b} }{
        \vliin{\cutr}{}{ \lef\bot, \lef A , \rig{ d\cand b } }{
          \vlin{\lef\bot}{}{ \lef\bot,\lef A , \rig E}{
            \vlhy{}}}{ 
          \vlin{\lef\bot}{}{ \lef\bot,\lef A , \lef E , \rig{ d\cand b } }{
            \vlhy{}}}}{
        \vlin{\lef\cand }{}{ \lef\bot, \lef A , \lef{ d\cand b },\rig{b} }{ 
          \vlin{\idr}{}{\lef\bot, \lef A , \lef d , \lef b , \rig{b}}{
            \vlhy{}}}}}}}
\]
in which the cut-path for $\lef{d\cand b}$ has length 0, and so this cut is anchored. 
\end{example}

\begin{lemma}\label{lem:irrelevant}
  Let $\tuple{\Xax,\Yax}$ be a safe pair of axioms.  Given a derivation
  $\DD$ in $\NCKp+\sXlefrig+\sYdot+\Cutr$, there is a derivation
  $\DD'$ in $\NCKp+\sXlefrig+\sYdot+\Cutr$ of the same conclusion, such that $\DD'$ has
  no irrelevant cuts, and such that $\rcv{\DD'}\le\rcv{\DD}$.
\end{lemma}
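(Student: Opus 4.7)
The plan is to proceed by induction on the number of irrelevant cuts in $\DD$. The base case (zero irrelevant cuts) is trivial. For the inductive step, fix a topmost irrelevant cut $\mathsf{c}$ in $\DD$, with right subderivation $\DD_R$ ending in $\Gcons{\lef A}$, where $\lef A$ denotes the distinguished input cut-formula occurrence; by the topmost choice, no irrelevant cut occurs inside $\DD_R$ or its companion left-premise subderivation $\DD_L$.

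The heart of the argument is the following sub-lemma. Given any derivation $\mathcal{E}$ of $\Gcons{\lef A}$ in which every flow-graph origin of the distinguished $\lef A$-occurrence lies in the context of an $\idr$- or $\lef\bot$-axiom (equivalently, no cut-path from that occurrence is relevant), there is a derivation $\mathcal{E}'$ of $\Gcons{\emptyset}$ obtained from $\mathcal{E}$ by deleting exactly the occurrences in the upward flow-graph closure of $\lef A$. The structure of $\mathcal{E}'$ mirrors that of $\mathcal{E}$ rule-by-rule, differing only in that certain contexts are slightly smaller; consequently the cut-values in $\mathcal{E}'$ coincide with those in $\mathcal{E}$, so $\rcv{\mathcal{E}'} = \rcv{\mathcal{E}}$ and $\mathcal{E}'$ has no more irrelevant cuts than $\mathcal{E}$.

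The sub-lemma is proved by induction on the height of $\mathcal{E}$. The axiom cases are immediate: deleting a side formula from an instance of $\idr$ or $\lef\bot$ yields a valid instance of the same rule. For any other last rule $\mathsf{r}$, each designated occurrence in the conclusion must be a side formula (if one were the principal formula of a black-destructing rule, some cut-path would be relevant, contradicting the hypothesis), and the flow graph links it to corresponding side-formula occurrences in one or more premises of $\mathsf{r}$. Applying the induction hypothesis to each such premise and then reapplying $\mathsf{r}$ to the modified premises yields the required $\mathcal{E}'$.

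Given the sub-lemma, we apply it to $\DD_R$ and replace the entire subderivation rooted at $\mathsf{c}$ (together with the now-redundant $\DD_L$) by the resulting $\DD_R'$, which already concludes with $\Gcons{\emptyset}$, the conclusion of $\mathsf{c}$. The new derivation of $\DD$'s endsequent has strictly one fewer irrelevant cut; moreover, its relevant cut-value does not increase, since discarding $\DD_L$ can only drop entries from the multiset while $\rcv{\DD_R'} = \rcv{\DD_R}$. The outer induction then concludes. The main obstacle is tracking the upward flow-graph closure through $\conr$, where it branches into both copies of the contracted LHS sequent, and through $\svlef$, where the distinguished occurrence may be the principal $\lef{\wbox A}$ that is preserved across the rule via a flow-graph edge rather than being destructed; in both cases the irrelevance hypothesis guarantees that every branch of the closure terminates in an axiom context, so the simultaneous deletions required remain consistent throughout.
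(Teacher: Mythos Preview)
Your proof is correct and follows essentially the same strategy as the paper: induct on the number of irrelevant cuts, pick a topmost one, and replace the whole cut by the right-premise subderivation with the traced occurrences of $\lef A$ deleted. The paper's proof is terser---it simply asserts that ``$\DD'_2$ is obtained from $\DD_2$ by removing the $\lef A$ occurrence everywhere; this results in a correct derivation since, by irrelevance, $\lef A$ must occur in the context of an axiom''---whereas you spell out the deletion as an explicit sub-lemma with an inner induction on height and flag the branching at $\conr$ and the non-destructing edge at $\svlef$; but the underlying argument is the same.
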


\begin{proof}
  We proceed by induction on the number of irrelevant cuts in~$\DD$.
  Consider the topmost one. We can replace
  $$
  \svlderivation{
    \vliin{\cutr}{}{\Gcons{\emptyset}}{
      \vlhtr{\DD_1}{\ddGamma\cons{\rig A}}}{
      \vlhtr{\DD_2}{\Gcons{\lef A}}}}
  \qquad\mbox{by}\qquad
  \svlderivation{
    \vlhtr{\DD'_2}{\Gcons{\emptyset}}}
  $$ where $\DD'_2$ is obtained from $\DD_2$ by removing the $\lef A$
  occurrence everywhere; this results in a correct derivation since, by irrelevance, $\lef A$ must occur in the context of an axiom. For $\wdiacutr$ and $\wboxcutr$ we proceed
  similarly. %%\qed
\end{proof}

In the following, we use the notation $n\mathord*\rr$ where $n$ is a natural
number and $\rr$ a name of an inference rule. Then $n\mathord*\rr$ simply
stands for $n$ consecutive applications of~$\rr$.

\begin{lemma}\label{lem:make-anchored}
  Let $\tuple{\Xax,\Yax}$ be a safe pair of axioms, and let $\DD$ be a
  left-free derivation in $\NCKp+\sXlefrig+\sYdot+\Cutr$. Then
  there is an anchored derivation $\DD'$ in
  $\NCKp+\sXlefrig+\sYdot+\Cutr$ of the same conclusion, such that for each $\ast\cutr$ in $\DD'$,
  there is a $\ast\cutr$ in $\DD$ of the same rank.
\end{lemma}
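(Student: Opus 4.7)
The plan is to proceed by induction on the measure $\mu(\DD) \coloneqq \sum_\pi |\pi|$, where $\pi$ ranges over the relevant cut-paths of $\DD$. By first invoking Lemma~\ref{lem:irrelevant} (which preserves left-freeness since it only deletes occurrences), we may assume $\DD$ has no irrelevant cuts. In the base case $\mu(\DD)=0$, every relevant cut-path has length zero, so each origin coincides with the cut-formula of its own cut; the rule immediately above the right premise of each cut is therefore black-destructing with the cut-formula as principal, and every cut is anchored.

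For the inductive step, pick a relevant cut $c$ with a cut-path $\pi$ of length at least $1$, and let $\rr$ be the rule corresponding to the bottommost edge of $\pi$. Then $\rr$ sits directly above the right premise of $c$ and the cut-formula $\lef A$ appears unaltered in the conclusion of $\rr$ (so in particular $\rr$ is not a black-destructing rule with $\lef A$ as principal, for otherwise $c$ would already be anchored and $\pi$ would have length $0$). The plan is to produce a derivation $\DD_1$ by permuting $c$ upward past $\rr$ so that $\mu(\DD_1)<\mu(\DD)$, the conclusion is preserved, and every cut in $\DD_1$ has a rank-matching counterpart in $\DD$. If the permutation introduces irrelevant cuts (which may happen in the branching cases below), we reapply Lemma~\ref{lem:irrelevant} before invoking the inductive hypothesis on $\DD_1$.

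The permutation goes by case analysis on $\rr$. For unary rules where $\lef A$ is not principal---encompassing the right-introduction rules $\rig\cor,\rig\implies,\rig\wbox,\rig\wdia$, non-principal instances of the black-destructing rules $\lef\cand,\lef\wdia,\lef\wbox,\tlef,\svblef$, and the structural rules in $\sXlefrig\cup\sYdot$---we simply swap the order of $c$ and $\rr$: the occurrence of $\lef A$ in $\rr$'s context moves unchanged into the conclusion of the permuted cut, and the bottommost edge of $\pi$ disappears, strictly decreasing $\mu$. When $\rr$ is branching---namely $\rig\cand$, $\lef\cor$, a non-principal $\lef\implies$, or another $\ast\cutr$---or when $\rr$ is the contraction $\conr$ duplicating an occurrence of $\lef A$, we duplicate the subderivation above the left premise of $c$ and insert a fresh copy of $c$ above each premise of $\rr$, producing several cuts of the same rank as $c$. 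This is precisely where left-freeness of $\DD$ is essential: since no relevant cut-path visits the left premise of any cut, the duplicated subderivation contains no relevant origins whose cut-paths could be disturbed, so the sum of relevant cut-path lengths still drops.

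The main obstacles are twofold. First, when $\rr$ is itself a cut (particularly $\wdiacutr$ or $\wboxcutr$), or is an instance of $\lef\implies$ or $\lef\wbox$ with $\lef A$ lying inside a boxed subtree, the output-pruning of Definition~\ref{def:pruning} alters the shape of the left premise of the permuted $c$; Observation~\ref{obs:context} is invoked to confirm well-formedness of the result. Second, the super-rules $\svlef,\svrig,\sbdot,\sfdot,\sfbdot,\sbfdot$ can shift the cut-formula across several bracket levels in a single step, and to permute $c$ past them uniformly we rely on the single-step character these rules provide together with Lemma~\ref{lem:4-over-b}. In all cases the strict decrease of $\mu$ follows from the observation that each cut-path of a new cut in $\DD_1$ is a strict subpath of some cut-path of $c$ in $\DD$, with at least one bottommost edge removed, and no entirely new relevant cut-paths are introduced.
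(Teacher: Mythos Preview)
Your global measure $\mu(\DD)=\sum_\pi|\pi|$ does not decrease in the branching reductions. When $\rr$ is $\rig\cand$, $\lef\cor$, $\lef\implies$, or a contraction that duplicates the cut-formula, your reduction inserts a fresh copy of the left subderivation $\DD_1$ above each new cut. If $\DD_1$ itself contains cuts with relevant cut-paths, those paths are duplicated and $\mu$ goes up. Your justification---``since no relevant cut-path visits the left premise of any cut, the duplicated subderivation contains no relevant origins''---misreads left-freeness: the definition only forbids a relevant cut-path from \emph{passing through} a left premise, not from living entirely inside the subderivation above one. Since you explicitly allow $\rr$ to be another $\ast\cutr$, you are not assuming $c$ is topmost, so $\DD_1$ can perfectly well contain cuts with their own relevant origins, and the sentence ``so the sum of relevant cut-path lengths still drops'' is a non-sequitur.

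The paper avoids this with a two-level induction rather than a single scalar measure: an outer induction on the number of non-anchored origins, and an inner induction on the length of the cut-path of a specifically chosen \emph{topmost} non-anchored origin. The inner induction pushes the associated cut upward until that particular origin becomes anchored; the choice of ``topmost'' and the explicit ordering of the two new cuts in the contraction case (``we can choose the order of the two new cuts such that the origin we are working on belongs to the topmost cut'') are what keep the duplication of $\DD_1$ from breaking the outer measure. Your single sum cannot make this distinction.

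A secondary gap: your one-line treatment of the structural rules is too optimistic. When $\rr$ is $\sbdot$, $\sfbdot$, $\sfdot$ or $\sbfdot$ and the cut-formula sits inside the displaced $\Sigma$, you cannot simply ``swap the order of $c$ and $\rr$''. The paper's reductions for these cases insert towers of $\vdot$ (or $\necr$ at depth~$0$) on the left branch and then appeal to Proposition~\ref{prop:admiss}; this is exactly where the safety condition ``$\vax\in\Xax$ whenever $\bax\in\Yax$ or $\fax\in\Yax$'' is consumed. Your appeal to Lemma~\ref{lem:4-over-b} is misplaced here: that lemma permutes $\svlef/\svrig$ over $\sYdot$, which the paper uses only to handle $\wboxcutr$ sitting below such a rule, not to permute a plain $\cutr$ past it.
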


\begin{proof}%[Proof of Lemma~\ref{lem:make-anchored}] 
  We proceed by induction on the number of origins in 
   $\flg{\DD}$ that are not anchored. If all origins are 
   anchored, then we remove all irrelevant cuts 
   using Lemma~\ref{lem:irrelevant} and we are done. 
   Otherwise, we pick a topmost origin that is not anchored and proceed
   by an inner induction on the length of 
   its cut-path to show that there is a derivation in which
   the number of non-anchored origins has decreased. Note that, if the length of this cut-path is~$0$, then the cut is already anchored and there is nothing to do. 

Now consider
   the $\ast\cutr$-instance connected to our origin and make 
   a case analysis on the rule instance $\rr$ on 
   the right above~it. 
  \begin{enumerate}
  \item If $\rr$ is one of $\lef\cand$, $\lef\wdia$, $\rig\implies$,
    $\rig\wbox$, we can reduce as follows:
    \begin{equation}%\small
      \cutredcase{\Invr_1}{
        \svlderivation{
          \vliin{\cutr}{}{\Gamma\conempty}{
            \vlhtr{\DD_1}{\dGamcon{\rig A}}}{
            \vlin{\rr}{}{\Gamcon{\lef A}}{
              \vlhtr{\DD_2}{\Gamma_1\cons{\lef A}}}}
      }\qquad}{\qquad
        \svlderivation{
          \vlin{\rr}{}{\Gamma\conempty}{
            \vliin{\cutr}{}{\Gamma_1\conempty}{
              \vlin{\Invr_\rr}{}{\down{\Gamma_1}\cons{\rig A}}{
                \vlhtr{\DD_1}{\down{\Gamma}\cons{\rig A}}}}{
              \vlhtr{\DD_2}{\Gamma_1\cons{\lef A}}}}
      }}
    \end{equation} 
    where the $\Invr_\rr$ is eliminated by Proposition~\ref{prop:invert}.
  \item If $\rr$ is one of $\lef\cor$, $\rig\cand$, we can reduce as follows:
    \begin{equation*}%\small\qquad
      \cutredcase{\rr}{\qquad
        \svlderivation{
          \vliin{\cutr}{}{\Gamma\conempty}{
            \vlhtr{\DD_1}{\dGamcon{\rig A}}}{
            \vliin{\rr}{}{\Gamcon{\lef A}}{
              \vlhtr{\DD_2}{\Gamma_1\cons{\lef A}}}{
              \vlhtr{\DD_3}{\Gamma_2\cons{\lef A}}}}
      }}{
        \svlderivation{
          \vliin{\rr}{}{\Gamma\conempty}{
            \vliin{\cutr}{}{\Gamma_1\conempty}{
              \vlin{\Invr_\rr}{}{\down{\Gamma_1}\cons{\rig A}}{
                \vlhtr{\DD_1}{\down{\Gamma}\cons{\rig A}}}}{
              \vlhtr{\DD_2}{\Gamma_1\cons{\lef A}}}}{
            \vliin{\cutr}{}{\Gamma_2\conempty}{
              \vlin{\Invr_\rr}{}{\down{\Gamma_2}\cons{\rig A}}{
                \vlhtr{\DD_1}{\down{\Gamma}\cons{\rig A}}}}{
              \vlhtr{\DD_3}{\Gamma_2\cons{\lef A}}}}
      }}
    \end{equation*}
    where the $\Invr_\rr$ steps are eliminated by
    Proposition~\ref{prop:invert}. 
  \end{enumerate}
  Note that it can happen in these two cases that the
  $\Invr_\rr$-step is vacuous in the above because
  depending on the position of the output formula in
  $\Gamma\conhole$ it is possible that
  $\ddown{\Gamma_1}\conhole=\ddGamma\conhole$ and
  $\ddown{\Gamma_2}\conhole=\ddGamma\conhole$.
  \begin{enumerate}\setcounter{enumi}{2}
\item If $\rr$ is $\lef\implies$, there are two cases. 
        The first is
    \begin{equation*}%%\small%\footnotesize
      %cutredcaseb
      \cutredcasec
	[8ex]
	{\lef\implies_1}
      {\hfill
        \svlderivation{
          \vliin{\cutr}{}{\Gamcon{\Tcons{\plef{B\implies C}},\lefDcons\emptyset}}{
            \vlhtr{\DD_1}{\dGamcon{\ddown{(\Tcons{\plef{B\implies C}})},\lefDcons{\rig A}}}}{
            \vliin{\lef\implies}{}{\Gamcon{\Tcons{\plef{B\implies C}},\lefDcons{\lef A}}}{
              \vlhtr{\DD_2}{\dGamcon{\ddTcons{\rig B},\lefDcons{\lef A}}}}{
              \vlhtr{\DD_3}{\Gamcon{\Tcons{\lef C},\lefDcons{\lef A}}}}}
      }}{
        \svlderivation{
          \vlin{\conr}{}{\Gamcon{\Tcons{\plef{B\implies C}},\lefDcons\emptyset}}{
            \vliin{\lef\implies}{}{\Gamcon{\lhs{(\Tcons{\plef{B\implies C}})},\Tcons{\plef{B\implies C}},\lefDcons\emptyset}}{
            \vlhtr{\DD_4}{\dGamcon{\lhs{(\Tcons{\plef{B\implies C}})},\ddTcons{\rig B},\lefDcons\emptyset}}
%              \vliin{\cutr}{}{\dGamcon{\lhs{(\Tcons{\plef{B\implies C}})},\ddTcons{\rig B},\lefDcons\emptyset}}{
%                \vlhtr{\DD_1}{\dGamcon{\lhs{(\Tcons{\plef{B\implies C}})},\lefDcons{\rig A}}}}{
%                \vlin{\weakr}{}{\dGamcon{\lhs{(\Tcons{\plef{B\implies C}})},\ddTcons{\rig B},\lefDcons{\lef A}}}{
%                  \vlhtr{\DD_2}{\dGamcon{\ddTcons{\rig B},\lefDcons{\lef A}}}}}
                  }{
                  %\hskip10em
                  \vlhtr{\DD_5}{\Gamcon{\lhs{(\Tcons{\plef{B\implies C}})},\Tcons{\lef C},\lefDcons\emptyset}}
%              \vliin{\cutr}{}{\Gamcon{\lhs{(\Tcons{\plef{B\implies C}})},\Tcons{\lef C},\lefDcons\emptyset}}{
%                \vlin{\weakr}{}{\dGamcon{\lhs{(\Tcons{\plef{B\implies C}})},\lhs{(\Tcons{\plef{C}})},\lefDcons{\rig A}}}{
%                 % \vlin{\Invr}{}{\dGamcon{\lhs{(\Tcons{\plef{C}})},\lefDcons{\rig A}}}{
%                    \vlhtr{\DD_1}{\dGamcon{\lhs{(\Tcons{\plef{B\implies C}})},\lefDcons{\rig A}}}}}{
%                \vlin{\weakr}{}{\Gamcon{\lhs{(\Tcons{\plef{B\implies C}})},\Tcons{\lef C},\lefDcons{\lef A}}}{
%                  \vlhtr{\DD_3}{\Gamcon{\Tcons{\lef C},\lefDcons{\lef A}}}}}
                  }}
      }}%\quad
    \end{equation*} 
    where $\DD_4$ is
    \[
    \qquad
    \svlderivation{
    \vliin{\cutr}{}{\dGamcon{\lhs{(\Tcons{\plef{B\implies C}})},\ddTcons{\rig B},\lefDcons\emptyset}}{
                    \vlhtr{\DD_1}{\dGamcon{\lhs{(\Tcons{\plef{B\implies C}})},\lefDcons{\rig A}}}}{
                    \vlin{\weakr}{}{\dGamcon{\lhs{(\Tcons{\plef{B\implies C}})},\ddTcons{\rig B},\lefDcons{\lef A}}}{
                      \vlhtr{\DD_2}{\dGamcon{\ddTcons{\rig B},\lefDcons{\lef A}}}}}
    }
    \]
    and $\DD_5$ is
    \[
    \qquad
    \svlderivation{
     \vliin{\cutr}{}{\Gamcon{\lhs{(\Tcons{\plef{B\implies C}})},\Tcons{\lef C},\lefDcons\emptyset}}{
                    \vlin{\weakr}{}{\dGamcon{\lhs{(\Tcons{\plef{B\implies C}})},\lhs{(\Tcons{\plef{C}})},\lefDcons{\rig A}}}{
                     % \vlin{\Invr}{}{\dGamcon{\lhs{(\Tcons{\plef{C}})},\lefDcons{\rig A}}}{
                        \vlhtr{\DD_1}{\dGamcon{\lhs{(\Tcons{\plef{B\implies C}})},\lefDcons{\rig A}}}}}{
                    \vlin{\weakr}{}{\Gamcon{\lhs{(\Tcons{\plef{B\implies C}})},\Tcons{\lef C},\lefDcons{\lef A}}}{
                      \vlhtr{\DD_3}{\Gamcon{\Tcons{\lef C},\lefDcons{\lef A}}}}}
    }
    \quadfs
    \]
    From here, $\weakr$ steps are removed by Proposition~\ref{prop:admiss}. Left-freeness is preserved since the derivations initially on the right of the cut, $\DD_2$ and $\DD_3$, remain on the right of all cuts after the transformation. Finally, both of the new cuts have the same rank as the initial cut, satisfying the requirement in the statement of the lemma. (To see that the application of the $\conr$-rule is correct we refer to Observation~\ref{obs:context} and Definition~\ref{def:pruning}.)
    
    Note that this case shows
    that we need an explicit contraction rule. Making contraction
    implicit in the $\lef\implies$-rule (as done
    in~\cite{str:fossacs13}) would not be enough, since we also need
    to duplicate the context $\Theta\conhole$. 
    %% In this
    %% case we used $\Invr+\weakr$ instead of only $\weakr$ on purpose
    %% for maintaining the number of relevance points. \todo{maybe not
    %%  needed anymore. check!!}  
    
    In the case shown above, the output
    formula in the conclusion can be in $\Gamma\conhole$ or
    $\Theta\conhole$.  There is another such case for
    $\lef\implies$ on the right branch, where the output formula in
    the conclusion is in $\Delta\conhole$. That case is simpler, and
    no extra contraction is needed: 
    \begin{equation*}
%    %\small
    %\tiny 
    %\scriptsize %\footnotesize
%      \cutredcasea
      \cutredcasec
      [7.5ex]
      {\lef\implies_2}{
        %\hskip-17em
        \svlderivation{
          \vliin{\hskip5ex\cutr}{}{\lefGcons{\lefTcons{\plef{B\implies C}},\Dcons\emptyset}}{
            \vlhtr{\DD_1}{\lefGcons{\lefTcons{\plef{B\implies C}},\ddDcons{\rig A}}}}{
            \vliin{\lef\implies}{}{\lefGcons{\lefTcons{\plef{B\implies C}},\Dcons{\lef A}}}{
              \vlhtr{\DD_2}{\lefGcons{\lefTcons{\rig B}}}}{
              \vlhtr{\DD_3}{\lefGcons{\lefTcons{\lef C},\Dcons{\lef A}}}}}
      }}{\hskip14ex
        \svlderivation{
          \vliin{\lef\implies}{}{\lefGcons{\lefTcons{\plef{B\implies C}},\Dcons\emptyset}}{
            \vlhtr{\DD_2}{\lefGcons{\lefTcons{\rig B}}}}{
            \vliin{\cutr}{}{\lefGcons{\lefTcons{\plef{C}},\Dcons\emptyset}}{
              \vlin{\Invr_{\lef\implies}}{}{\lefGcons{\lefTcons{\plef{C}},\ddDcons{\rig A}}}{
                \vlhtr{\DD_1}{\lefGcons{\lefTcons{\plef{B\implies C}},\ddDcons{\rig A}}}}}{
              \vlhtr{\DD_3}{\lefGcons{\lefTcons{\lef C},\Dcons{\lef A}}}}}
      }}
    \end{equation*} 
    Here we use invertibility of $\lef\implies$ on the right
    (Proposition~\ref{prop:invert}).
  \item\label{c:r} If $\rr$ is one of $\lef\wbox$, $\rig\wdia$, $\tlef$, $\trig$,
    $\conr$, or one of the $\svax$-rules, working entirely in the
    context of the cut formula $\lef A$, then there are contexts
    $\Gamma' , \Gamma_1'$ such that we can reduce as follows:
    \begin{equation}%\small
      \cutredcase{\rr}{
        \svlderivation{
          \vliin{\cutr}{}{\Gamma\conempty}{
            \vlhtr{\DD_1}{\dGamcon{\rig A}}}{
            \vlin{\rr}{}{\Gamcon{\lef A}}{
              \vlhtr{\DD_2}{\Gamma_1\cons{\lef A}}}}
      }\quad}{\quad
        \svlderivation{
          \vlin{\conr}{}{\Gamma\conempty}{
            \vlin{\rr}{}{\Gamma'\conempty}{
              \vliin{\cutr}{}{\Gamma_1'\conempty}{
                \vlin{\weakr}{}{\down{\Gamma_1'}\cons{\rig A}}{
                  \vlhtr{\DD_1}{\down{\Gamma}\cons{\rig A}}}}{
                \vlin{\weakr}{}{\Gamma_1'\cons{\lef A}}{
                  \vlhtr{\DD_2}{\Gamma_1\cons{\lef A}}}}}}
      }}
    \end{equation}
    where, read top-down, the two $\weakr$-steps weaken as much as is necessary to
    unify the contexts in order to perform a $\cutr$-step. The $\rr$-step then
    acts on the appropriate redex (as determined by the $\rr$-step on the
    left) before contraction is performed to eliminate any formulae
    duplicated as a result of the permutation. The $\weakr$-steps are then removed by
    Proposition~\ref{prop:admiss}. 
  \item If $\rr$ is a $\svlef$ step moving the cut formula (which is of
    shape $\lef{\wbox A}$), then we can inductively apply,
    \begin{equation*}%\small\qquad
      \cutredcasec
      [6.5ex]
      {\wbox\vax}{
        \svlderivation{
          \vliin{\qquad\wboxcutr}{}{\Gcons{\Tcons{\Dcons{\emptyset}}}}{
            \vlhtr{\DD_1}{\ddGcons{\rig{\wbox A},\ddown{(\Tcons{\Dcons{\emptyset}})}}}}{
            \vlin{\vlef}{}{\Gcons{\Tcons{\lef{\wbox A},\Dcons{\emptyset}}}}{
              \vlhtr{\DD_2}{\Gcons{\Tcons{\Dcons{\lef{\wbox A}}}}}}}}
      }{
        \svlderivation{
          \vliin{\hskip18em\wboxcutr}{}{\Gcons{\Tcons{\Dcons{\emptyset}}}}{
            \vlhtr{\DD_1}{\ddGcons{\rig{\wbox A},\ddown{(\Tcons{\Dcons{\emptyset}})}}}}{
            \vlhtr{\DD_2}{\Gcons{\Tcons{\Dcons{\lef{\wbox A}}}}}}}
      }
    \end{equation*}
    by decomposing the instance of $\svlef$ into several $\vlef$ steps. 
  \item If $\rr$ is a $\conr$ step duplicating the cut formula, we can reduce as follows:
    \begin{equation}%\small
      \label{eq:conred-a}
      \cutredcasea{\conr}{
        \svlderivation{
          \vliin{\cutr}{}{\Gcons{\lefDcons{\emptyset}}}{
            \vlhtr{\DD_1}{\ddGcons{\lefDcons{\rig A}}}}{
            \vlin{\conr}{}{\Gcons{\lefDcons{\lef A}}}{
              \vlhtr{\DD_2}{\Gcons{\lefDcons{\lef A},\lefDcons{\lef A}}}}}}
      }{\hskip-8em
        \svlderivation{
          \vlin{\conr}{}{\Gcons{\lefDcons{\emptyset}}}{
            \vliin{\cutr}{}{\Gcons{\lefDcons{\emptyset},\lefDcons{\emptyset}}}{
              \vlin{\weakr}{}{\ddGcons{\lefDcons{\rig A},\lefDcons{\emptyset}}}{
                \vlhtr{\DD_1}{\ddGcons{\lefDcons{\rig A}}}}}{
              \vliin{\cutr}{}{\Gcons{\lefDcons{\lef A},\lefDcons{\emptyset}}}{
                \vlin{\weakr}{}{\ddGcons{\lefDcons{\lef A},\lefDcons{\rig A}}}{
                  \vlhtr{\DD_1}{\ddGcons{\lefDcons{\rig A}}}}}{
                \vlhtr{\DD_2}{\Gcons{\lefDcons{\lef A},\lefDcons{\lef A}}}}}}}
      }
    \end{equation}
    Note that the number of origins is not increased because
    the derivation is left-free. Furthermore, we can choose the order
    of the two new cuts such that the origin we are working
    on belongs to the topmost cut. Thus, we can proceed by the induction hypothesis. 
  \item\label{c:sb} If $\rr$ is a $\sbdot$ or $\sfbdot$ step, such that the
    cut-formula $\lef A$ is inside $\Sigma$. Then there are two
    subcases.
    \begin{enumerate}
    \item If the depth of $\Gamma\conhole$ is $0$, then
      $\colGcons{\Sigma,\Dcons\emptyset}=\Sigma,\Delta'\cons\emptyset$
      for some $\Delta'\conhole$. Thus, without loss of generality, we have
      \begin{equation*}%\small
        \label{eq:sb-cut1}
        \cutredcasec[6.5ex]{\sbdot}{%\qquad
          \svlderivation{
            \vliin{\hskip5em\cutr}{}{\Scons{\emptyset},\Dcons{\emptyset}}{
              \vlhtr{\DD_1}{\ddScons{\rig A},\lhs{(\Dcons{\emptyset})}}}{
              \vlin{\sbdot}{}{\Scons{\lef A},\Dcons{\emptyset}}{
                \vlhtr{\DD_2}{\Dcons{\wbrn{\Scons{\lef A}}n}}}}}
        }{%\hskip9em
          \svlderivation{
            \vlin{\conr}{}{\Scons{\emptyset},\Dcons{\emptyset}}{
              \vlin{\sbdot}{}{\Scons{\emptyset},\lhs{(\Dcons{\emptyset})},\Dcons{\emptyset}}{
                \vliin{\hskip15em\cutr}{}{\Dcons{\wbrn{\Scons{\emptyset},\lhs{(\Dcons{\emptyset})}}n}}{
                  \vlcin{(n+1)}{\weakr}{}{\ddDcons{\wbrn{\ddScons{\rig A},\lhs{(\Dcons{\emptyset})}}n}}{
                    \vlcin{2n}{\necr}{}{\wbrn{\ddScons{\rig A},\lhs{(\Dcons{\emptyset})}}{2n}}{
                      \vlhtr{\DD_1}{\ddScons{\rig A},\lhs{(\Dcons{\emptyset})}}}}}{
                  \vlin{\weakr}{}{\Dcons{\wbrn{\Scons{\lef A},\lhs{(\Dcons{\emptyset})}}n}}{
                    \vlhtr{\DD_2}{\Dcons{\wbrn{\Scons{\lef A}}n}}}}}}}}
      \end{equation*}
      where we use Proposition~\ref{prop:admiss} to remove the $\weakr$- and $\necr$-steps.    
    \item\label{c:sb:deep} If the depth of $\Gamma\conhole$ is $\ge1$, then
      $\colGcons{\Sigma,\Dcons\emptyset}=\Gamma'\cons{\wbr{\Sigma,\Delta'\cons\emptyset}}$
      for some $\Gamma'\conhole$ and $\Delta'\conhole$. Thus, without loss of generality, we have
      \begin{equation}%\footnotesize
        \label{eq:sb-cut}
        \cutredcasec[8ex]{\sbdot}{
          \svlderivation{
            \vliin{\hskip5em\cutr}{}{\Gcons{\wbr{\Scons{\emptyset},\Dcons{\emptyset}}}}{
              \vlhtr{\DD_1}{\ddGcons{\wbr{\ddScons{\rig A},\lhs{(\Dcons{\emptyset})}}}}}{
              \vlin{\sbdot}{}{\Gcons{\wbr{\Scons{\lef A},\Dcons{\emptyset}}}}{
                \vlhtr{\DD_2}{\Gcons{\wbr{\Dcons{\wbrn{\Scons{\lef A}}n}}}}}}}
        }{
          \svlderivation{
            \vlin{\conr}{}{\Gcons{\wbr{\Scons{\emptyset},\Dcons{\emptyset}}}}{
              \vlin{\sbdot}{}{\Gcons{\wbr{\Scons{\emptyset},\lhs{(\Dcons{\emptyset})},\Dcons{\emptyset}}}}{
                \vliin{\hskip8em\cutr}{}{\Gcons{\wbr{\Dcons{\wbrn{\Scons{\emptyset},\lhs{(\Dcons{\emptyset})}}n}}}}{
                  \vlcin{(n+1)}{\weakr}{}{\ddGcons{\wbr{\ddDcons{\wbrn{\ddScons{\rig A},\lhs{(\Dcons{\emptyset})}}n}}}}{
                    \vlcin{2n}{\vdot}{}{\ddGcons{\wbr{\wbrn{\ddScons{\rig A},\lhs{(\Dcons{\emptyset})}}{2n}}}}{
                      \vlhtr{\DD_1}{\ddGcons{\wbr{\ddScons{\rig A},\lhs{(\Dcons{\emptyset})}}}}}}}{
                  \vlin{\weakr}{}{\Gcons{\wbr{\Dcons{\wbrn{\Scons{\lef A},\lhs{(\Dcons{\emptyset})}}n}}}}{
                    \vlhtr{\DD_2}{\Gcons{\wbr{\Dcons{\wbrn{\Scons{\lef A}}n}}}}}}}}}
      }
      \end{equation}
      where we use Proposition~\ref{prop:admiss} to remove the
      $\weakr$- and $\vdot$-steps.  This case is the reason why we
      need the presence of $\vax$ when we have $\bax$ in our logic.
    \end{enumerate}
  \item\label{c:sb:Delta} If $\rr$ is a $\sbdot$ or $\sfbdot$, such that the cut-formula
    $\lef A$ is inside $\Dcons\emptyset$, we have that
    $\Dcons\emptyset=\Delta_1\cons{\Delta_2\cons\emptyset,\Delta_3\cons\emptyset}$
    and we can reduce as follows
    \begin{equation}%\footnotesize
      \cutredcasec[8ex]{\sbdot}{
        \svlderivation{
          \vliin{\hskip3em\cutr}{}{\Gcons{\Sigma,\Delta_1\cons{\Delta_2\cons\emptyset,\Delta_3\cons\emptyset}}}{
            \vlhtr{\DD_1}{\ddGcons{\lhs\Sigma,\ddDelta_1\cons{\ddDelta_2\cons{\rig A},\lhs{(\Delta_3\cons\emptyset)}}}}}{
            \vlin{\sbdot}{}{\Gcons{\Sigma,\Delta_1\cons{\Delta_2\cons{\lef A},\Delta_3\cons\emptyset}}}{
              \vlhtr{\DD_2}{\Gcons{\Delta_1\cons{\Delta_2\cons{\lef A},\Delta_3\cons{\wbrn{\Sigma}n}}}}}}}
      }{
        \svlderivation{
          \vlin{\conr}{}{\Gcons{\Sigma,\Delta_1\cons{\Delta_2\cons\emptyset,\Delta_3\cons\emptyset}}}{
            \vlin{\sbdot}{}{\Gcons{\lhs\Sigma,\Sigma,\Delta_1\cons{\Delta_2\cons\emptyset,\Delta_3\cons\emptyset}}}{
              \vliin{\hskip5.5em\cutr}{}{\Gcons{\lhs\Sigma,\Delta_1\cons{\Delta_2\cons\emptyset,\Delta_3\cons{\wbrn{\Sigma}n}}}}{
                \vlin{\weakr}{}{\ddGcons{\lhs\Sigma,\ddDelta_1\cons{\ddDelta_2\cons{\rig A},\lhs{(\Delta_3\cons{\wbrn{\Sigma}n})}}}}{
                  \vlhtr{\DD_1}{\ddGcons{\lhs\Sigma,\ddDelta_1\cons{\ddDelta_2\cons{\rig A},\lhs{(\Delta_3\cons\emptyset)}}}}}}{
                \vlin{\weakr}{}{\Gcons{\lhs\Sigma,\Delta_1\cons{\Delta_2\cons{\lef A},\Delta_3\cons{\wbrn{\Sigma}n}}}}{
                  \vlhtr{\DD_2}{\Gcons{\Delta_1\cons{\Delta_2\cons{\lef A},\Delta_3\cons{\wbrn{\Sigma}n}}}}}}}}}
      }
    \end{equation}
    where the $\weakr$ on the left is not needed if
    $\lhs{(\Delta_3\cons\emptyset)}=\lhs{(\Delta_3\cons{\wbrn{\Sigma}n})}$.
    Note that this case can be seen as a special case of case~\ref{c:r} above.
  \item If $\rr$ is a $\sfdot$ or $\sbfdot$, such that the cut-formula $\lef A$ is inside
      $\Sigma$, then the situation is similar to case~\ref{c:sb:deep} above:
      \begin{equation}%\footnotesize
        \label{eq:sf-cut}\hskip3em
        \cutredcasec[8ex]{\sfdot}{
          \svlderivation{
            \vliin{\cutr}{}{\Gcons{\wbr{\Scons{\emptyset}},\Dcons{\emptyset}}}{
              \vlhtr{\DD_1}{\ddGcons{\wbr{\ddScons{\rig A}},\lhs{(\Dcons{\emptyset})}}}}{
              \vlin{\sfdot}{}{\Gcons{\wbr{\Scons{\lef A}},\Dcons{\emptyset}}}{
                \vlhtr{\DD_2}{\Gcons{\Dcons{\wbr{\Scons{\lef A}}}}}}}}
        }{
          \svlderivation{
            \vlin{\conr}{}{\Gcons{\wbr{\Scons{\emptyset}},\Dcons{\emptyset}}}{
              \vlin{\sfdot}{}{\Gcons{\wbr{\Scons{\emptyset}},\Dcons{\emptyset},\lhs{(\Dcons{\emptyset})}}}{
                \vliin{\hskip7em\cutr}{}{\Gcons{\Dcons{\wbr{\Scons{\emptyset}}},\lhs{(\Dcons{\emptyset})}}}{
                  \vlcin{(n+1)}{\weakr}{}{\ddGcons{\ddDcons{\wbr{\ddScons{\rig A}}},\lhs{(\Dcons{\emptyset})}}}{
                    \vlcin{n}{\vdot}{}{\ddGcons{\wbrn{\wbr{\ddScons{\rig A}}}{n},\lhs{(\Dcons{\emptyset})}}}{
                      \vlhtr{\DD_1}{\ddGcons{\wbr{\ddScons{\rig A}},\lhs{(\Dcons{\emptyset})}}}}}}{
                  \vlin{\weakr}{}{\Gcons{\Dcons{\wbr{\Scons{\lef A}}},\lhs{(\Dcons{\emptyset})}}}{
                    \vlhtr{\DD_2}{\Gcons{\Dcons{\wbr{\Scons{\lef A}}}}}}}}}}
      }
      \end{equation}
      where we use Proposition~\ref{prop:admiss} to remove the
      $\weakr$- and $\vdot$-steps.  This case is the reason why we
      need the presence of $\vax$ when we have $\fax$ in our logic.
  \item If $\rr$ is a $\sfdot$ or $\sbfdot$, such that the cut-formula $\lef A$ is inside
    $\Dcons\emptyset$, then the situation is similar to case~\ref{c:sb:Delta} above.
  \item If $\rr$ is a $\sbdot$, $\sfbdot$, $\sfdot$, or $\sbfdot$, such that the cut-formula
    $\lef A$ is inside $\Gamma\conhole$, then we proceed as in case~\ref{c:r} above.
  \item Finally, if $\rr$ is another $\cutr$, we can reduce as follows:
    \begin{equation*}
%    %\small
    %\tiny
      \hskip3em
      \cutredcasec[6.5ex]{\cutr}{
        \svlderivation{
          \vliin{\cutr}{}{\Gcons{\Scons{\emptyset},\Dcons{\emptyset}}}{
            \vlhtr{\DD_1}{\ddGcons{\ddScons{\rig A},\lhs{(\Dcons{\emptyset})}}}}{
            \vliin{\cutr}{}{\Gcons{\Scons{\lef A},\Dcons{\emptyset}}}{
              \vlhtr{\DD_2}{\ddGcons{\lhs{(\Scons{\lef A})},\ddDcons{\rig B}}}}{
              \vlhtr{\DD_3}{\Gcons{\Scons{\lef A},\Dcons{\lef B}}}}}}
      }{
        \svlderivation{
          \vliin{\hskip5em\cutr}{}{\Gcons{\Scons{\emptyset},\Dcons{\emptyset}}}{
            \vlhtr{\DD_2'}{\ddGcons{\lhs{(\Scons{\emptyset})},\ddDcons{\rig B}}}}{
            \vliin{\cutr}{}{\Gcons{\Scons{\emptyset},\Dcons{\lef B}}}{
              \vlin{\weakr}{}{\ddGcons{\ddScons{\rig A},\lhs{(\Dcons{\lef B})}}}{
                \vlhtr{\DD_1}{\ddGcons{\ddScons{\rig A},\lhs{(\Dcons{\emptyset})}}}}}{
              \vlhtr{\DD_3}{\Gcons{\Scons{\lef A},\Dcons{\lef B}}}}}}
      }
    \end{equation*}
    where $\DD_2'$ exists because our original derivation is
    left-free, and the $\weakr$-step is removed by
    Proposition~\ref{prop:admiss}. Note that it can happen that
    $\lhs{(\Scons{\emptyset})}=\lhs{(\Scons{\lef A})}$ and/or
    $\lhs{(\Dcons{\emptyset})}=\lhs{(\Dcons{\lef B})}$, depending on
    where the output formula occurs in
    $\Gcons{\Scons{\emptyset},\Dcons{\emptyset}}$. 
  \end{enumerate} 
  Above, we have only shown the cases for $\cutr$. The ones for
  $\wdiacutr$ and $\wboxcutr$ are similar, except the ones when $\rr$
  is one of $\sbdot$, $\sfbdot$, $\sfdot$, or $\sbfdot$. When such a
  rule is on the right above a $\wboxcutr$, we decompose that
  $\wboxcutr$ into a $\svlef$ and a $\cutr$ (using
  Fact~\ref{fact:4cut}) and then apply Lemma~\ref{lem:4-over-b} to
  permute the $\svlef$ over $\rr$, so that we can proceed as described
  above. When the $\cutr$ is permuted over $\rr$, we can compose it
  again with the $\svlef$-instance, so that we can proceed by
  induction hypothesis.  Observe that we make crucial use of the
  left-free property. Without it, the number of origins in $\flg{\DD}$
  would not be stable. Furthermore, note that the cut-cut permutation
  does not affect cuts that are above our current origin. Thus, all
  origins above remain anchored. This is the reason for starting with
  the topmost~one.  %%\qed
\end{proof}

\begin{lemma}\label{lem:one-step-anchored}
  Let $\tuple{\Xax,\Yax}$ be a safe pair of axioms. If there is a proof
  \begin{equation}%\small
    \label{eq:cutlem}
    \svlderivation{
      \vliin{\ast\cutr}{}{\Gcons{\emptyset}}{
        \vlhtr{\DD_1}{\Gamma_1\cons{\rig A}}}{
        \vlhtr{\DD_2}{\Gamma_2\cons{\lef A}}}}
  \end{equation}
  where $\DD_1$ and $\DD_2$ are both in
  $\NCKp+\sXlefrig+\sYdot$ and where $\ast\cutr$ is anchored, then there is a proof
  $\DD'$ of $\Gcons{\emptyset}$ in $\NCKp+\sXlefrig+\sYdot+\Cutr$ in
  which all cuts have a smaller rank.\looseness=-1
\end{lemma}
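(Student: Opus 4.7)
The plan is to proceed by induction on the height of $\DD_1$, with a case analysis on the last inference rule $\rr$ of $\DD_1$. Two scenarios arise: in the \emph{principal case}, $\rr$ introduces the cut formula $\rig A$ at the position where the cut acts, and in the \emph{non-principal case}, $\rr$ acts elsewhere or on a subformula only. In the principal case, both sides of the cut introduce the cut formula, and we perform the standard principal reduction, yielding at most two new cuts on the immediate subformulas of $A$, hence of smaller rank. In the non-principal case, we permute the cut above $\rr$; this produces one or more cuts with the \emph{same} cut formula but on strictly shorter left derivations, and the induction hypothesis then yields cuts of smaller rank for each. Crucially, since these permutations do not touch the right premise $\DD_2$, the resulting cuts remain anchored, so the induction hypothesis applies in the stated form.

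The principal propositional reductions are routine: $\rig\cand$ against $\lef\cand$ gives two cuts on the conjuncts; $\rig\cor$ against $\lef\cor$ gives a single cut on the selected disjunct; $\rig\implies$ against $\lef\implies$ gives two cuts on $B$ and $C$, using height-preserving admissibility of $\weakr$ (Proposition~\ref{prop:admiss}) to account for the output-pruning $\ddown{\,\cdot\,}$ appearing in the $\lef\implies$ rule; and the axiom cases ($A=a$ against $\idr$, or $A=\bot$ against $\lef\bot$) collapse via weakening applied to $\DD_1$. The modal cases are the heart of the argument. For $A=\wbox B$, the anchored right rule is $\lef\wbox$, or $\svblef$ when $\vax\in\Xax$; the principal left rule is $\rig\wbox$. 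We reduce to a cut on $B$ inside the bracket, using $\wboxcutr$ rather than plain $\cutr$ when $\vax$ is present, so as to absorb the potentially many copies of $\plef{\wbox B}$ generated by $\svblef$, with the underlying decomposition provided by Fact~\ref{fact:4cut}. The dual case $A=\wdia B$ is handled with $\wdiacutr$ against $\lef{\hat\wdia}$ on the right and $\rig\wdia$ (or one of $\rig\dint,\rig\tint,\rig\vint$) on the left.

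For the non-principal permutations, one-premise rules commute past the cut directly, while two-premise logical rules ($\rig\cand$, $\lef\cor$, $\lef\implies$) and $\conr$ force us to duplicate the cut and apply the induction hypothesis to each copy; height-preserving invertibility of the right-invertible rules (Proposition~\ref{prop:invert}) is used to realign contexts when the permuted rule acts inside $\ddown\Gamma$. The main obstacle lies in commuting past the modal structural rules in $\sYdot$ and the super rules in $\sXlefrig$, which restructure the bracket tree and can shift $\rig A$ relative to $\ddown\Gamma$. Here the super-rule formulations $\svblef$, $\svdrig$, $\sbdot$, $\sfdot$, $\sfbdot$, $\sbfdot$ (tailored so that $\vlef$, $\vrig$, $\bdot$, and $\fdot$ compose cleanly) together with the extended cuts $\wboxcutr$ and $\wdiacutr$ are essential: they absorb the tree restructuring and any height blow-up into single inference steps, ensuring that each newly introduced cut has strictly smaller rank than the original.
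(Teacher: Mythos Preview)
Your overall strategy is sound, but it differs structurally from the paper's proof, and there are a few genuine gaps.

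\textbf{Structural difference.} The paper does \emph{not} induct on the height of $\DD_1$ uniformly. It first splits on the shape of $A$. For $A=B\cand C$, $A=B\implies C$, and $A=\wbox B$ it exploits the height-preserving invertibility of $\rig\cand$, $\rig\implies$, and $\rig\wbox$ (Proposition~\ref{prop:invert}) to obtain the required subderivations of $\DD_1$ \emph{directly}, with no induction or commutative permutations needed on the left. The inner induction on $\height{\DD_1}$ is invoked only for $A=B\cor C$ and $A=\wdia B$, precisely the cases where the $\rig$-rule is not invertible. Your uniform induction should still go through, but it forces you to check all the left-commutation cases (including $\sbdot$, $\sfdot$, $\conr$, $\svlef$, etc.) for every connective, not just for $\cor$ and~$\wdia$. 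The paper's route avoids this.

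\textbf{Missing $\tlef$ case.} For $A=\wbox B$ you list $\lef\wbox$ and $\svblef$ as the possible anchoring rules on the right, but $\tlef$ is also black-destructing with principal formula $\plef{\wbox B}$ (Definition~\ref{dfn:black-destructing}), and it can anchor a $\wboxcutr$ whenever $\tax\in\Xax$. The paper treats this as a separate reduction: the bracket introduced by invertibility of $\rig\wbox$ must be removed, which requires a single $\tdot$-step when the depth of $\ddTcons{\enspace}$ is~$0$, and $\vdot$-steps otherwise (then eliminated via Proposition~\ref{prop:admiss}). Without this case your reduction for $\wbox$ is incomplete.

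\textbf{Anchoring after permutation.} Your claim that ``permutations do not touch the right premise $\DD_2$'' is not literally true: the left-side commutations for $\lef\cor$, $\lef\implies$, $\lef\cand$, etc.\ apply an $\Invr$-step to $\DD_2$ to realign the context. You need the additional observation that height-preserving invertibility (as proved by induction on height) preserves the bottommost rule of $\DD_2$ when that rule is not the one being inverted, so the anchoring black-destructing step on $\lef A$ survives. This is easy but should be stated.

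\textbf{Minor.} For safe pairs, $\dax$ is never in $\Xax$ (it lives in $\Yax$ as $\ddot$), so $\rig\dint$ is not a rule of the system; and the $\vax$-variants present are $\svrig$ and $\svdrig$, not $\rig\vint$. In the paper, the $\svrig$ case for $A=\wdia B$ is not a principal reduction at all: it is absorbed into the $\wdiacutr$ and the induction hypothesis is reapplied.
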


\begin{proof}%%[Proof of Lemma~\ref{lem:one-step-anchored}]
    We make a case analysis on the cut-formula $A$.
  \begin{enumerate}
  \item If $A=B\cand C$, we reduce the cut rank as follows:
    \begin{equation*}%\small
      \hskip3em
      \cutredcase{\cand}{
        \svlderivation{
          \vliin{\cutr}{}{\Gamma\conempty}{
            \vliin{\rig\cand}{}{\dGamcon{\prig{B\cand C}}}{
              \vlhtr{\DD'_1}{\dGamcon{\prig{B}}}}{
              \vlhtr{\DD''_1}{\dGamcon{\prig{C}}}}}{
            \vlin{\lef\cand}{}{\Gamcon{\plef{B\cand C}}}{
              \vlhtr{\DD'_2}{\Gamcon{\lef B,\lef C}}}}
      }}{
        \svlderivation{
          \vliin{\cutr}{}{\Gamma\conempty}{
            \vlhtr{\DD'_1}{\dGamcon{\prig{B}}}}{
            \vliin{\cutr}{}{\Gamcon{\lef B}}{
              \vlin{\weakr}{}{\dGamcon{\lef B,\prig{C}}}{
                \vlhtr{\DD''_1}{\dGamcon{\prig{C}}}}}{
              \vlhtr{\DD'_2}{\Gamcon{\lef B,\lef C}}}}
      }}
    \end{equation*}
    where $\DD_1'$ and $\DD_1''$ exist since the $\rig\cand$-rule is
    height-preserving invertible (Proposition~\ref{prop:invert}), and
    $\DD_2'$ exists since our cut is anchored. Finally, we remove the
    $\weakr$-step using Proposition~\ref{prop:admiss}.
  \item If $A=B\implies C$, we reduce the cut rank as follows:
    \begin{equation*}%\small
      \hskip3em
      \cutredcase{\implies}{
        \svlderivation{
          \vliin{\cutr}{}{\Gamma\conempty}{
            \vlin{\rig\implies}{}{\dGamcon{\prig{B\implies C}}}{
              \vlhtr{\DD_1'}{\dGamcon{\lef B,\rig C}}}}{
            \vliin{\lef\implies}{}{\Gamcon{\plef{B\implies C}}}{
              \vlhtr{\DD_2'}{\dGamcon{\rig B}}}{
              \vlhtr{\DD_2''}{\Gamcon{\lef C}}}}
      }}{
        \svlderivation{
          \vliin{\cutr}{}{\Gamma\conempty}{
            \vlhtr{\DD_2'}{\dGamcon{\rig B}}}{
            \vliin{\cutr}{}{\Gamma\cons{\lef B}}{
              \vlhtr{\DD_1'}{\dGamcon{\lef B,\rig C}}}{
              \vlhtr{\DD_2''}{\Gamcon{\lef C}}}}
      }}
    \end{equation*}
    where again, $\DD_1'$ exists by invertibility of the
    $\rig\implies$-rule (Proposition~\ref{prop:invert}), and $\DD_2'$
    and $\DD_2''$ exist since our cut is anchored.
  \item If $A=\wbox B$, and the rule on the right above the $\cutr$ is
  a $\lef\wbox$, then we reduce the cut rank as follows: 
    \begin{equation*}%\small
      \hskip3em
      \cutredcasec[8ex]{\wbox}{
        \svlderivation{
          \vliin{\wboxcutr}{}{\Gcons{\Tcons{\wbr{\Delta}}}}{
            \vlin{\rig\wbox}{}{\ddGcons{\rig{\wbox B},\ddown{(\Tcons{\wbr{\Delta}})}}}{
              \vlhtr{\DD_1'}{\ddGcons{\wbr{\rig B},\ddown{(\Tcons{\wbr{\Delta}})}}}}}{
            \vlin{\lef\wbox}{}{\Gcons{\Tcons{\lef{\wbox B},\wbr{\Delta}}}}{
              \vlhtr{\DD_2'}{\Gcons{\Tcons{\wbr{\lef B,\Delta}}}}}}}
      }{
        \svlderivation{
          \vlin{\conr}{}{\Gcons{\Tcons{\wbr{\Delta}}}}{
            \vliin{\hskip10em\cutr}{}{\Gcons{\Tcons{\wbr{\Delta}},\ddown{(\Tcons{\wbr{\Delta}})}}}{
              \vlcin{(n+1)}{\weakr}{}{\ddGcons{\ddTcons{\wbr{\rig B,\ddDelta}},\ddown{(\Tcons{\wbr{\Delta}})}}}{
                \vlin{\weakr}{}{\ddGcons{\wbrn{\wbr{\rig B,\ddDelta}}n,\ddown{(\Tcons{\wbr{\Delta}})}}}{
                  \vlcin{n}{\vdot}{}{\ddGcons{\wbrn{\wbr{\rig B}}n,\ddown{(\Tcons{\wbr{\Delta}})}}}{
                    \vlhtr{\DD_1'}{\ddGcons{\wbr{\rig B},\ddown{(\Tcons{\wbr{\Delta}})}}}}}}}{
              \vlin{\weakr}{}{\Gcons{\Tcons{\wbr{\lef B,\Delta}},\ddown{(\Tcons{\wbr{\Delta}})}}}{
                \vlhtr{\DD_2'}{\Gcons{\Tcons{\wbr{\lef B,\Delta}}}}}}}}
      }
    \end{equation*}
    Where $\DD_1'$ and $\DD_2'$ exist for the same reason as above,
    and we finally apply Proposition~\ref{prop:admiss} to remove the
    $\weakr$- and $\vdot$-steps, where $n$ is the depth of $\ddTcons{\enspace}$.
    \item If $A = \wbox B$ and the rule on the right above $\cutr$ is
     a $\svblef$, then we can reduce to the previous case as follows,
     \begin{equation*}%\small\hskip-3em
     \cutredcaseb[5ex]{\wbox}{
       \svlderivation{
         \vliin{\wboxcutr}{}{ \Gcons{ \Tcons{ \Delta_1 \cons{ \wbr{ \Delta_2 } } } }  }{ 
           \vlhy{\ddGcons{\rig{\wbox B},\ddown{(\Tcons{\Delta_1 \cons{ \wbr{ \Delta_2 } }})}}} }{
     	   \vlin{\svblef}{}{ \Gcons{ \Tcons{ \lef{ \wbox B } ,\Delta_1 \cons{ \wbr{ \Delta_2 } } } } }{ 
             \vlhy{ \Gcons{ \Tcons{ \Delta_1 \cons{ \wbr{\lef B, \Delta_2 } } } } } }
       }}
     }{\hskip13em
       \svlderivation{
     	 \vliin{\wboxcutr}{}{ \Gcons{ \Tcons{ \Delta_1 \cons{ \wbr{ \Delta_2 } } } }  }{ 
           \vlhy{\ddGcons{\rig{\wbox B},\ddown{(\Tcons{\Delta_1 \cons{ \wbr{ \Delta_2 } }})}}} }{
     	   \vlin{\lef\wbox}{}{ \Gcons{ \Tcons{ \Delta_1 \cons{ \lef{ \wbox B } , \wbr{ \Delta_2 } }  } } }{ 
             \vlhy{ \Gcons{ \Tcons{ \Delta_1 \cons{ \wbr{ \lef{ B }, \Delta_2 } } } } } }
       }}
     }
     \end{equation*}
  \item If $A=\wbox B$, and the rule on the right above the $\cutr$ is
    a $\tlef$, then we reduce the cut rank as follows:\vadjust{\vskip1ex}
    \begin{equation*}
      \hskip1em
    %\small
%      \cutredcase
	\cutredcasea
      {\wbox\tax}{
        \svlderivation{
          \vliin{\wboxcutr}{}{\Gcons{\Tcons{\emptyset}}}{
            \vlin{\rig\wbox}{}{\ddGcons{\rig{\wbox B},\ddown{(\Tcons{\emptyset})}}}{
              \vlhtr{\DD_1'}{\ddGcons{\wbr{\rig B},\ddown{(\Tcons{\emptyset})}}}}}{
            \vlin{\tlef}{}{\Gcons{\Tcons{\lef{\wbox B}}}}{
              \vlhtr{\DD_2'}{\Gcons{\Tcons{\lef B}}}}}}
      }{
        \svlderivation{
          \vlin{\conr}{}{\Gcons{\Tcons{\emptyset}}}{
            \vliin{\cutr}{}{\Gcons{\Tcons{\emptyset},\ddown{(\Tcons{\emptyset})}}}{
              \vlcin{(n+1)}{\weakr}{}{\ddGcons{\ddTcons{\rig B},\ddown{(\Tcons{\emptyset})}}}{
                \vlcin{(n-1)}{\vdot/\tdot}{}{\ddGcons{\wbrn{\rig B}n,\ddown{(\Tcons{\emptyset})}}}{
                  \vlhtr{\DD_1'}{\ddGcons{\wbr{\rig B},\ddown{(\Tcons{\emptyset})}}}}}}{
              \vlin{\weakr}{}{\Gcons{\Tcons{\lef B},\ddown{(\Tcons{\emptyset})}}}{
                \vlhtr{\DD_2'}{\Gcons{\Tcons{\lef B}}}}}}}
      }
    \end{equation*}
    where $n$ is the depth of $\ddTcons{\enspace}$, and at the top we
    either have one $\tdot$ step (if $n=0$) or $n-1$ steps of $\vdot$ (if $n\ge1$), which can be removed via
    Proposition~\ref{prop:admiss}. Note that $\vax\in\Xax$ if $n\ge1$.
  \item If $A=a$, then the cut is removed as follows:
    \begin{equation*}%\small
      \cutredcase{\idr}{
        \svlderivation{
          \vliin{\cutr}{}{\Gamcon{\rig a}}{
            \vlhtr{\DD_1}{\Gamcon{\rig a}}}{
            \vlin{\idr}{}{\Gamcon{\lef a,\rig a}}{
              \vlhy{}}}
      }\quad}{\quad
        \svlderivation{
          \vlhtr{\DD_1}{\Gamcon{\rig a}}
      }}
    \end{equation*}
    Note that here $\dGamcon{\rig a}=\Gamcon{\rig a}$. 
  \item If $A=\bot$, the situation is similar: 
    \begin{equation*}%\small
      \cutredcase{\bot}{
        \svlderivation{
          \vliin{\cutr}{}{\Gamcon{\rig\Pi}}{
            \vlhtr{\DD_1}{\Gamcon{\rig\Pi}}}{
            \vlin{\lef\bot}{}{\Gamcon{\lef\bot,\rig\Pi}}{
              \vlhy{}}}
      }\quad}{\quad
        \svlderivation{
          \vlhtr{\DD_1}{\Gamcon{\rig\Pi}}
      }}
    \end{equation*}
  \item If $A=B\cor C$, we proceed by induction on the height of
    $\DD_1$ and make a case analysis on its bottommost rule instance~$\rr$.
    \begin{enumerate}
    \item If $\rr$ is a $\rig\cor$, then it must decompose the cut
      formula, bottom-up, and we can reduce the cut rank as follows:
      \begin{equation*}%\small
        \hskip3em
        \cutredcase{\cor}{
          \svlderivation{
            \vliin{\cutr}{}{\Gamma\conempty}{
              \vlin{\rig\cor}{}{\ddGamcon{\prig{B\cor C}}}{
                \vlhtr{\DD_1'}{\ddGamcon{\prig{B}}}}}{
              \vliin{\lef\cor}{}{\Gamcon{\plef{B\cor C}}}{
                \vlhtr{\DD_2'}{\Gamcon{\lef B}}}{
                \vlhtr{\DD_2''}{\Gamcon{\lef C}}}}
        }\quad}{\quad
          \svlderivation{
            \vliin{\cutr}{}{\Gamma\conempty}{
              \vlhtr{\DD_1'}{\dGamcon{\prig{B}}}}{
              \vlhtr{\DD_2'}{\Gamcon{\lef B}}}
        }}          
      \end{equation*}
      The case when $\rig\cor$ chooses $\rig C$ is symmetric.
    \item If $\rr$ is a $\lef\implies$, then we have
      \vadjust{\vskip1ex}
      \begin{equation*}
        \hskip5em
        \cutredcasec{\lef\implies}{%\hskip-10em
          \svlderivation{
            \vliin{\cutr}{}{\Gamcon{\lefTcons{\plef{D\implies E}},\Dcons\emptyset}}{
              \vliin{\lef\implies}{}{\dGamcon{\lefTcons{\plef{D\implies E}},\ddDcons{\rig A}}}{
                \vlhtr{\DD_1'}{\dGamcon{\lefTcons{\rig D}}}}{
                \vlhtr{\DD_1''}{\dGamcon{\lefTcons{\lef E},\ddDcons{\rig A}}}}}{
              \vlhtr{\DD_2'}{\Gamcon{\lefTcons{\plef{D\implies E}},\Dcons{\lef A}}}}
        }}{
          \svlderivation{
            \vliin{\lef\implies}{}{\Gamcon{\lefTcons{\plef{D\implies E}},\Dcons\emptyset}}{
              \vlin{\weakr}{}{\dGamcon{\lefTcons{\rig D},\lhs{({\Dcons\emptyset})}}}{
                \vlhtr{\DD_1'}{\dGamcon{\lefTcons{\rig D}}}}}{
              \vliin{\cutr}{}{\Gamcon{\lefTcons{\lef E},\Dcons\emptyset}}{
                \vlhtr{\DD_1''}{\dGamcon{\lefTcons{\lef E},\ddDcons{\rig A}}}}{
                \vlin{\Invr}{}{\Gamcon{\lefTcons{\plef{E}},\Dcons{\lef A}}}{
                  \vlhtr{\DD_2'}{\Gamcon{\lefTcons{\plef{D\implies E}},\Dcons{\lef A}}}}}}
        }}
      \end{equation*}
      where the $\Invr$-step is removed by
      Proposition~\ref{prop:invert}, and we can proceed by induction
      hypothesis.
  \item 
    All other cases are standard commutative cases and are shown
    below. They are in fact symmetric to their corresponding cases in
    Lemma~\ref{lem:make-anchored}. This, in particular, concerns the
    case where $\rr$ is $\lef\cor$. Since our $\cutr$ is anchored, the
    output branch of the sequent is next to the $\lef A$ occurrence in the right
    premise of the $\cutr$. Therefore the $\lef\cor$ above the left
    premise of the $\cutr$ can now be permuted under the
    cut:\vadjust{\vskip1ex}
      \begin{equation*}%\small
        \hskip5em
        \cutredcasec{\lef\cor}{
          \svlderivation{
            \vliin{\cutr}{}{\Gamma\cons{\rig\Pi}}{
              \vliin{\lef\cor}{}{\Gamcon{\rig A}}{
                \vlhtr{\DD_1'}{{\Gamma_1}\cons{\rig A}}}{
                \vlhtr{\DD_1''}{{\Gamma_2}\cons{\rig A}}}}{
              \vlhtr{\DD_2}{\Gamma\cons{\lef A,\rig\Pi}}}
        }}{
          \svlderivation{
            \vliin{\lef\cor}{}{\Gamma\cons{\rig\Pi}}{
              \vliin{\hskip10em\cutr}{}{\Gamma_1\cons{\rig\Pi}}{
                \vlhtr{\DD_1'}{{\Gamma_1}\cons{\rig A}}}{
                \vlin{\Invr}{}{\Gamma_1\cons{\lef A,\rig\Pi}}{
                  \vlhtr{\DD_2}{\Gamma\cons{\lef A,\rig\Pi}}}}}{
              \vliin{\cutr}{}{\Gamma_2\cons{\rig\Pi}}{
                \vlhtr{\DD_1''}{{\Gamma_2}\cons{\rig A}}}{
                \vlin{\Invr}{}{\Gamma_2\cons{\lef A,\rig\Pi}}{
                  \vlhtr{\DD_2}{\Gamma\cons{\lef A,\rig\Pi}}}}}
        }}
      \end{equation*}
      The $\Invr$-steps are removed by
      Proposition~\ref{prop:invert}. The other invertible rules are
      handled similarly:
      \begin{equation*}%\small
        \hskip4em
        \cutredcase{\rr}{
          \svlderivation{
            \vliin{\cutr}{}{\Gamma\conempty}{
              \vlin{\rr}{}{\dGamcon{\rig A}}{
                \vlhtr{\DD_1'}{\down{\Gamma_1}\cons{\rig A}}}}{
              \vlhtr{\DD_2}{\Gamma\cons{\lef A}}}
        }\quad}{\quad
          \svlderivation{
            \vlin{\rr}{}{\Gamma\conempty}{
              \vliin{\cutr}{}{\Gamma_1\conempty}{
                \vlhtr{\DD_1'}{\down{\Gamma_1}\cons{\rig A}}}{
                \vlin{\Invr}{}{\Gamma_1\cons{\lef A}}{
                  \vlhtr{\DD_2}{\Gamma\cons{\lef A}}}}}
        }}
      \end{equation*}
      Finally, if the rule on the left above the cut is an axiom
      $\lef\bot$ (note that it cannot be $\idr$ because $\rig A$ is
      not an atom), then we reduce as follows:
      \begin{equation*}
        \cutredcase{\lef\bot}{
          \svlderivation{
            \vliin{\cutr}{}{\Gamma\conempty}{
              \vlin{\lef\bot}{}{\dGamcon{\rig A}}{
                \vlhy{}}}{
              \vlhtr{\DD_2}{\Gamma\cons{\lef A}}}
          }\quad}{\quad
          \svlderivation{
            \vlin{\lef\bot}{}{\Gamma\conempty}{
              \vlhy{}}
          }}
      \end{equation*}
    \end{enumerate}
  \item If $A=\wdia B$ we proceed as in the previous case. 
      The only
    difference occurs when the bottommost rule $\rr$ in $\DD_1$ works on
    $\rig A$. There are three subcases:
    \begin{enumerate}
    \item If $\rr$ is $\svrig$ we have
    \vspace{1ex}
      \begin{equation*}%\small
        \hskip5em
%        \cutredcase
	\cutredcasec
        {\wdia\vax}{
          \svlderivation{
            \vliin{\wdiacutr}{}{\Gcons{\lef\Theta\cons{\lef\Delta\cons{\emptyset}}}}{
              \vlin{\svrig}{}{
                \dGamcon{\lef\Theta\cons{\prig{\wdia B},\lef\Delta\cons{\emptyset}}}}{
                \vlhtr{\DD_1'}{
                  \dGamcon{\lef\Theta\cons{\lef\Delta\cons{\prig{\wdia B}}}}}}}{
              \vlhtr{\DD_2}{
                \Gamma\cons{\plef{\wdia B},\lef\Theta\cons{\lef\Delta\cons{\emptyset}}}}}
        }}{
          \svlderivation{
            \vliin{\hskip15em\wdiacutr}{}{\Gcons{\lef\Theta\cons{\lef\Delta\cons{\emptyset}}}}{
              \vlhtr{\DD_1'}{
                  \dGamcon{\lef\Theta\cons{\lef\Delta\cons{\prig{\wdia B}}}}}}{
              \vlhtr{\DD_2}{
                \Gamma\cons{\plef{\wdia B},\lef\Theta\cons{\lef\Delta\cons{\emptyset}}}}}
        }}
      \end{equation*}
      and can proceed by the induction hypothesis.
    \item If $\rr$ is $\rig\wdia$ we can reduce the cut rank as follows:
      \begin{equation*}
        \hskip5em 
        \cutredcasec{\wdia}{
          \svlderivation{
            \vliin{\wdiacutr}{}{\Gcons{\lef\Theta\cons{\wbr{\lef\Delta}}}}{
              \vlin{\rig\wdia}{}{
                \dGamcon{\lef\Theta\cons{\prig{\wdia B},\wbr{\lef\Delta}}}}{
                \vlhtr{\DD_1'}{
                  \dGamcon{\lef\Theta\cons{\wbr{\rig B,\lef\Delta}}}}}}{
              \vlin{\lef\wdia}{}{
                \Gamma\cons{\plef{\wdia B},\lef\Theta\cons{\wbr{\lef\Delta}}}}{
                \vlhtr{\DD_2'}{
                  \Gamma\cons{\wbr{\lef B},\lef\Theta\cons{\wbr{\lef\Delta}}}}}}
        }}{
          \svlderivation{
            \vlin{\conr}{}{\Gcons{\lef\Theta\cons{\wbr{\lef\Delta}}}}{
              \vliin{\hskip5em\cutr}{}{
                \Gcons{\lef\Theta\cons{\wbr{\lef\Delta}},\lef\Theta\cons{\wbr{\lef\Delta}}}}{
                \vlin{\weakr}{}{
                  \dGamcon{\lef\Theta\cons{\wbr{\rig B,\lef\Delta}},\lef\Theta\cons{\wbr{\lef\Delta}}}}{
                  \vlhtr{\DD_1'}{
                    \dGamcon{\lef\Theta\cons{\wbr{\rig B,\lef\Delta}}}}}}{
                \vlcin{(n+1)}{\weakr}{}{
                  \Gcons{\lef\Theta\cons{\wbr{\lef B,\lef\Delta}},\lef\Theta\cons{\wbr{\lef\Delta}}}}{
                  \vlin{\weakr}{}{
                    \Gamma\cons{\wbrn{\wbr{\lef B,\lef\Delta}}n,\lef\Theta\cons{\wbr{\lef\Delta}}}}{
                    \vlcin{n}{\vdot}{}{
                      \Gamma\cons{\wbrn{\wbr{\lef B}}n,\lef\Theta\cons{\wbr{\lef\Delta}}}}{
                      \vlhtr{\DD_2'}{
                        \Gamma\cons{\wbr{\lef B},\lef\Theta\cons{\wbr{\lef\Delta}}}}}}}}}            
        }}
      \end{equation*}
      where $n$ is the depth of the context $\lef\Theta\conhole$, and
      $\DD_2'$ exists because the instance of $\wdiacutr$ is
      anchored. We use Proposition~\ref{prop:admiss} to remove the
      $\weakr$- and $\vdot$-steps.  We proceed similarly for
      $\svdrig$.
  \item If $\rr$ is $\trig$ the situation is similar and we can reduce the cut rank as follows:
      \begin{equation*}
      \hskip5em 
        \cutredcasec{\wdia\tax}{
          \svlderivation{
            \vliin{\wdiacutr}{}{\Gcons{\lef\Theta\cons{\emptyset}}}{
              \vlin{\trig}{}{
                \dGamcon{\lef\Theta\cons{\prig{\wdia B}}}}{
                \vlhtr{\DD_1'}{
                  \dGamcon{\lef\Theta\cons{\rig B}}}}}{
              \vlin{\lef\wdia}{}{
                \Gamma\cons{\plef{\wdia B},\lef\Theta\cons{\emptyset}}}{
                \vlhtr{\DD_2'}{
                  \Gamma\cons{\wbr{\lef B},\lef\Theta\cons{\emptyset}}}}}
        }}{
          \svlderivation{
            \vlin{\conr}{}{\Gcons{\lef\Theta\cons{\emptyset}}}{
              \vliin{\hskip10em\cutr}{}{
                \Gcons{\lef\Theta\cons{\emptyset},\lef\Theta\cons{\emptyset}}}{
                \vlin{\weakr}{}{
                  \dGamcon{\lef\Theta\cons{\rig B},\lef\Theta\cons{\emptyset}}}{
                  \vlhtr{\DD_1'}{
                    \dGamcon{\lef\Theta\cons{\rig B}}}}}{
                \vlcin{(n+1)}{\weakr}{}{
                  \Gcons{\lef\Theta\cons{\lef B},\lef\Theta\cons{\emptyset}}}{
                  \vlcin{(n-1)}{\vdot/\tdot}{}{
                    \Gamma\cons{\wbrn{\lef B}n,\lef\Theta\cons{\emptyset}}}{
                    \vlhtr{\DD_2'}{
                      \Gamma\cons{\wbr{\lef B},\lef\Theta\cons{\emptyset}}}}}}}            
        }}        
      \end{equation*}
      Again, $n$ is the depth of the context $\lef\Theta\conhole$. If
      $n=0$, there are no brackets, and we use $\tdot$. If $n\ge1$,
      there is at least one bracket nesting (and therefore $\vax\in\Xax$), and we use $n-1$
      instances of $\vdot$, which then are removed by applying
      Proposition~\ref{prop:admiss}. \qedhere
    \end{enumerate}
  \end{enumerate}
\end{proof}

We can now put things together to complete the proof of cut-elimination.

\begin{proof}[Proof of Theorem~\ref{thm:cut-elim}]
      A proof in
    $\NCKp+\Xlefrig+\Ydot+\cutr$ is trivially also a proof in
    $\NCKp+\sXlefrig+\sYdot+\Cutr$.  We proceed by induction on the
    cut-value $\cv{\DD}$ using the well-ordering $\ll$ (defined after
    Definition~\ref{dfn:cut-anchored+value}). In the base case
    $\cv{\DD}$ is empty, and we are done. Otherwise, we pick a topmost
    cut in $\DD$. If this $\ast\cutr$-instance is anchored, then we
    can by Lemma~\ref{lem:one-step-anchored} replace this cut by cuts
    of smaller rank, and thus reduce the overall cut-value of the
    derivation. If our $\ast\cutr$-instance is not anchored, we
    observe that the subderivation rooted at that $\ast\cutr$-instance
    is left-free (because we chose a topmost cut), and therefore we
    can apply Lemma~\ref{lem:make-anchored} to replace that
    subderivation with one in which all cuts are anchored and have the
    same rank. Thus, the overall cut-value of the derivation has
    reduced as well, and we can proceed by the induction
    hypothesis. Finally, we apply Proposition~\ref{prop:super} to eliminate super steps and obtain
    a proof of the same conclusion in $\NCKp+\Xlefrig+\Ydot$.  %%\qed
\end{proof}

From here it is simple to see why Theorem~\ref{thm:complete-cutfree} holds. 

\begin{proof}[Proof of Theorem~\ref{thm:complete-cutfree}]
We have that $\NCKp+\Xlefrig+\Ydot+\cutr$ is complete with respect to
$\HCK + \Xax + \Yax$ by Theorem~\ref{thm:complete}. If
$\dax\notin\Xax$, we can apply Theorem~\ref{thm:cut-elim} and immediately obtain the
completeness of $\NCKp+\Xlefrig+\Ydot$. If $\dax\in\Xax$,
we use Theorem~\ref{thm:cut-elim} to obtain completeness for
$\NCKp+\Xlefrig\setminus\set{\dlef,\drig}+\Ydot+\ddot$, and then use
Proposition~\ref{PropDaxDerivable} to obtain completeness of
$\NCKp+\Xlefrig+\Ydot$.
\end{proof}

%%%%%%%%%%%%%%%%%%%%%%%%%%%%%%%%%%%%%%%%%%%%%%%%%%%%%%%%%%%%%%%%%%%%%%%%
%%%%%%%%%%%%%%%%%%%%%%%%%%%%%%%%%%%%%%%%%%%%%%%%%%%%%%%%%%%%%%%%%%%%%%%%
%%%%%%%%%%%%%%%%%%%%%%%%%%%%%%%%%%%%%%%%%%%%%%%%%%%%%%%%%%%%%%%%%%%%%%%%

\section{Conclusions}       

To the best of our knowledge, our paper is the first attempt to
provide some unified proof-theoretic framework for the constructive modal cube.
Although this work does not show cut-elimination
for every logic in the cube, we conjecture that the systems presented do, in fact, admit cut. More
precisely:

\begin{conjecture}\label{con:complete-cutfree}
  Let $\Xax\subseteq\set{\dax,\tax,\vax}$ and
  $\Yax\subseteq\set{\dax,\bax,\fax}$, such that if
  $\tax\in\Xax$ and $\fax\in\Yax$ then $\bax\in\Yax$.
  Then every theorem
  of $\HCK+\Xax+\Yax$ is provable in
  $\NCKp+\Xlefrig+\Ydot$.
\end{conjecture}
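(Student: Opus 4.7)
The plan is to prove Conjecture~\ref{con:complete-cutfree} via a semantic completeness argument rather than by extending the syntactic cut-elimination of Section~\ref{sec:cutelim}. The motivation is that the proof of Theorem~\ref{thm:cut-elim} relies essentially on the admissibility of $\vdot$; specifically, in the $\sbdot$-$\cutr$ permutation of Lemma~\ref{lem:make-anchored} (and the symmetric $\sfdot$-$\cutr$ case) one introduces, after contraction, a bracket structure that can only be closed up again through $2n$ applications of $\vdot$, which is only admissible when $\vax \in \Xax$. Since the remaining logics $\CKB$, $\CKfive$, $\CKBfive$, $\CDfive$, $\CDB$, $\CTB$ lie outside this case, a different route seems unavoidable.

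First I would extend the constructive Kripke-style semantics sketched in Section~1 (following Mendler~\cite{Mendler11}) so as to cover all axioms in \eqref{eq:ax}: impose symmetry-like conditions on the accessibility relation $R$ for $\bax$ and Euclidean-like conditions for $\fax$, each carefully phrased to interact well with the set of fallible worlds. Soundness of $\HCK+\Xax+\Yax$ with respect to these semantics can then be verified axiom-by-axiom, and Hilbert-completeness follows by a standard canonical-model construction.

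Next I would establish cut-free completeness of $\NCKp+\Xlefrig+\Ydot$ directly. Given a formula $A$ not cut-free provable, I would run a bottom-up proof search and use a failed saturated branch to extract a counter-model: the worlds are parametrised by the nodes of the saturated nested sequent (bracket nesting giving $R$ and the preorder $\le$), the context restrictions on $\lef\bot$ and $\lef\cor$ encode the constructive reading of $\bot$ and $\wdia$, and the structural rules in $\Ydot$ ensure the appropriate frame conditions hold. Combined with soundness of $\NCKp+\Xlefrig+\Ydot$ from Theorem~\ref{thm:sound-con}, this forces $A$ to be cut-free provable whenever it is Hilbert-provable.

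The hard part will be the saturation step. Without the admissibility of $\vdot$, bracket nestings in the saturated sequent cannot be contracted, so the Kripke frame extracted must faithfully reflect the exact tree of brackets, and the $\Ydot$-rules must be applied during saturation in a way that produces the required frame condition without implicitly demanding $\vax$. Particular care is needed for $\bax$: as we observed after Theorem~\ref{thm:sound-con}, $\bax$ alone already entails $\kax3$ and $\kax5$, so the saturated branch in $\CKB$ must realise these interactions without leaking extra provability. A secondary obstacle is that, as noted at the start of Section~\ref{sec:soundness}, there is no off-the-shelf Kripke semantics for some of the relevant logics (e.g.~$\CKB$, $\CKfive$), so an appropriate semantics must be developed and justified against the Hilbert system before the completeness argument can even begin.
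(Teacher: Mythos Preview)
The statement you are addressing is a \emph{conjecture}, not a theorem: the paper does not prove it. What the paper offers after Conjecture~\ref{con:complete-cutfree} is only a heuristic for why it should hold, namely that the $\vdot$-steps introduced in the $\sbdot$/$\sfdot$-$\cutr$ permutations \eqref{eq:sb-cut} and \eqref{eq:sf-cut} only turn into genuine $\vlef$/$\vrig$-steps when permuted over $\lef\wbox$ or $\rig\wdia$, and the authors suggest that these $\lef\wbox$/$\rig\wdia$ instances could perhaps be pushed below the whole cut block first. They explicitly say they have not made this work and leave it open. So there is no proof in the paper to compare against; the honest comparison is between your proposed \emph{route} and the paper's suggested one.

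Your route is genuinely different: a semantic completeness argument via saturated proof search and countermodel extraction, rather than a refinement of the syntactic cut-elimination. This is a reasonable strategy in principle, and it sidesteps the specific obstruction the paper identifies. However, your proposal is not a proof but a research programme with two acknowledged gaps that are each substantial open problems in their own right. First, as you note and as the paper states at the start of Section~\ref{sec:soundness}, there is no established Kripke semantics for several of the target logics ($\CKB$, $\CKfive$, $\CKBfive$, $\CDfive$, $\CDB$, $\CTB$); you would have to invent one and prove Hilbert soundness and completeness for it before your argument even begins. Second, the saturation/countermodel step is where all the real work lies, and you have only sketched what would need to be true (brackets must encode $R$ faithfully, $\Ydot$-saturation must yield the frame condition without implicitly using $\vax$, the $\bax$-induced $\kax3$/$\kax5$ must be realised correctly) without any argument that these can actually be achieved. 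In short: the paper has no proof, and neither does your proposal; what you have written is a plausible alternative line of attack, correctly identifying the obstacles, but it does not resolve the conjecture.
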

  This would give us a
cut-free system for every logic in the cube.  The reason why we think
Conjecture~\ref{con:complete-cutfree} is true is the observation that
the only place where $\vax$ steps appear in the presence of $\bdot$ or
$\fdot$ is the permutation of $\sbdot$ steps or $\sfdot$ steps under a
cut, as in \eqref{eq:sb-cut}. Instances of $\vdot$ are then
introduced, and in the admissibility proof for $\vdot$ instances of
$\vlef$ or $\vrig$ are only introduced when $\vdot$ steps are
permuted over instances of $\lef\wbox$ or $\rig\wdia$. However, looking
back at \eqref{eq:sb-cut} and~\eqref{eq:sf-cut}, one can see that it
seems possible to permute these instances of $\lef\wbox$ and
$\rig\wdia$ under the whole derivation block, including the
cut. We have not yet managed to incorporate this observation
into the formal cut-elimination argument, and leave this issue for
further research. 

An alternative approach would be to make use of the
observation that $\bax$ implies $\kax3$ and $\kax5$, by generalizing
the $\lef\bot$- and $\lef\cor$-rules to their intuitionistic versions, as used in~\cite{str:fossacs13,marin:str:aiml14}. This would simplify
the cut-elimination argument for logics containing $\bax$ since
we could reuse a lot of the material already appearing in~\cite{str:fossacs13}.

Another path of further research is to give modal logics a
similar uniform treatment as the substructural logics
in~\cite{CGT:lics08,CST:csl09}. For this, it is necessary to first
look at concrete examples of structural rules corresponding to axioms,
as we have shown in Figure~\ref{fig:CXstr}. Since these rules almost
coincide in the classical, the intuitionistic, and the constructive setting,
we hope to eventually discover a general pattern, yielding uniform cut-elimination arguments for a variety of modal logics. %\looseness=-1

Finally, we have observed an apparent dichotomy between the $\bax$ axiom and the `constructiveness' of constructive modal logic, since the former implies $\kax3$ and $\kax5$, for which we do not know of any approach providing some sort of Curry-Howard correspondence. We therefore believe it would be pertinent to develop further outlooks on such logics. Perhaps it would be possible to find weaker formulations of $\bax$ which are equivalent classically, but not constructively, and which do not entail $\kax3$ and $\kax5$. Such an endeavor might yield new insights for extending the scope of the Curry-Howard correspondence to modal logics.

%%%%%%%%%%%%%%%%%%%%%%%%%%%%%%%%%%%%%%%%%%%%%%%%%%%%%%%%%%%%%%%%%%%%%%%%
%%%%%%%%%%%%%%%%%%%%%%%%%%%%%%%%%%%%%%%%%%%%%%%%%%%%%%%%%%%%%%%%%%%%%%%%
%%%%%%%%%%%%%%%%%%%%%%%%%%%%%%%%%%%%%%%%%%%%%%%%%%%%%%%%%%%%%%%%%%%%%%%%
%%%%%%%%%%%%%%%%%%%%%%%%%%%%%%%%%%%%%%%%%%%%%%%%%%%%%%%%%%%%%%%%%%%%%%%%

\bibliography{constructive-refs}
\bibliographystyle{alpha} 

%%%%%%%%%%%%%%%%%%%%%%%%%%%%%%%%%%%%%%%%%%%%%%%%%%%%%%%%%%%%%%%%%%%%%%%%
%%%%%%%%%%%%%%%%%%%%%%%%%%%%%%%%%%%%%%%%%%%%%%%%%%%%%%%%%%%%%%%%%%%%%%%%
%%%%%%%%%%%%%%%%%%%%%%%%%%%%%%%%%%%%%%%%%%%%%%%%%%%%%%%%%%%%%%%%%%%%%%%%
%%%%%%%%%%%%%%%%%%%%%%%%%%%%%%%%%%%%%%%%%%%%%%%%%%%%%%%%%%%%%%%%%%%%%%%%

\end{document}